\theoremstyle{plain}
\newtheorem{thm}{Theorem}
\newtheorem{cor}[thm]{Corollary}
\newtheorem{lem}[thm]{Lemma}
\newtheorem{prop}[thm]{Proposition}
\theoremstyle{definition}
\newtheorem{defi}{Definition}
\theoremstyle{remark}
\newtheorem*{rem}{\bf Remark}
\definecolor{labelkey}{rgb}{0,.56,.7}
\DeclareMathAlphabet{\pazocal}{OMS}{zplm}{m}{n}   
\def\bbZ{\mathbb{Z}}
\def\bbR{\mathbb{R}}
\DeclareSymbolFont{Eulerscripteusm10}{U}{eus}{m}{n}
\DeclareMathSymbol{\euI}{\mathord}{Eulerscripteusm10}{"4A}
\newcommand*{\at}{@}
\newcommand{\papa}[2]{\frac{\partial#1}{\partial#2}}
\newcommand{\dif}[1]{\mathrm{\,d} #1}             
\newcommand{\diff}[1]{\mathrm{\,d}^2 #1}
\newcommand{\nn}{\nonumber}
\def\dg{\dagger}
\def\df{\overset{\mathrm{df}}{=}}
\newcommand{\ket}[1]{\mathop{|#1\rangle}\nolimits}
\newcommand{\brk}[2]{\langle #1 | #2 \rangle}
\newcommand{\kbr}[2]{| #1\rangle\!\langle #2 |}
\def\ran{\rangle}
\def\lan{\langle}
\DeclareMathOperator*{\range}{ran}
\newcommand{\Tr}[1]{\mathop{{\mathrm{Tr}}_{#1}}}
\newcommand{\id}{\mathop{{\mathrm{id}}}\nolimits}
\def\a{\alpha}
\def\b{\beta}
\def\g{\gamma}
\def\d{\delta}
\def\D{\Delta}
\def\ve{\varepsilon}
\def\vr{\varrho}
\def\vp{\varphi}
\def\ka{\kappa}
\def\s{\sigma}
\def\vt{\vartheta}
\sodef\so{}{.065em}{.4em plus1em}{2em plus.1em minus.1em}
\def\@setemails{%
\ifnum\theg@author > 1
\mbox{\hspace{-4mm}{\itshape E-mail}:\space}{\ttfamily\emails}.
\else
\mbox{\hspace{-4mm}{\itshape E-mail}:\space}{\ttfamily\emails}.
\fi%
}
\def\@setfoot@addresses{
\def\author##1{\hspace{-4mm}\setlength{\parindent}{0pt}}%
\def\\{\unskip, \ignorespaces}%
\newif\if@firstaddr
\@firstaddrtrue
\def\address##1##2{%
\if@firstaddr\@firstaddrfalse\else\par\fi
\@ifnotempty{##1}{(\ignorespaces##1\unskip) }%
{\scshape\ignorespaces##2}%
}%
\def\email##1##2{}%
\def\curraddr##1##2{}%
\def\urladdr##1##2{}%
\addresses
}
\def\@setkeywords{%
  {\itshape\hspace{-4.8mm} \keywordsname.}\enspace \@keywords\@addpunct.}
\begin{document}

\title{A security proof of continuous-variable QKD using three coherent states}

\author{Kamil Br\'adler}
\email{kbradler\at uottawa.ca}
\address[Kamil Br\'adler]{Department of Mathematics and Statistics, University of Ottawa, Ottawa, Canada}

\author{Christian Weedbrook}
\address[Christian Weedbrook, Kamil Br\'adler]{CipherQ, 10 Dundas St E, Toronto, M5B 2G9, Canada}


\begin{abstract}
We introduce a new ternary quantum key distribution (QKD) protocol and asymptotic security proof based on three coherent states and homodyne detection. Previous work had considered the binary case of two coherent states and here we nontrivially extend this to three. Our motivation is to leverage the practical benefits of both discrete and continuous (Gaussian) encoding schemes creating a best-of-both-worlds approach; namely, the postprocessing of discrete encodings and the hardware benefits of continuous ones. We present a thorough and detailed security proof in the limit of infinite signal states which allows us to lower bound the secret key rate. We calculate this is in the context of collective eavesdropping attacks and reverse reconciliation postprocessing. Finally, we compare the ternary coherent state protocol to other well-known QKD schemes (and fundamental repeaterless limits) in terms of secret key rates and loss.
\end{abstract}

\maketitle

\thispagestyle{empty}

\section{Introduction}\label{sec:intro}

Quantum key distribution (QKD)~\cite{Scarani2009,Lo2014}, in principle, provides the most secure form of quantum safe cybersecurity, i.e., protection against a quantum computing attack. As opposed to post quantum cryptography~\cite{Chen2016}, which is based on computationally secure mathematics, QKD exploits the laws of quantum physics to achieve, at least in theory, unbreakable codes. Since QKD was first suggested in 1984, many advances have taken place; from theoretical to proof-of-principle experiments to field tests and even the forming of companies.

Even though this seems like the end of the story there are still many advances being made in all of these areas. To this point, in this paper, we look at creating a best-of-both worlds approach to QKD by combining the beneficial practical aspects of the two main implementations of QKD: those using discrete variables (DVs)~\cite{Scarani2009} and those using continuous variables (CVs)~\cite{Weedbrook2012,diamanti2015distributing}. To be more specific, we would like to use the simpler encoding and decoding methods from DV QKD but at the same time leverage the simpler and more affordable room temperature hardware components of CV QKD.

Recently, the ultimate (optimal) limit for a lossy bosonic channel was discovered and is given by the PLOB bound~\cite{pirandola2015fundamental}. An interpretation of this result is that no QKD protocol can go beyond this bound without a quantum repeater. In terms of key rate as a function of channel loss (cf. for instance with Fig. 6 of \cite{pirandola2015fundamental}) this corresponds to a CV QKD Gaussian protocol with reverse reconciliation using a quantum memory at Alice's side and heterodyne at Bob's side~\cite{Garcia2009}. In terms of implementations, below this optimal bound lies the single photon BB84 protocol~\cite{Lo2005}. Both of these two protocols are in terms of the ideal case, i.e., perfect sources and perfect detectors. However, when one considers the realistic version of these two (in the case of DV QKD this corresponds to the decoy state scheme~\cite{Lo2005a,Wang2005}), both become remarkably similar in terms of key rates as a function of loss; except for a slight advantage in key rates for CVs in the low-loss regime and a slight distance advantage in DVs for the high-loss regime. In this realistic scenario, both the DV and the CV QKD schemes sit below the PLOB bound. Ideally we would like to either: (1) find a (realistic) protocol above these two protocols or (2) have a protocol similar to these protocols in terms of key rates but one that leverages the practical benefits of both schemes.

With that in mind, we consider a protocol first introduced in 2009 by Zhao et al.~\cite{zhao2009asymptotic} that uses binary-phase shift-keying (BPSK) of coherent states, $\ket{\a}$ and $\ket{-\a}$, along with homodyne detection. Unfortunately, as one can see, the performance of this protocol is below that of the realistic BB84 with decoy states and the realistic Gaussian modulation CV scheme. In this paper, we consider a ternary-phase shift-keying (TPSK) of coherent states, $\ket{\a}_i$ where $i = 0,1,2$, with homodyne detection. Here each of the three coherent states are phase shifted in phase space by $120^{\circ}$, cf. Fig.~\ref{fig:Phase_space}. One may ask the question, what is the motivation of going from two coherent states to three coherent states? Or perhaps why not go to more coherent states straightaway? In terms of the second question, this is easily answered by considering the Zhao et al. paper~\cite{zhao2009asymptotic} and our results here. The extension to three states is challenging enough, while the extension to more than three states is a very hard problem if one wants a strong security proof like the one we have given here. In terms of the first question, there are two possible ways to answer this. One way is that we know that at some stage as one increases the number of coherent states there must be a point where it becomes a close approximation to the full Gaussian distribution. So there may be a point where one may not need the entire (continuum) Gaussian distribution. Another way is to consider the affect that decoy state BB84 QKD has on ideal single photon BB84 and draw inspiration from there. Specifically, by increasing the number of pulses from the ideal case of one to say three pulses gives a boost to both the key rate and distance~\cite{Lo2005a,Wang2005}. So perhaps we can consider increasing the number of discrete coherent states from two to three (and potentially higher) as a decoy-state-like extension of the BPSK-modulated CV QKD protocol.
\begin{figure}[h]
   \resizebox{11cm}{!}{\includegraphics{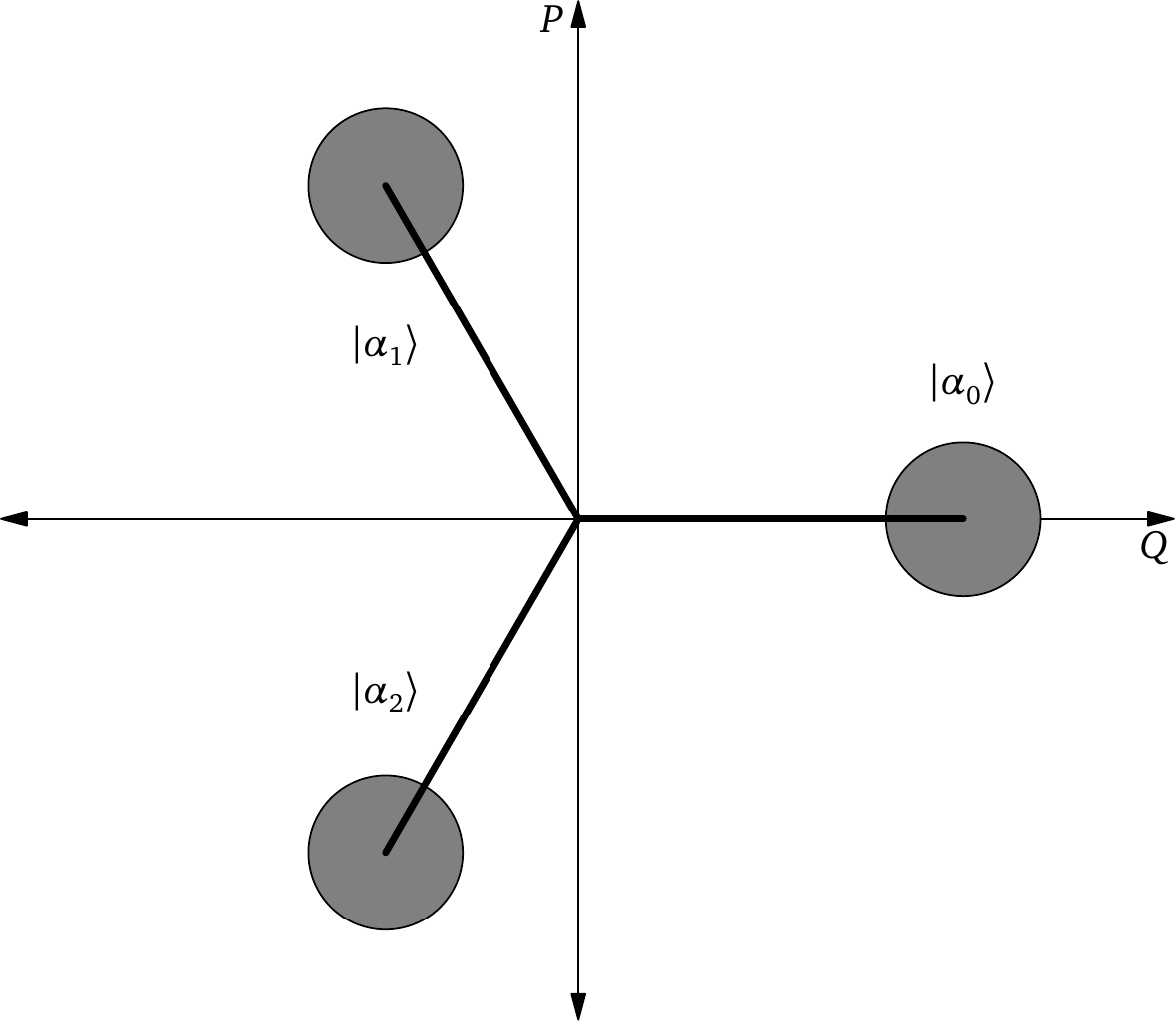}}
    \caption{Phase space configurations of the ternary coherent state QKD protocol. Note that each subsequent coherent state is $120^{\circ}$ from the other one. Alice's role is to continually and randomly choose from these three options and then send them to Bob who performs homodyne detection on the incoming states by randomly alternating between the $Q$ and $P$ quadratures. As is standard, the quantum channel is assumed to be monitored by the eavesdropper, Eve.}
    \label{fig:Phase_space}
\end{figure}

In this paper, we introduce and rigorously prove the asymptotic security of a new ternary QKD protocol based on three coherent states and homodyne detection.  For completeness, we mention that other discrete encodings for CVs have also been considered~\cite{Leverrier2009,Leverrier2010,Leverrier2011,Sych2010}. In the papers by Leverrier and Grangier~\cite{Leverrier2009,Leverrier2010,Leverrier2011}, they considered two different coherent state encodings (i.e., two and four) but in order to analyze the security they `padded out' the states with decoy-like states that effectively resembled a Gaussian distribution from Eve's point of view. This is instigated in order to leverage previous Gaussian encoding security proofs. Finally, in~\cite{Sych2010} a multi letter phase-shift keying scheme was introduced, where an $N$ number of coherent states can be used. However, the security proof only considered a lossy bosonic channel (i.e., no excess noise). In contrast, we consider a bosonic channel with arbitrary noise. Our results here allow for the significant reduction, compared to Gaussian modulation protocols, in classical post-processing, random-number generation, and classical-communication overheads. Furthermore, by keeping the benefits of CV hardware, our approach has the practical benefits of doing away with single-photon detectors that characterize DV QKD systems. Such detectors are only able to reach their promise of low-noise and high-efficiency only with the addition of cumbersome cryogenics.

\subsection*{Outline}

This paper is structured as follows. We begin by giving more background on the relationship between discrete and Gaussian encodings. This is followed by a description of the steps of our ternary coherent state protocol. Our main result is presented next and consists of a simulation of a TPSK modulated lossy bosonic channel. We end with our conclusion.

\subsection*{Notation}

In what follows, $f'$ denotes ${\mathrm{d}f\over\mathrm{d}z}$ and similarly for higher derivatives. Sometimes we explicitly mention a function's variable (typically $f=f(z)$). The symbol $\df$ stands for `defined'. The von Neumann entropy of a density matrix $\varpi_A$ is $H(A)_\varpi\equiv H(\varpi_A)\df-\Tr{}[\varpi_A\log{\varpi_A}]$~\cite{Nielsen2000,wilde2013quantum} and it becomes Shannon entropy for classical probability distributions (denoted by $X,Y$ in this paper). We will intensively study the properties of $H(X)$ where $X=\vec{x}=(x_1,x_2,1-x_1-x_2)$ and so a special name will be reserved for it -- the ternary Shannon entropy:
\begin{equation}\label{eq:ternaryEnt}
  h_3(\vec{x})\df-x_1\log{x_1}-x_2\log{x_2}-(1-x_1-x_2)\log{[1-x_1-x_2]}.
\end{equation}
The base of the logarithms is irrelevant but will be set to two throughout the paper.  The classical-quantum conditional entropy (entropy conditioned on a classical variable) reads $H(A|Y)=\sum_{y}p(y)H(\varpi_A^y)$. For a classical variable $A=X$ the entropy becomes the standard Shannon conditional entropy $H(X|Y)=-\sum_{y}p(y)\sum_xp(x|y)\log{p(x|y)}$. Other entropic quantities used in the paper include the classical mutual information $I(X:Y)\df H(X)+H(Y)-H(XY)=H(Y)-H(Y|X)$.  We will also use the quantum version of the mutual information where one of the registers is quantum and express it as $I(Y:E)=H(E|X) + I(X:E) - H(E|Y)$.

When we say a function $f$ is increasing we mean non-decreasing ($f(x)\leq f(y)$ whenever $x\leq y$). Similarly, a decreasing function means a non-increasing function.

\subsection*{Conventions}

We will use the convention of~\cite{zhao2009asymptotic} for the quadrature operators.  They  are given by $Q=1/\sqrt{2}(a+a^\dg),P=1/\sqrt{2}(a-a^\dg)$ and so $\lan(\D Q)^2\ran_\a=\lan(\D P)^2\ran_\a=1/2$ ($\hbar=1$) and $\lan Q\ran_\a=1/\sqrt{2}(\a+\bar{\a})$ where $\a=r\exp{[i\s]}$. In our case we have $(\a_x)_{x=0,1,2}$ and $\s_0=0,\s_1=2\pi/3$ and $\s_2=4\pi/3$ and $r_i=r$ is a free parameter chosen by the legitimate participants to maximize the secret key rate.

A lossy bosonic channel is a Gaussian channel parametrized by has two quantitites. One of them is the transmittance $0\leq\eta\leq1$ and the other one the number of thermal photons representing the Gaussian excess noise. For the sake of comparison, we use the definition of excess noise from~\cite{zhao2009asymptotic}:
\begin{equation}\label{eq:excessNoise}
  \d={\lan(\D Q)^2\ran_{\vr_B}\over\lan(\D Q)^2\ran_{\ket{0}}}-1
\end{equation}
given by Bob's measurement of $\vr_B$. For the simulation scenario we also assume $\lan(\D Q)^2\ran_{\vr_B}=\lan(\D P)^2\ran_{\vr_B}$. A~quantity called a ``mixedness parameter'' $\ve_x\geq0$ is upper bounded by Bob's second moments according to (65) of~\cite{zhao2009asymptotic} and it is the main estimate of the state in Eve's possession. In our simulation scenario we may set\footnote{The variables $\a,\d,\ve,\g$ used in this section should not be confused with those from Sec.~\ref{sec:main}.}, $\ve\equiv\ve_x$.

\section{Discrete versus Gaussian encoding}\label{sec:DVCVQKD}

The most studied QKD schemes are discrete-variable (DV) QKD~\cite{Scarani2009,Lo2014} and continuous-variable (CV) QKD~\cite{Weedbrook2012} based on a Gaussian encoding. The DV QKD security analysis is very mature but the secret key rates are limited given the discrete nature of the encoding. Higher-dimensional DV QKD scheme have been analyzed~\cite{bradler2016finite} but yet to  have graduated from the experimental point of view. Gaussian CV QKD offers much generous secret key rates together with a relatively simple experimental realization in terms of the state preparation and detection. But it has also its disadvantages. For instance, the classical postprocessing such as error-correction is computationally demanding and currently not very efficient. The aspiration of CV QKD based on a distribution of discrete signal states holds a promise of combining the best of both worlds.

Unlike a Gaussian encoding where the best adversary's strategy is known, the same is not true if the number of signal states is discrete. In fact, to the authors' knowledge, there exists only one paper dealing with the security of such a scheme without assuming nearly anything about the adversary's powers~\cite{zhao2009asymptotic}. The security proof (and thus the corresponding secret key rate lower bound) is derived by assuming a collective attack and in the asymptotic scenario of an infinite code length. The collective attacks are not the most general eavesdropping scheme. However, it is widely believed that similarly to DV QKD or Gaussian CV QKD, a more general attack strategy does not bring any advantage. For the second point, an asymptotic analysis is not a realistic assumption but it is historically the first step after which a finite-key length analysis typically follows. The number of signal (coherent) states prepared by a sender in~\cite{zhao2009asymptotic} is two and the receiver is allowed to measure only the first and second moments  of whatever gets through the (unknown) quantum channel. Through a tour-de-force calculation, the authors essentially construct a statistical model of the adversary's quantum states compatible with the legitimate recipient's measurement and maximize the amount of information the adversary can in principle get, following a two-way public discussion. In this way, a secret key rate lower bound is derived.

The analysis is achieved by splitting the secret key rate for a reverse reconciliation protocol into three entropic quantities and upper/lower bounding them from the quantities available from the recipient's measurement. In this paper, we follow the same strategy but instead of two signals the communicating parties exchange three coherent signals. This may seems like a small iteration but the opposite is true. We get not only substantially better secret key rate lower bounds but also show the limitation of the approach. The latter point is worth elaborating on. The proof presented in~\cite{zhao2009asymptotic} crucially relies on the monotonicity and concavity of the binary Shannon entropy as a function of the absolute value of the overlap of two pure states (not necessarily the signal states). For two signal states, these properties are trivial and they are not proved in~\cite[Eqs.~(33), (34)]{zhao2009asymptotic}. The situation dramatically changes for three signal states. Essentially, the result of this paper is the proof that these two \emph{crucial} properties hold for the ternary Shannon entropy, Eq.~\eqref{eq:ternaryEnt}. Only then can the rest of the previous analysis be applied verbatim and that is precisely what we have done. Once these two properties are proven, the rest of the proof follows exactly as in~\cite{zhao2009asymptotic} only with a few minor modifications which we will write explicitly.

There is a caveat, however. For two signal states, the binary Shannon entropy depends only on the absolute values of the overlap of the signal states. For three states, the ternary entropy depends on three possible overlaps and a certain phase. This wouldn't be a problem if we needed to study the entropy of the density matrix for the signal states only. After all, the participants are those who decide what symmetry (and a probability distribution) the signal states  obey and that could greatly simplify the analysis. The problem is that at one point of the previous analysis~\cite{zhao2009asymptotic}, the purified adversary's state (estimated from Bob's measurement) need not obey any such property and the state must be considered arbitrary. As it is discussed in the first remark of Section~\ref{subsec:ternaryDM}, in the presence of more than one overlap, the studied function does not even satisfy  the (suitably generalized) notion of monotonicity. This is not only surprising but it also affects the applicability of the approach of~\cite{zhao2009asymptotic} that we follow here -- unlike the case of two signal states, the proof strategy has its limits. Another consequence of our generalization is that unless a generic argument for monotonicity and concavity of the suitable generalized entropy function can be found (taking into account what we have just stated), it is most likely that a completely different approach is needed in order to study discrete CV QKD protocols and their rates for more than three signal states.

\section{Description of Ternary Coherent State Protocol}\label{sec:privateCode}

Here we outline our ternary (three coherent state) QKD protocol. It goes as follows.
\begin{enumerate}
\item Alice prepares one of three possible coherent states $\ket{\a_i}$ with probability $p_i=1/3$, where $i = 0,1,2$. In Fig.~\ref{fig:Phase_space}, we have a schematic of the phase space depicting how the three coherent states are placed, i.e., sequentially separated by $120^{\circ}$. She then  sends the randomly selected coherent state to the receiver, Bob, over an insecure quantum channel. It is assumed that this channel could be monitored by Eve. Alice repeats this step many times. Alice's choice for the $i$th signal (coherent state pulse) is recorded in the variable $x_i$. Specifically, the labeling goes as: $\ket{\a_0}$ is $x_i = 0$, $\ket{\a_1}$ is $x_i = 1$, and $\ket{\a_2}$ is $x_i = 2$.

\item Bob, upon receiving a sequence of quantum states, randomly performs homodyne detection thereby randomly measuring the quadratures $Q(\phi)$ for $\phi=(\pi/2,-\pi/6,-5\pi/6)$ of each of the coherent states. A similar setup was used in~\cite{namiki2006efficient} but tested on a specific eavesdropping strategy. Bob's measurement results are recorded in the variable $y_i$. Note that $Q(\pi/2)\equiv P$ in Fig.~\ref{fig:Phase_space}.

\item After the transmission, the parties publicly announce the measurement quadratures. One of the quadratures, say $Q(-5pi/6)$, the measurement data is published which is used to determine the extent of the adversary's maliciousness. These data are subsequently discarded.

\item The remaining data (which we denote as $\{\vec{x},\vec{y}\}$) will be used for the final key generation. For the purpose of reverse reconciliation, Bob sends computes  functions $u(\vec{y})$ and $w(\vec{y})$ and sends $u(\vec{y})$ over a public channel to Alice and keeps $w(\vec{y})$ which is a discrete proto-key (partially correlated with Alice's discrete variable $\{\vec{x}$).

\item Classical post-processing procedures of error correction and privacy amplification are applied by Alice and Bob in order to extract the final shared secret-key. This final secret bit string is then used as a one-time pad in order to perfectly secure messages.
\end{enumerate}

\section{A secret key rate lower bound }

In this section, we derive the lower secret key rate for the ternary protocol with respect to a lossy bosonic channel. Mathematically the main results needed for this lower bound (and which are rigorously proven in the Appendix) involve proving that monoticity and concavity both hold for the ternary Shannon entropy, Eq.~\eqref{eq:ternaryEnt}. We begin by defining the lower bound of the secret key rate $K$ followed by calculating the individual components of this bound which include Alice and Bob's mutual information and Eve's mutual information.

The secret key rate $K$ is lower bounded as
\begin{equation}\label{eq:rateRR}
 K > I(X:Y) - \max\limits_{\vr_{ABE}} I(Y:E)
\end{equation}
Eq.~\eqref{eq:rateRR} has its origin in~\cite{devetak2005private} where the one-way private quantum channel capacity was established. The lower bound also differs from~\cite{devetak2005private} in several aspects. (i) The channel is a priori not known and is only partially estimated by the measurements of the legitimate participants. The ambiguity in its identification is an advantage for Eve -- the optimization leads to the penalty on the amount of shared secret correlations as if Eve used the best eavesdropping channel compatible with the measurements. This translates into the best channel purification $\vr_{ABE}$ held by Eve among all admissible ones in Eq.~\eqref{eq:rateRR}, see also Ref.~\cite{devetak2005distillation}. (ii) Our key distribution protocol uses reverse reconciliation where the classical communication (exploited by Eve) is transmitted from Bob to Alice. This results in the appearance of the second term in~\eqref{eq:rateRR} as opposed to~\cite{devetak2005distillation,devetak2005private} dealing with direct reconciliation. (iii) Finally, given the reality of the explicit quantum private code described in Sec.~\ref{sec:privateCode}, the RHS of~\eqref{eq:rateRR} is a one-shot formula -- a natural lower bound to a multi-letter secret key rate formula. A closely related expression for a secret key rate was derived in~\cite{kraus2007security} while focusing solely on the security of QKD.

\subsection{A secret key rate lower bound for a Lossy Bosonic Channel}\label{subsec:simLossy}

The job here is to maximize the mutual information $I(Y:E)$ in order to find a lower bound on the secret key rate~$K$. In an actual experiment, the classical probability distribution must be measured to be subsequently inserted to the relevant entropic quantities in~\eqref{eq:rateRR}. Following~\cite{zhao2009asymptotic} we may simulate an actual link by a lossy bosonic channel. This is a realistic model for the atmospheric CV QKD with homodyne measurement. Note that the complementary channel is another lossy bosonic channel and it captures the effect of the environment or an adversary Eve. As is common for QKD, Eve is assumed to control the channel and take an advantage of the generated noise to hide her illicit behavior.

As we will see in Section~\ref{subsec:ternaryDM}, unlike the BPSK case studied in~\cite{zhao2009asymptotic} the entropic properties of the investigated density matrix depend not only on the mutual overlaps of the three signal states but also on the overall phase, see the expressions for $d$ in Eq.~\eqref{eq:abcdGen} or~\eqref{eq:cdSpecCased}. In the simulation scenario for a lossy bosonic channel the phase can be computed as we will show now.

We will first consider the zero excess noise case $\d={\lan(\D Q)^2\ran_\a\over\lan(\D Q)^2\ran_{\ket{0}}}-1=0$ (a pure-loss bosonic channel). The estimated quantities become simpler as the recipient's detected states are pure coherent states and similarly for Eve. The parameter $\ve$ given by (65) in~\cite{zhao2009asymptotic} is bounded from above by $U\equiv U_x=0$ from~(65). Hence $\ve=0$ and (66) together with (C17,C18) of~\cite{zhao2009asymptotic} imply
$$
|\brk{\tilde{\b}_i}{\tilde{\b}_j}|=c_u=c_l=\ka.
$$
The RHS is given by $\ka\equiv\ka_{ij}=|\brk{\sqrt{\eta}\a_i}{\sqrt{\eta}\a_j}|$. Inserting $c_u,c_l$ into (70,71) in~\cite{zhao2009asymptotic} we get
\begin{equation}\label{eq:lambda}
d_l=d_u={|\brk{\a_i}{\a_j}|\over\ka}\df|\g_{ij}|\equiv|\g|=e^{-{3\over2}(1-\eta)r^2}.
\end{equation}
This quantity is the estimated overlap of the states going to the environment. As expected from the properties of a pure-loss bosonic channel it is the same quantity as $\ka$ with $\eta$ substituted by $1-\eta$.

We can geometrically interpret the product of inner products in~\eqref{eq:abcdGen} (or its special case~\eqref{eq:cdSpecCased}) if $\psi_i$ are coherent states. Then the product
\begin{equation}\label{eq:innerProducts}
z_{01}z_{12}z_{20}=\brk{\a_0}{\a_1}\brk{\a_1}{\a_2}\brk{\a_2}{\a_0}=e^{-{1\over2}(c_{01}^2+c_{12}^2+c_{20}^2)}e^{-i2(A_{01}+A_{12}+A_{20})}
\end{equation}
is written in terms of the sides $c_{ij}$ and area $A_{012}\df A_{01}+A_{12}+A_{20}$ of the triangle formed by the corresponding three points in phase space. This is the interpretation provided by~Lemma~\ref{lem:phase}.

We illustrate it on the symmetric case $c_{01}=c_{20}=c_{12}\equiv c$ of an equilateral triangle for $\d=0$, whose side squared is equal to $c^2=3r^2(1-\eta)$ found in~\eqref{eq:lambda}.  From the new triangle side we deduce, with the help of elementary geometry (essentially Heron's formula), the corresponding  area:
\begin{equation}\label{eq:heron}
  A_{012}={1\over4}\big(4c_{01}^2c_{12}^2-(c_{01}^2+c_{12}^2-c_{20}^2)^2\big)^{1/2}.
\end{equation}
and consequently the phase: $\vt=2A_{012}=r^2{3\sqrt{3}\over2}(1-\eta)$.

How do we apply it to the $\d>0$ case? Here, the situation is  slightly different. The effect of a lossy bosonic channel is not only  shrinking of the phase space triangle but also increasing the states' variances -- environment (Eve) and Bob do not receive a mixture of three pure states but rather of three mixed Gaussian states. Following the general procedure outlined in~\cite{zhao2009asymptotic}, where only the first and second moments are measured, the overlaps of Eve's state figuring in our simulation scenario are bounded by (70) and~(71) in ~\cite{zhao2009asymptotic}. In that case, neither $|\g|$ nor $\ka$ are overlaps of the corresponding pure coherent states. More precisely, since Bob measures only the first two moments, the authors of~\cite{zhao2009asymptotic} introduced fiducial coherent states $\ket{\overline{\b}_i}$ on Bob's side compatible with the measurement of the first moment. Then $\ka=|\brk{\overline{\b}_i}{\overline{\b}_j}|$  and  as before $\ka\equiv\ka_{ij}=|\brk{\sqrt{\eta}\a_i}{\sqrt{\eta}\a_j}|$ for the case of a lossy bosonic channel\footnote{An insight provided by Saikat Guha.}. This provides the same interpretation for $|\g|$ (Eve's parameters estimated from Bob's measurement) and the phase is then  determined according to Lemma~\ref{lem:phase}.

The main object of study is a lower bound on the secret key rate, Eq.~\eqref{eq:rateRR}. Here we break down the lower bound for the simulated lossy bosonic channel. The central role is played by the ternary Shannon entropy, Eq.~\eqref{eq:ternaryEnt},  where $x_k=t_k+1/3$ and $t_k$ is given by~\eqref{eq:tkTrig}.

\subsection*{Eve's and Alice's Mutual Information,  \boldmath{$I(X:E)$} }
Closely following~\cite[Sec. IV.~B]{zhao2009asymptotic}, to get a secret key lower bound, the first quantity to estimate is $I(X:E)<I(X:QE)=h_3(\vec{x}(Z))$ for $x_k$ restricted to $p_k=1/3$ and $\brk{\Psi_{EQ}^i}{\Psi_{EQ}^j}=Z_{ij}={Z}\exp[{i\tilde\tau_{ij}}],\,Z>0$. As explained in the remark on p.~\pageref{rem:restrictedsolution}, the restriction to $|Z_{ij}|=Z$ is a necessary step for the proof strategy following~\cite{zhao2009asymptotic} to go through. Then, from~\eqref{eq:tkTrig}, we get the explicit form of $x_k$:
\begin{subequations}\label{eq:xkForpkOneThird}
\begin{align}
  x_1 & ={1\over3}\Big(1+2Z\cos{\vt\over3}\Big), \\
  x_{2,3} & ={1\over3}\Big(1-Z\big(\cos{\vt\over3}\mp\sqrt{3}\sin{\vt\over3}\big)\Big).
\end{align}
\end{subequations}
Denoting $f\equiv f_{ij}=F(\vr_E^i,\vr_E^j)$ to be the fidelity of $\vr^{i(j)}_E=\Tr{Q}[\Psi^{i(j)}_{EQ}]$ we get
\begin{equation}\label{eq:h3UpBndMutInfo}
h_3(\vec{x}(Z,\vt))\leq h_3(\vec{x}(f,\vt))\leq h_3\big(\vec{x}((1-\tilde\ve_0)^{1/2}(1-\tilde\ve_1)^{1/2}|\g|,\vt)\big)
\end{equation}
where $0\leq\tilde\ve_i\leq\ve$. The second inequality follows from the proof of monotonicity, Theorem~\ref{thm:entrDecreasing}, as a special case $p_k=1/3$.

When restricted to the simulation scenario of a lossy bosonic channel, the parameter $\vt$ is a phase whose value we determine with the help of Lemma~\ref{lem:phase}. Before doing so, recall that for $\d=0$ the lossy bosonic channel merely ``shrinks'' the triangle representing the mixture of three coherent states in phase space and the shrinking factor is $1-\eta$ for Eve's system (see~\eqref{eq:lambda}). Consequently, $\vr_E^i$ are pure and Eq.~\eqref{eq:lambda} can be interpreted as the modulus of their overlap.

\subsection*{Eve's Entropy conditioned on Alice's variable  \boldmath{$X,\,H(E|X)$}}

The next expression used for the secret key estimation is the conditional entropy $H(E|X)$. It is upper  bounded by~\cite{zhao2009asymptotic}
$$
{1\over3}\sum_{x}{(1+V_x)\log{[1+V_x]}-V_x\log{V_x}},
$$
where $V_x=\big(\lan(\D Q)^2\ran_{\vr_B}\lan(\D P)^2\ran_{\vr_B}\big)^{1/2}-1/2$. In the case of a lossy bosonic channel we find $V_x=\d/2$.

\subsection*{Eve's Entropy conditioned on Bob's measurement outcome  \boldmath{$Y,\,H(E|Y)$}}

The third expression needed to be evaluated from the secret key lower bound is $H(E|Y)$ in (62) from~\cite{zhao2009asymptotic}. In order to do so we have to generalize the conditional probability distribution related to the action of a lossy bosonic channel. We cannot simply take the derived expressions in~\cite{zhao2009asymptotic} since for three and more signal states the states cannot all be aligned with a real line in phase space. Instead, we introduce
$$
p(y|x)={1\over\pi(1+\d)}\exp{\bigg[-{|y-\sqrt{\eta}\a_x|^2\over\d+1}\bigg]}
={1\over\pi(1+\d)}\exp{\bigg[-{|y|^2+\eta r^2-2|y|r\sqrt{\eta}\cos{[\phi-\s_x]}\over\d+1}\bigg]},
$$
where $y=|y|\exp{[i\phi]}$ and $\a_x=r\exp{[i\s_x]}$. For three signal states we take the values of $\s_{0,1,2}$ introduced in Section~\ref{sec:privateCode}. To simulate the channel we further use $p(x|y)={1\over3}p(y|x)/p(y)$ together with
$$
p(y)=\sum_{x=0,1,2}p(y|x)p(x)={1\over3}{1\over\pi(1+\d)}\sum_{x=0,1,2}\exp{\bigg[-{|y-\sqrt{\eta}\a_x|^2\over\d+1}\bigg]}.
$$
Hence, for example,
$$
p(0|y)={\exp{\Big[-{|y-\sqrt{\eta}\a_0|^2\over\d+1}\Big]}\over
    \sum_{x=0,1,2}\limits\exp{\Big[-{|y-\sqrt{\eta}\a_x|^2\over\d+1}\Big]}}.
$$
A straightforward generalization of the derivation of Eqs. (56) and (57) in~\cite{zhao2009asymptotic} allows us to lower bound $H(E|Y)$.

\subsection*{Alice's and Bob's Mutual Information, \boldmath{$I(X:Y)$} }

The final component is the classical mutual information $I(X:Y)=H(X)-H(X|Y)$ calculated with the help of $p(x|y)$ and $p(y)$ defined above.

\subsection*{Final Secret Key Rate Lower Bound for a Lossy Bosonic Channel}

Now we have all the ingredients we need to find the actual secret key rate lower bound.  It is expression~(72) given in~\cite{zhao2009asymptotic}, adapted to the TPSK encoding. It can be written as
\begin{align}\label{eq:Ratedelta}
  K & > \underbrace{\log{3}-\int_{0}^\infty\dif{|y|}|y|\int_0^{2\pi}\dif{\phi}p(y)\sum_{x=0,1,2}p(x|y)\log{[p(x|y)]}}_{I(X:Y)}\nn\\
     &-\underbrace{\big((1+\d/2)\log{[1+\d/2]}-\d/2\log{[\d/2]}\big)}_{H(E|X)}-\max_{0\leq\tilde\ve\leq\ve}
     \Bigg[\underbrace{h_3\big(\vec{x}((1-\tilde\ve)|\g|,\vt)\big)}_{H(X:E)}\\
     &\left.\begin{array}{@{}l}\nn
    {\displaystyle- \int_{0}^\infty\dif{|y|}|y|\int_0^{2\pi}\dif{\phi}p(y)h_3(\vec{x}(|\g|,\vt,p(0|y),p(1|y)))}\\
    {\displaystyle+\sum_{x=0,1}\Bigg[\bigg({\tilde\ve\over3}{1+|\g|\over1-|\g|}\bigg)^{1/2}\Bigg(\int_{0}^\infty\dif{|y|}|y|\int_0^{2\pi}\dif{\phi}
  p(y){h_3^2\big(\vec{x}(|\g|,\vt,p(0|y),p(1|y))\big)\over p(x|y)}\Bigg)^{1/2}\Bigg]}\\
  {\displaystyle+{\tilde\ve\over1-|\g|}h_3\big(\vec{x}(|\g|,\vt,{1/3},1/3)\big)}
   \end{array}\!\!\right\}&\!\!\!-H(E|Y).
\end{align}
For ease of sight we identified the \emph{origin} of the summands by the expressions in the braces.
\begin{figure}[h]
   \resizebox{14cm}{!}{\includegraphics{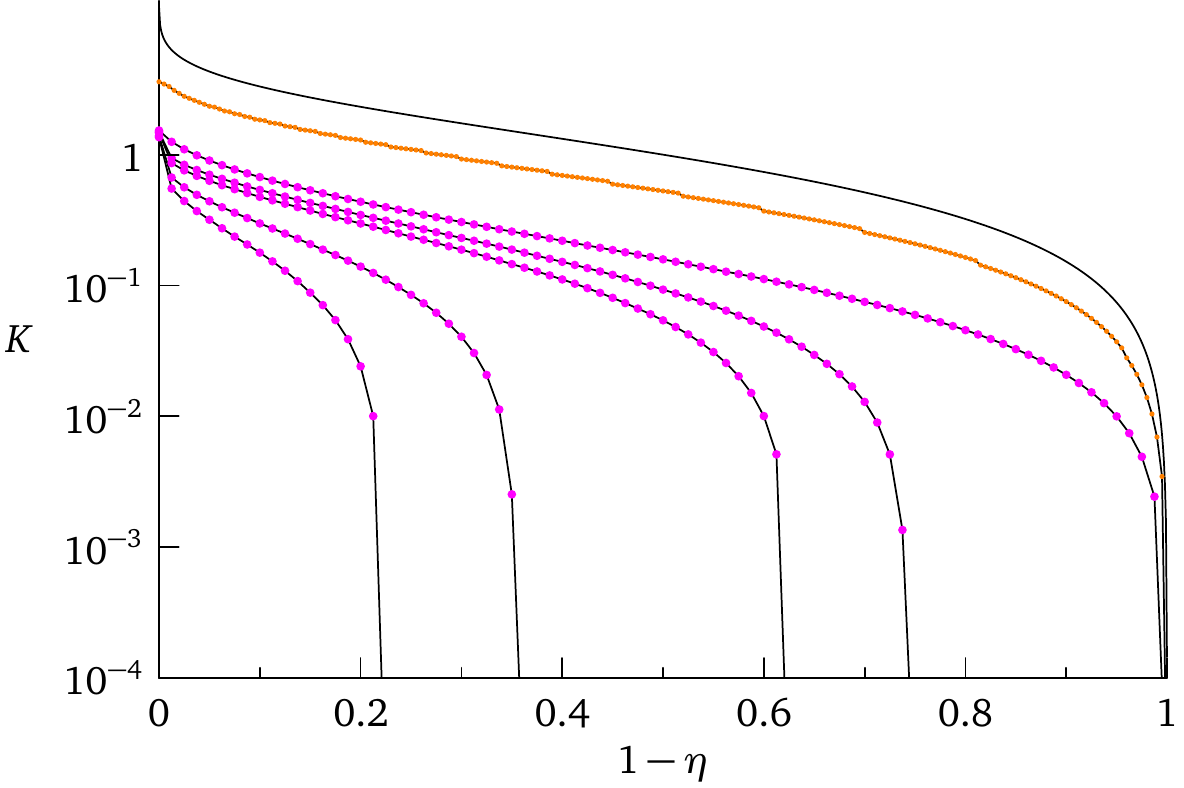}}
    \caption{Secret key rates as functions of loss $1-\eta$ for several values of the channel excess noise $\d=(0,0.0004,0.001,0.005,0.01)$ (the pink dots). The black curve is the ultimate achievable bound without an energy constraint for $\d=0$. The orange curve is an achievable bound for $\d=0$ taking into account the input energy constraint~\cite{pirandola2015fundamental}. All curves are functions of the channel loss.}
    \label{fig:rates}
\end{figure}
The main technical result of this paper -- the proofs of monotonicity and concavity of the ternary Shannon  entropy -- participate in the derivation of $H(E|Y)$. The reasoning is nearly a verbatim copy of Section~IV.C and the Appendices~A and~C of~\cite{zhao2009asymptotic} implying the conditional entropy to be a lower bound on the secret key rate~$K$.

In Fig.~\ref{fig:rates} we present the main result of our analysis (applied to a simulated lossy bosonic channel). We plot the secret key lower bound, Eq.~\eqref{eq:Ratedelta}, for several values of the excess noise parameter. Compared to~\cite{zhao2009asymptotic}, we find better lower bounds as expected from the use of three signals states but also much better threshold values where the rate is zero. It therefore supports the idea that to approach the high rates given by a continuous Gaussian encoding, one would need only a reasonably small number of signal states. This cannot, strictly speaking, be correct for the vicinity of $\eta=1$. It is known  that the ultimate upper bound for the two-way secret key rate at the presence of zero excess noise is equal to $K=-\log{[1-\eta]}$~\cite{pirandola2015fundamental}, a quantity diverging for $\eta\to1$. Clearly, for any finite number of discrete signal states $d$, the maximal secret key rate for $\eta=1$ is $\log{d}$ like in our case $d=3$. Ref.~\cite{pirandola2015fundamental} also provided an achievable bound (actually a lower bound based on~\cite{Garcia2009})  by taking into account the input energy constraint. This is depicted in Fig.~\ref{fig:rates} as the orange dotted curve for $\d=0$. The `stairs' on this curve are the consequence of a different optimal energy (input state overlap leading to a different input energy constraint) shown in Fig.~\ref{fig:ropti}.

An important fact to realize is that even though we have only proved monotonicity and concavity of $h_3$ for $0\leq\vt\leq\pi/2\Leftrightarrow\ve\leq0$ (for $\ve$ given by~\eqref{eq:pqEpsi}), it does not affect the secret key rate lower bound. The optimal input energy falls inside the region $\ve\leq0$. The situation is also depicted in Fig.~\ref{fig:ropti}.
\begin{figure}[h]
   \resizebox{13cm}{!}{\includegraphics{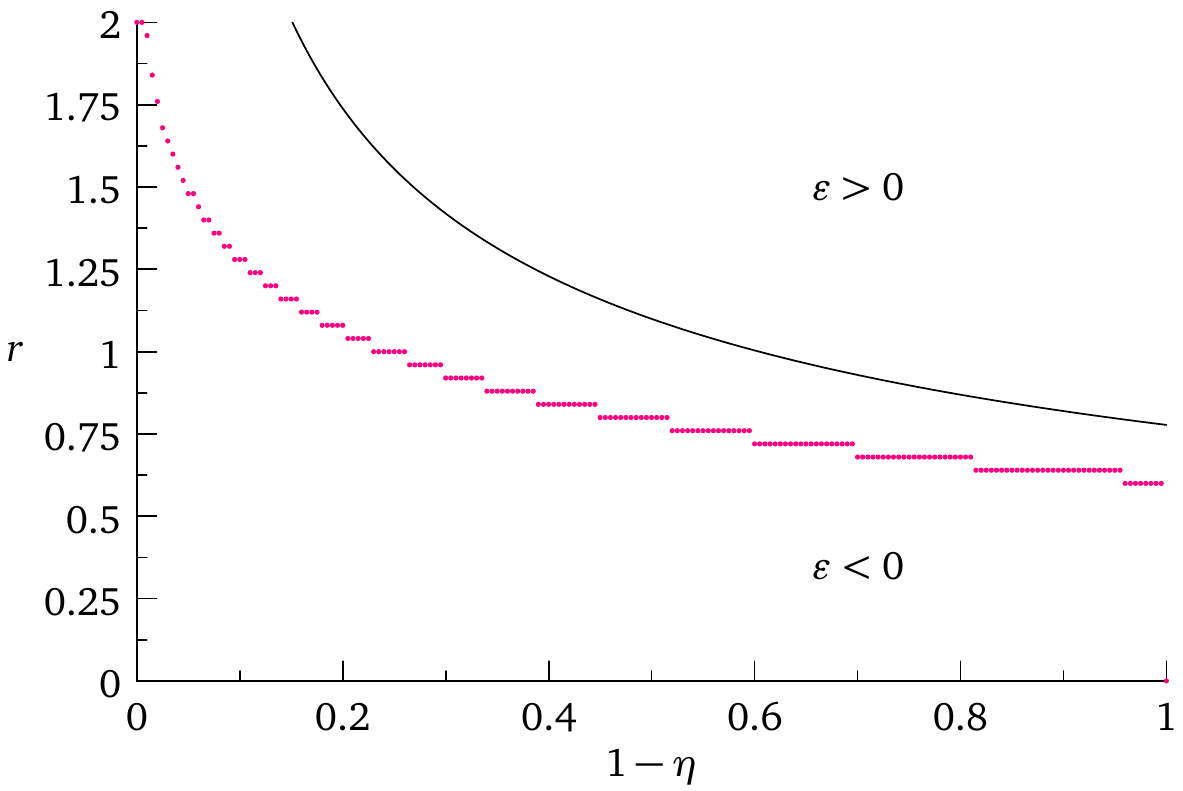}}
    \caption{The red dots depict the optimizing overlaps $r$ for $\d=0$. The black curve is a boundary $\ve=0$ of~\eqref{eq:pqEpsi} ($\vt=\pi/2$) given by $r^2{3\sqrt{3}\over2}\eta=\pi/2$ (see below Eq.~\eqref{eq:heron}) below which the proofs of monotonicity and concavity exist ($\ve\leq0$).}
    \label{fig:ropti}
\end{figure}

\section{Differences in an actual QKD experiment}

The real-world scenario introduces further complications. The channel may not be lossy bosonic (it may not even be described by a stationary process for the duration of the experiment but we will avoid this type of complications). For a stationary channel and in the asymptotic scenario the participants collect enough statistics to reconstruct the channel to estimate the conditional probability distributions $p(y|x)$ and $p(y)$ arbitrarily well. The same applies to the BPSK analysis from~\cite{zhao2009asymptotic} but as we already alluded to, there is more degrees of freedom in the ternary case. There are in total three overlaps in the form of three real parameters for a general triple of coherent pure states and in addition there is a phase. In the simulation scenario of a lossy bosonic channel the overlaps if chosen symmetrically by Alice (our assumption) and the phase can be subsequently calculated as done in the previous section\footnote{Note that similarly to~\cite{zhao2009asymptotic} we not only calculate the entropy of the input density matrix but also of other, say intermediate, density matrices in order to lower bound the secret key rate. Even there the three real parameters coincide (they can't be interpreted as overlaps, though, see below Eq.~\eqref{eq:heron}) and the phase can be calculated for a lossy bosonic channel.} But in for an actual experiment we can only assume the symmetry of a density matrices directly prepared by Alice. The states where Eve can in principle intervene has no a priori symmetry which translates into their entropy to be dependent on three plus one free parameters. As it turns out (see the discussion in Sec.~\ref{subsec:ternaryDM}), the key property of monotonicity of the ternary Shannon entropy does not hold in general and  the strategy to lower bound the secret key rate from~\cite{zhao2009asymptotic} must be abandoned.

How do we overcome this problem here? If the parameters measured by Bob indicate that the incoming states are not symmetrically distributed, the participants assume the closest symmetric distribution that gives Eve the biggest advantage. One could be tempted to take the smallest of the three overlaps and create a symmetric distribution based on it. However, as the example in~\cite[p.~10]{jozsa2000distinguishability} shows, the entropy of such a density matrix does not necessarily becomes smaller thus indicating \emph{more} distinguishable quantum states. So a better strategy to introduce a single overlap is called for and it will necessarily reduce the secret key rate. But only this is the situation for which we can follow the proof in~\cite{zhao2009asymptotic} once the monotonicity and concavity of the ternary Shannon entropy is proven. The worst case scenario happens if Bob detects only two states, that is, if the channel is so disruptive that it managed to merge two signal states to one quantum state. In that case the secret key rate would be zero and it would probably be better to switch to BPSK.

How do we recover the other free parameter, namely the angle? Similarly to the lossy bosonic case, a triple of fiducial coherent states $(\ket{\overline{\b}_i})_{i=0,1,2}$ with the same absolute value of the overlap is introduced. We assume that the triple properly bounds the entropies as described in the previous paragraph, so that the advantage is given to Eve resulting in the key rate reduction. Then we  followed the procedure of phase calculation described below Eq.~\eqref{eq:heron} following Lemma~\ref{lem:phase}. This is the right phase for the fiducial triple of pure coherent states.

 \section{Conclusion}

In conclusion, we introduced and rigorously proved the asymptotic security of a new ternary QKD protocol based on three coherent states and homodyne detection. The motivation for introducing such a protocol is to extract a best-of-both-world's approach to QKD in terms of the encoding and decoding of discrete variable schemes along with the practical hardware of continuous variable schemes. There is, however, the downside that the security proof is very challenging compared to the results for Gaussian modulated continuous-variable QKD protocols. We overcame this challenge by mathematically proving that two crucial properties, monotonicity and concavity, hold for the ternary Shannon entropy. This allowed us to evaluated a lower bound to the secret key rate in the collective attack scenario.

Other interesting avenues of research could include considering a four-state extension (if possible, or perhaps using a different method), determining what number of signal states are enough to tend close to the full Gaussian distribution and also a thorough finite-key analysis. This is a lively area of research for many classes of bosonic channels where the lossy bosonic channel is an important subclass~\cite{pirandola2015fundamental,laurenza2017finite}. A measurement-device-independent (MDI)-QKD~\cite{Braunstein2012,Lo2012,pirandola2015high} version of our scheme presented here would also be interesting as a way of ruling out side channel attacks.

\section*{Acknowledgement}
We would like to thank Saikat Guha for helpful discussions. The authors  acknowledge support from the U.S. Office of Naval Research (ONR). This material is based upon work supported by the Air Force Office of Scientific Research under award number FA9550-17-1-0083. The authors thank Saikat Guha for valuable comments and discussions.

\appendix
\renewcommand{\thesubsection}{A.\arabic{subsection}}
\section{Full Details of Main result}\label{sec:main}

\subsection{Properties of ternary density matrix}\label{subsec:ternaryDM}

In this section, we give the calculations needed to prove the main results. To begin with, let
\begin{equation}\label{eq:Adm}
  \varpi=p_0\kbr{\psi_0}{\psi_0}+p_1\kbr{\psi_1}{\psi_1}+p_2\kbr{\psi_2}{\psi_2}
\end{equation}
be a rank-three density operator where $p_0+p_1+p_2=1$. The state $\varpi$ takes on a different meaning depending on where it is used. It can be an input density matrix a sender prepares in a lab in which case $p_k=1/3$ and $\psi_k$ are the signal (coherent) states with a chosen symmetry. Or, it can be Eve's conditioned state based on Bob's measurement. In that case, $p_k$ are arbitrary conditional probabilities $p_k(x|y)$  and $\psi_k$ are  pure states with no obvious symmetry properties~\cite{zhao2009asymptotic}.

Following the Cayley-Hamilton theorem, one finds the coefficients of the characteristic polynomial
\begin{equation}\label{eq:CH}
  \det{[\varpi-x\id]}=f(x)=ax^3+bx^2+cx+d=0,
\end{equation}
where
\begin{subequations}\label{eq:abcdFromCH}
  \begin{align}
    a & =1, \\
    b & =-\Tr{}[\varpi]=-1,\\
    c & ={1\over2}((\Tr{}[\varpi])^2-\Tr{}[\varpi^2])={1\over2}(1-\Tr{}[\varpi^2]),\\
    d & =-{1\over6}((\Tr{}[\varpi])^3-{3}\Tr{}[\varpi]\,\Tr{}[\varpi^2]+2\Tr{}[\varpi^3])=-{1\over6}(1-3\Tr{}[\varpi^2]+2\Tr{}[\varpi^3]).
  \end{align}
\end{subequations}
The last  two coefficient become
\begin{subequations}\label{eq:abcdGen}
  \begin{align}
    c & ={1\over2}(1-p_0^2-p_1^2-p_2^2-2p_0p_1|z_{01}|^2-2p_1p_2|z_{12}|^2-2p_0p_2|z_{02}|^2),\\
    d & =\frac{1}{6} \Big(-1+3 \left( p_0^2+  p_1^2+p_2^2+2  {p_0}  {p_1} |z_{01}|^2+2  {p_0}  {p_2} |z_{02}|^2+2  {p_1}  {p_2} |z_{12}|^2\right)\nn\\
    &\quad-2 \big( p_0^3+p_1^3+p_2^3+3  (p_0^2  p_1+p_0p_1^2)   |z_{01}|^2+3  (p_0^2  {p_2}+{p_0}p_2^2)|z_{02}|^2+3(p_1^2{p_2}+{p_1}p_2^2)|z_{12}|^2\nn\\
    &\quad+3{p_0}{p_1}p_2(z_{01}z_{12}z_{20}+c.c)
    \big)\Big).
  \end{align}
\end{subequations}
Note that $\varpi$ in all its roles in the security proof if always a sum of rank-one operators. Hence the trace quantities in Eqs.~\eqref{eq:abcdFromCH} are easy to find. An additional check was performed by calculating the quartic term
$$
{1\over24}\big((\Tr{}[\varpi])^4-6(\Tr{}[\varpi])^2\Tr{}[\varpi^2]+3(\Tr{}[\varpi^2])^2  + 8\Tr{}[\varpi]\Tr{}[\varpi^3]-6\Tr{}[\varpi^4]\big)
$$
and was found to be zero as it should be.

We set the overlaps to be $\brk{\psi_i}{\psi_j}=z_{ij}=|z|\exp[{i\tau_{ij}}]$ and get
\begin{subequations}\label{eq:cdSpecCase}
  \begin{align}
    c & ={1\over2}\big(1-p_0^2-p_1^2-p_2^2-|z|^2(2p_0p_1+2p_1p_2+2p_0p_2)\big),\\
    d & =\frac{1}{6} \Big(-1+3 \left( p_0^2+  p_1^2+p_2^2+2|z|^2(  {p_0}  {p_1} + {p_0}  {p_2} + {p_1}  {p_2})\right)\nn\\
    &\quad-2 \big( p_0^3+p_1^3+p_2^3+3  (p_0^2  p_1+p_0p_1^2)   |z|^2+3  (p_0^2  {p_2}+{p_0}p_2^2)|z|^2+3(p_1^2{p_2}+{p_1}p_2^2)|z|^2\nn\label{eq:cdSpecCased}\\
    &\quad+6|z|^3{p_0}{p_1}p_2\cos{\vt} \big)\Big),
  \end{align}
\end{subequations}
where $\vt=\tau_{01}+\tau_{12}+\tau_{20}$.  The absolute value $|z|$ and the angle $0\leq\vt\leq\pi$ are not independent and we will revisit the relation below Eq.~\eqref{eq:tkTrig} (see also Lemma~\ref{lem:phase}).
\begin{rem}\label{rem:restrictedsolution}
  It may seem that by setting $|z_{ij}|=|z|,\forall i,j$ we limit ourselves to a special case of~$\varpi$. This is indeed true. Quite surprisingly, however, it is the most general case for which one of the studied properties (monotonicity) actually holds. It turns out that the multivariable function studied in this paper, the ternary Shannon entropy (Eq.~\eqref{eq:ternaryEnt}), is not monotone decreasing unless $|z_{ij}|=|z|,\forall i,j$ in which case it reduces to the standard single-variable problem.  What does it mean for a multivariable function to be monotone increasing/decreasing? This question is closely related to the existence of sets that cannot be totally ordered (totality means  that either $x\leq y$ or $y\geq x$ holds). An example is~$\bbR^n$ for $n>1$ which is only a partially ordered set. To this end, one defines the componentwise order~\cite{ghorpade2010course} of two $n$-tuples $(x_1,\dots,x_n)\leq(y_1,\dots,y_n)$ iff $x_i\leq y_i,\forall i$. A monotone increasing or decreasing function $f:\bbR^n\mapsto\bbR^m$ then satisfies $f(x_1,\dots,x_n)\leq f(x_1,\dots,x_n)$ and $f(x_1,\dots,x_n)\geq f(x_1,\dots,x_n)$, respectively.  The lack of this property (namely not decreasing) means that the strategy outlined in~\cite{zhao2009asymptotic} we follow here is simply not applicable.
\end{rem}

Coefficients, Eqs.~\eqref{eq:cdSpecCase}, are used to get the eigenvalues of~$\varpi$.  Following~\cite{waerden1966vol} (or~\href{https://en.wikipedia.org/wiki/Cubic_function}{Wikipedia} for a quick summary) we form
\begin{subequations}
  \begin{align}
    \D_0 & = b^2 - 3 a c=1 - 3 c,\\
    \D_1 & = 2 b^3 - 9 a b c + 27 a^2 d=-2 + 9 c + 27 d
  \end{align}
\end{subequations}
and define
\begin{subequations}\label{eq:pq}
  \begin{align}%
    p & =-{\D_0\over3}=\a+\b z^2,\label{eq:pFcn} \\
    q & ={\D_1\over27}=\g+\d z^2+\ve z^3\label{eq:qFcn}.
  \end{align}
\end{subequations}
They are the coefficients of a reduced cubic $t^3+pt+q$ the general cubic polynomial $f(x)$ can be converted to. The coefficients of $p,q$ from Eqs.~\ref{eq:pq} are given by
\begin{subequations}\label{eq:pqCoeffs}
  \begin{align}
    \a & = {1\over6}(1-3 p_0^2-3p_1^2-3p_2^2)\leq0,\label{eq:pqAlpha}\\
    \b & = -(p_0p_1+p_0p_2+p_1p_2)\leq0,\label{eq:pqBeta}\\
    \g & = {1\over27}(-2 + 9 p_0^2 - 9 p_0^3 + 9 p_1^2 - 9 p_1^3 + 9 p_2^2 - 9 p_2^3)\nn\\
       & = {1\over27}(3p_1-1)(3p_2-1)(3p_1+3p_2-2)\lessgtr0,\label{eq:pqGamma}\\
    \d & = {1\over27} \big(18 (p_0 p_1+p_0 p_2+ p_1p_2)- 27 (p_0^2 p_1+p_0 p_1^2+p_0^2 p_2+p_1^2 p_2+p_0p_2^2+p_1p_2^2)\big)\leq0 , \label{eq:pqDelta}\\
    \ve & = -2p_0p_1p_2\cos{\vt}\lessgtr0,\label{eq:pqEpsi}
  \end{align}
\end{subequations}
where we also summarized some basic properties based on $0\leq p_i\leq1,\sum_ip_i=1$. Then, the three roots (the eigenvalues of $\varpi$) are $x_k=t_k-b/(3a)=t_k+1/3$ where
\begin{equation}\label{eq:tkTrig}
    t_k=2\sqrt{-{p\over3}}\cos{\bigg({1\over3}\arccos{\bigg({3\over2}{q\over p}\sqrt{-{3\over p}}\bigg)}-{2k\pi\over3}\bigg)}.
\end{equation}
It is known~\cite{waerden1966vol} that
\begin{align}\label{eq:tkPropsSumZero}
  t_0+t_1+t_2 &= 0, \\
  t_0\geq t_1&\geq t_2\label{eq:tkPropsOrdered}
\end{align}
hold. Hence $x_0+x_1+x_2=1$ as we expect from $\Tr{}[\varpi]=1$ but $x_2\geq0$ is not satisfied for all $|z|$ and $\vt$. For example, if $\psi_1=e^{i\vp_1}\psi_0,\psi_2=e^{i\vp_2}\psi_0$ then $|z|=1$ and $\vt=\tau_{01}+\tau_{12}+\tau_{20}=0$. In general, it  turns out that  $x_2\geq0$ is equivalent to $q\leq{1\over27}+{p\over3}$ which provides a bound on $\vt$ given $|z|$. Indeed, for $|z|=1$ the only possibility is $\vt=0$.

Something much stronger can be said about the phases if $\psi_i$ are actual coherent states (either the signal states or the fiducial states we mentioned in the main text).
\begin{lem}\label{lem:phase}
  The  phase $\mathrm{Arg[\brk{\a_i}{\a_j}]}$ of an inner product of two coherent states $\ket{\a_i}$ and $\ket{\a_j}$ is a function of $|\brk{\a_i}{\a_j}|$.
\end{lem}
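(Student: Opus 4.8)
The plan is to compute the overlap $\brk{\a_i}{\a_j}$ in closed form and simply read off its modulus and its argument. Using $\ket{\a}=e^{-|\a|^2/2}\sum_{n}(\a^n/\sqrt{n!})\ket{n}$ one has
\[
  \brk{\a_i}{\a_j}=\exp\Big(-\tfrac12|\a_i|^2-\tfrac12|\a_j|^2+\bar\a_i\a_j\Big).
\]
By the conventions fixed in Section~\ref{sec:intro}, every coherent state appearing in the analysis — the signal states and the fiducial states alike — has the common amplitude $|\a_i|=r$, say $\a_i=r\,e^{i\s_i}$. Substituting this and abbreviating $\t\df\s_j-\s_i$ gives the factored form
\[
  \brk{\a_i}{\a_j}=\exp\!\big(-r^2(1-\cos\t)\big)\cdot\exp\!\big(i\,r^2\sin\t\big),
\]
so that $|\brk{\a_i}{\a_j}|=e^{-r^2(1-\cos\t)}=e^{-|\a_i-\a_j|^2/2}$ while $\mathrm{Arg}[\brk{\a_i}{\a_j}]=r^2\sin\t\pmod{2\pi}$.

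The next step is to eliminate $\t$. Since $r$ is a fixed parameter, knowing $|\brk{\a_i}{\a_j}|=m$ is equivalent to knowing $\cos\t=1+(\ln m)/r^2$, whence $\sin\t=\pm\big(1-(1+\tfrac{\ln m}{r^2})^2\big)^{1/2}$ and therefore
\[
  \mathrm{Arg}[\brk{\a_i}{\a_j}]=\pm\,r^2\Big(1-\big(1+\tfrac{\ln m}{r^2}\big)^2\Big)^{1/2}\pmod{2\pi},
\]
which is the asserted functional dependence. The residual sign is nothing but the ambiguity between $\brk{\a_i}{\a_j}$ and $\brk{\a_j}{\a_i}=\overline{\brk{\a_i}{\a_j}}$; it is immaterial downstream, because the phases enter the eigenvalue formula \eqref{eq:tkTrig} only through the combination $\vt=\tau_{01}+\tau_{12}+\tau_{20}$ and then only as $\cos\vt$, cf.~\eqref{eq:pqEpsi} — equivalently, one simply fixes once and for all the cyclic orientation in which the phase-space triangle is traversed.

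Finally I would record the geometric reading that is actually used below~\eqref{eq:innerProducts}. Writing $\a_i=x_i+iy_i$, the imaginary part $\mathrm{Im}(\bar\a_i\a_j)=x_iy_j-y_ix_j$ is twice the (signed) area of the triangle with vertices $0,\a_i,\a_j$, while $|\brk{\a_i}{\a_j}|$ is governed by the base length $c_{ij}\df|\a_i-\a_j|$. When $|\a_i|=|\a_j|=r$ that triangle is isosceles with legs $r$, so by SSS congruence its area — hence the phase — is already determined by $c_{ij}$ alone, namely $\tfrac14 c_{ij}(4r^2-c_{ij}^2)^{1/2}$; summing the three such terms over a triple $\a_0,\a_1,\a_2$ (all on the circle of radius $r$ about the origin) reproduces the area of the full triangle, i.e. Heron's formula~\eqref{eq:heron} in the side lengths, and hence $\vt$. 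The one point that genuinely carries the lemma — and the step I would flag as the essential one — is precisely this equal-amplitude hypothesis: without it the isosceles structure is lost, $|\a_i-\a_j|$ no longer fixes the triangle (translating both states by a common vector changes its area but not $|\a_i-\a_j|$), and the statement would fail. Since the lemma is only ever invoked for equal-amplitude coherent states, this costs nothing.
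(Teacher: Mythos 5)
Your proof is correct and follows essentially the same route as the paper: you write $\brk{\a_i}{\a_j}=e^{-c_{ij}^2/2}\,e^{i\,\mathrm{Im}(\bar{\a}_i\a_j)}$, identify the modulus with the chord length $c_{ij}$ and the phase with twice the (oriented) area of the triangle $0,\a_i,\a_j$, and conclude that the side lengths determine this area, with the residual sign being exactly the orientation ambiguity the paper also carries via its oriented areas $A_{ij}=-A_{ji}$. The only quibble is your closing claim that the equal-amplitude hypothesis is essential: the paper's argument needs only that $r_i$ and $r_j$ be known parameters (the area is fixed by the three sides $r_i,r_j,c_{ij}$ of a general, not necessarily isosceles, triangle), so the lemma does not fail for $r_i\neq r_j$ — it would fail only if the amplitudes themselves were left unspecified, which is what your translation example actually exhibits.
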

\begin{proof}
  Using elementary trigonometry we write
  \begin{equation}
    \brk{\a_i}{\a_j}=e^{-{1\over2}(r_i^2+r_j^2-2r_ir_j\cos{[\s_j-\s_i]})}e^{-i2r_ir_j\sin{[\s_j-\s_i]}}=e^{-{1\over2}c_{ij}^2}e^{-i2A_{ij}},
  \end{equation}
  where $c_{ij}$ a side of triangle opposite to the angle $\tau_{ji}=\s_j-\s_i$ between the sides $r_i$ and $r_j$ and $A_{ij}$ is the triangle area (it is oriented since $A_{ij}=-A_{ji}$). But knowing $r_i,r_j,c_{ij}$, we can easily calculate the area of the triangle and hence the phase $\mathrm{Arg{\,[\brk{\a_i}{\a_j}]}}=-2A_{ij}=2A_{ji}$. Hence the phase is much more constrained if $\psi_i$ are coherent states.
\end{proof}
This trivial statement (we could also use the relation between $\sin{}$ and $\cos{}$ to get the phase) has interesting consequences we exploited in Eq.~\eqref{eq:innerProducts}.

\subsection{Monotonocity of the ternary Shannon  entropy}

The following result will be a useful tool in the course of our analysis.
\begin{thm}[Descartes' rule of signs~\cite{meserve1982fundamental,henrici1974applied}]\label{thm:Descartes}
  Let $p(x)=\sum_{m=0}^{s}a_{n-m}x^{n-m}$ be a real polynomial of order $n$ where $s\leq n$ and $a_{n-m}\neq0$. Then the number of positive real zeros (including multiplicities) is equal to $V-2k$ where $k\geq0$ and $V$ is the number of sign variations of $a_{n-m}$ starting from $a_n$.
\end{thm}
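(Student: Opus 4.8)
The plan is to prove the sharper quantitative form: writing $Z$ for the number of positive real zeros of $p$ counted with multiplicity, I will show $Z\le V$ together with $Z\equiv V\pmod 2$, which is exactly the assertion $Z=V-2k$ for some integer $k\ge 0$. I would argue by strong induction on $n=\deg p$, combining Rolle's theorem with a parity count, and I would phrase the inductive statement so as to allow a vanishing constant term. First I would dispose of the case $a_0=0$: then $p(x)=x^j q(x)$ with $q(0)\ne 0$ and $\deg q<n$, and since $q$ and $p$ share the same nonzero coefficient string and the same positive zeros with the same multiplicities, the claim for $p$ follows from the inductive hypothesis for $q$. So I may assume $a_0\ne 0$ and, replacing $p$ by $-p$ if necessary, $a_n>0$.

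Next I would establish the parity statement $Z\equiv V\pmod 2$. Since $a_0\ne 0$, $\operatorname{sign} p(0)=\operatorname{sign} a_0$, while $\operatorname{sign} p(x)=\operatorname{sign} a_n>0$ for large $x$; as $p$ has only finitely many zeros, the number of positive zeros of $p$ of \emph{odd} multiplicity equals the number of times $p(x)$ changes sign as $x$ ranges over $(0,\infty)$, and that count is even precisely when $\operatorname{sign} a_0=\operatorname{sign} a_n$. Because zeros of even multiplicity contribute an even amount to $Z$, this gives: $Z$ is even $\iff\operatorname{sign} a_0=\operatorname{sign} a_n$. On the other hand, in any string of nonzero signs the number of variations is even iff the first and last signs agree, so $V$ is even $\iff\operatorname{sign} a_n=\operatorname{sign} a_0$ as well; hence $Z\equiv V\pmod 2$.

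Then comes the inequality. The coefficient of $x^{k-1}$ in $p'$ is $k a_k$, which has the sign of $a_k$, so the sign string of $p'$ is that of $p$ with the final entry $a_0$ removed; deleting the last entry of a sign string cannot raise the variation count, so $V(p')\le V(p)$. By Rolle's theorem --- using the standard bookkeeping that a zero of $p$ of multiplicity $m$ is a zero of $p'$ of multiplicity $m-1$ and that strictly between two distinct positive zeros of $p$ there lies a zero of $p'$ --- one has $Z(p)\le Z(p')+1$. Since $\deg p'=n-1<n$, the inductive hypothesis yields $Z(p')\le V(p')$, so $Z(p)\le V(p')+1\le V(p)+1$. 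Combined with $Z(p)\equiv V(p)\pmod 2$, the inequality $Z(p)\le V(p)+1$ forces $Z(p)\le V(p)$, and the matching parity then gives $Z(p)=V(p)-2k$ with $k\ge 0$. The base case $n=0$ is immediate, $Z=V=0$.

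Nothing here is genuinely deep --- this is a classical result --- so the ``hard part'' is entirely one of care: getting the multiplicity bookkeeping in Rolle's theorem exactly right, arguing the parity step through zeros of odd versus even multiplicity rather than a naive intermediate-value argument, and making sure the induction still applies to $p'$ even though $p'$ may acquire a vanishing constant term. If the parity step turned out to be the awkward one, the fallback I would use is the multiplicative route: peel off one positive zero, $p(x)=(x-r)q(x)$ with $r>0$, prove the lemma $V\bigl((x-r)q\bigr)\ge V(q)+1$ by a direct argument on runs of equal signs (with interspersed zeros) in the coefficient string, and induct on degree; this sidesteps parity but trades it for the combinatorial care needed in that lemma.
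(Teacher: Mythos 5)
Your argument is correct, and there is nothing in the paper to compare it against: Theorem~\ref{thm:Descartes} is stated as a classical result and attributed to the literature (Meserve, Henrici); the paper offers no proof of its own, since it only \emph{uses} Descartes' rule as a tool in Lemmas~\ref{lem:pqProps}, \ref{lem:pdgDiffBehavior}, \ref{lem:pdhDiffBehavior} and Lemma~\ref{lem:posPart}. What you give is the standard textbook proof --- strong induction on the degree, the observation that the nonzero-coefficient sign string of $p'$ is that of $p$ with the constant entry deleted (so $V(p')\le V(p)$), the Rolle/multiplicity count $Z(p)\le Z(p')+1$, and the parity step matching ``$Z$ even iff $\operatorname{sign}a_0=\operatorname{sign}a_n$'' against ``$V$ even iff the first and last signs agree'' --- and all the delicate points are handled correctly: the reduction when $a_0=0$, counting sign changes of $p$ on $(0,\infty)$ via odd-multiplicity zeros rather than a naive intermediate-value argument, and allowing $p'$ to acquire a vanishing constant term so the induction still applies. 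The only thing your write-up buys beyond the paper is self-containedness; conversely, the paper's choice to cite rather than prove is entirely reasonable for a result of this vintage, and your exact formulation (number of positive roots equal to $V-2k$, $k\ge0$, with multiplicity) matches the statement as the paper uses it.
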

\begin{lem}\label{lem:pqProps}
  Let $\ve\leq0$. Then $q(z)$ in~\eqref{eq:pFcn} is  monotone-decreasing and concave in $z\in(0,1)$ for all $p_k$. It has a single positive root $z^\#\in(0,1)$ iff $\g>0$ in which case $q(z)\geq0$ for $z\in(0,z^\#)$.
\end{lem}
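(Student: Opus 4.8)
The plan is to read every assertion off the three coefficients $\g,\d,\ve$ of the cubic $q(z)=\g+\d z^{2}+\ve z^{3}$ (given by~\eqref{eq:pqGamma}--\eqref{eq:pqEpsi}), using throughout that $p_0,p_1,p_2\ge0$, $p_0+p_1+p_2=1$, together with the two AM--GM estimates $p_0p_1p_2\le\tfrac1{27}$ and $(p_0+p_1+p_2)(p_0p_1+p_1p_2+p_2p_0)\ge9p_0p_1p_2$, each with equality only at the uniform point.

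\emph{Monotonicity and concavity.} These I would dispose of first, since they are immediate from the sign data alone: $q'(z)=z(2\d+3\ve z)$ and $q''(z)=2\d+6\ve z$, and because $\d\le0$ by~\eqref{eq:pqDelta} and $\ve\le0$ by hypothesis, one has $2\d+3\ve z\le2\d\le0$ and $2\d+6\ve z\le0$ for every $z\ge0$. Hence $q'\le0$ and $q''\le0$ on $(0,1)$, so $q$ is non-increasing and concave there.

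\emph{Endpoint values.} Obviously $q(0)=\g$. The one genuinely computational step — and the point I expect to be the real obstacle — is to simplify $q(1)=\g+\d+\ve$. A short manipulation of the symmetric polynomials in~\eqref{eq:pqGamma} and~\eqref{eq:pqDelta}, using $p_0+p_1+p_2=1$, collapses the first two terms to $\g+\d=2p_0p_1p_2-\tfrac2{27}$, so that
\[
  q(1)=2p_0p_1p_2-\tfrac2{27}+\ve .
\]
From $\ve\le0$ and $p_0p_1p_2\le\tfrac1{27}$ this gives $q(1)\le0$; moreover equality forces $\ve=0$ and $p_0=p_1=p_2=\tfrac13$, and at the uniform point $\g=0$ by~\eqref{eq:pqGamma}. (Without the identity for $\g+\d$, estimating $q(1)$ directly from~\eqref{eq:pqCoeffs} would be an unpleasant multivariate optimization, so the algebra here is where the work lies.)

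\emph{The root.} I would then assemble the pieces. If $\g\le0$, monotonicity gives $q(z)\le q(0)=\g\le0$ on $(0,1)$, and for $\g=0$ the factorisation $q(z)=z^{2}(\d+\ve z)$ with $\d,\ve\le0$ shows $q$ is either strictly negative on $(0,1)$ or identically zero; in neither case does $q$ possess a single positive root, so the condition $\g>0$ is necessary. Conversely, if $\g>0$ then $(p_0,p_1,p_2)$ is not the uniform point (where $\g=0$), so the second AM--GM inequality above is strict and, via~\eqref{eq:pqDelta}, $\d<0$; in particular $q$ is not a positive constant, and Descartes' rule of signs (Theorem~\ref{thm:Descartes}) applied to the nonzero-coefficient sign pattern $(\ve,\d,\g)=(-,-,+)$ — one sign variation — yields exactly one positive real zero of $q$. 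Since $q(0)=\g>0$ while $q(1)<0$ (the equality case in $q(1)\le0$ having been excluded once $\g>0$), the intermediate value theorem locates that unique positive zero $z^{\#}$ inside $(0,1)$; and because $q$ is non-increasing, $q(z)\ge q(z^{\#})=0$ for every $z\in(0,z^{\#})$, which is the final claim.
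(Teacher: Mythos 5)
Your proof is correct and follows essentially the same route as the paper's: monotonicity (and concavity) read off from the signs of $\d,\ve$ in $q'=2\d z+3\ve z^2$, Descartes' rule for the positive-root count, and the endpoint evaluation $q(1)=\g+\d+\ve\le\g+\d=2p_0p_1p_2-\tfrac{2}{27}\le0$ with $q(0)=\g$. You are in fact somewhat more thorough than the paper's own proof, which omits the concavity computation, the $\g=0$ case of the ``iff,'' and the strictness argument at $z=1$ that places $z^\#$ in the open interval $(0,1)$ rather than $(0,1]$.
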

\begin{proof}
The monotonicity of $q$ follows from
  $$
  q'=2\d z+3\ve z^2,
  $$
  since $\d,\ve\leq0$.  Because of Theorem~\ref{thm:Descartes} (or just by inspection), there is no positive root of $q(z)$ for $\g<0$ again following from $\d,\ve\leq0$. There is one positive root for $\g>0$ and it has to lie in the interval $(0,1]$ since $q(0)=\g>0$ and
  $$
  q(1)=\g+\d+\ve\leq\g+\d=-{2\over27}+2p_0p_1p_2\leq0
  $$
  valid  for all $p_k$.
\end{proof}
\begin{rem}
  Even more straightforward is to show $p<0$  in $z\in(0,1)$ (follows from Eqs.~\eqref{eq:pFcn},~\eqref{eq:pqAlpha} and~\eqref{eq:pqBeta} by considering $(p_0+p_1+p_2)^2=1$). The equality $p=0$ is achieved for $z=0$ and $p_0=p_1=p_2=1/3$ but in order to have future expressions well-defined we will consider the open interval $z\in(0,1)$ throughout this work.  Similarly, we find $p'\leq0$.
\end{rem}
It is useful to know the generic behavior of the central piece of the cubic solutions, Eq.~\eqref{eq:tkTrig}. That is uncovered in the following lemma.
\begin{lem}\label{lem:gDifBehavior}
  Let $\ve\leq0$ and
  \begin{equation}\label{eq:core}
    g(z)={3\over2}{q\over p}\sqrt{-{3\over p}}.
  \end{equation}
  Then $|g(z)|\leq1$, $g(z)\propto -q(z)$ and $g'\lessgtr0$ for $z\in(0,1)$.
\end{lem}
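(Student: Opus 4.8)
The plan is to extract all three assertions from a single rewriting of $g$, together with the facts $p<0$ and $p'\le0$ on $(0,1)$ (the remark following Lemma~\ref{lem:pqProps}) and the behaviour of $q$ recorded in Lemma~\ref{lem:pqProps}. Since $p<0$ on $(0,1)$ we have $-3/p>0$, and
\[
g(z)=\frac32\,\frac{q(z)}{p(z)}\sqrt{-\frac{3}{p(z)}}
   =-\frac{3\sqrt3}{2}\,\frac{q(z)}{\bigl(-p(z)\bigr)^{3/2}},
\]
so $g=c\,q$ with $c(z):=-\tfrac{3\sqrt3}{2}\bigl(-p(z)\bigr)^{-3/2}<0$ on $(0,1)$. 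The claim $g\propto-q$ then needs no work: $g$ is $q$ times the strictly negative factor $c(z)$, so $\operatorname{sign}g(z)=-\operatorname{sign}q(z)$ and $g$ has exactly the zeros of $q$. Feeding in Lemma~\ref{lem:pqProps}, when $\g>0$ this pins down a single zero $z^{\#}\in(0,1)$ with $g\le0$ on $(0,z^{\#})$ and $g\ge0$ on $(z^{\#},1)$, whereas $g\ge0$ throughout $(0,1)$ when $\g\le0$.

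For $|g|\le1$ I would square and recognise the discriminant. From the display, $g^{2}=-27q^{2}/(4p^{3})$, and since $p^{3}<0$ this says that $|g|\le1$ is equivalent to $-4p^{3}-27q^{2}\ge0$, i.e.\ to non-negativity of the discriminant of the depressed cubic $t^{3}+pt+q$. But that cubic is, by its very construction in Eqs.~\eqref{eq:CH}--\eqref{eq:tkTrig}, the depressed characteristic polynomial of $\varpi$ from~\eqref{eq:Adm}: its roots are $t_{k}=x_{k}-\tfrac13$, where the $x_{k}$ are the eigenvalues of the Hermitian (in fact positive semidefinite) operator $\varpi$ and are therefore real. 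A monic cubic with three real roots has discriminant $\prod_{i<j}(t_{i}-t_{j})^{2}\ge0$, which is exactly the bound wanted. This is the only step where it matters that we sit inside the physical parameter region -- equivalently, where $x_{2}\ge0$, as discussed below~\eqref{eq:tkTrig}; one could instead attempt the polynomial inequality $27q^{2}+4p^{3}\le0$ in $z$ head-on, but that is a degree-six computation I would rather sidestep, and I expect the ``honest state'' argument to be the cleanest route -- it is the mild obstacle of the lemma.

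Finally, for the sign of $g'$ I would differentiate $g=c\,q$. Since $p'\le0$ on $(0,1)$, $-p$ is non-decreasing, so $c'=-\tfrac{9\sqrt3}{4}(-p)^{-5/2}p'\ge0$; thus the negative factor $c$ is non-decreasing, and $g'=c'q+c\,q'$ with $q'=2\d z+3\ve z^{2}\le0$ by Lemma~\ref{lem:pqProps}. On the subinterval where $q\ge0$ (namely $(0,z^{\#})$ when $\g>0$) both summands are $\ge0$, so $g$ is non-decreasing there; on the subinterval where $q\le0$ one has $c'q\le0$ but $c\,q'\ge0$, so the sign of $g'$ is governed by the relative magnitudes of $p',q'$ against $p,q$ and is genuinely indefinite -- perturbing $(p_{0},p_{1},p_{2})$ slightly away from $(\tfrac13,\tfrac13,\tfrac13)$ in a direction that makes $\g<0$, with $z$ small, already forces $g'<0$. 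This absence of a fixed sign is precisely the statement $g'\lessgtr0$, and the only care needed is to keep the two cases apart so that the monotone behaviour on $\{q\ge0\}$ -- which the subsequent analysis uses -- is recorded explicitly; the rest is mechanical.
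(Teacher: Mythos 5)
Your proposal is correct and follows essentially the same route as the paper: $g\propto-q$ via the negative prefactor $\tfrac1p\sqrt{-3/p}$, the bound $|g|\le1$ from nonnegativity of the discriminant $-4p^3-27q^2$ of the depressed characteristic cubic of the Hermitian $\varpi$ (real eigenvalues), and the indefiniteness of $g'$ from the same derivative expression, since your $c'q+cq'$ is exactly the paper's numerator $2pq'-3p'q$ up to a positive factor. Your case split on the sign of $q$ (with $g'\ge0$ where $q\ge0$) is a slight repackaging of the paper's root-counting for $\nu_1=2pq'-3p'q$, not a different argument.
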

\begin{proof}
  The bound $|g(z)|\leq1$ follows from the cubic equation discriminant
   \begin{equation}\label{eq:discriminant}
            {q^2\over4}+{p^3\over27}\leq0,
    \end{equation}
  where the inequality is always true for the case of three real roots of a cubic equation~\cite{waerden1966vol}. This, on the other hand, must be true since $\varpi$ is a density matrix. Eq.~\eqref{eq:core} can be both positive and negative with its sign always opposite to that of $q(z)$. This is because ${1\over p}\sqrt{-{3\over p}}<0$ for $z\in(0,1)$ following from Lemma~\ref{lem:pqProps}. A~related useful fact is that for $\g<0$ we get $g(z)>0$ for $z\in(0,1)$. Finally, by writing
  \begin{equation}\label{eq:gDif}
    g'={3\sqrt{3}\over4}{2pq'-3p'q\over p^2\sqrt{-p}}
  \end{equation}
  and noticing that the denominator is nonnegative  we only need to study the behavior of $\nu_1(z)=2pq'-3p'q$. First, we find a zero root due to $\nu_1(z)=-2z\left(-2\a\d+3\b\g-3z\a\ve +z^2\b\d\right)$. The quadratic equation $-2\a\d+3\b\g-3z\a\ve +z^2\b\d=0$  yields two other real roots and, in general, they both may lie in the interval $(0,1)$. Only when  $\g\geq0$, one of the roots is negative.
\end{proof}
\begin{lem}\label{lem:taunDif}
  Let $\tau(z,n)=\sqrt{-p}\cos{h\over n}$ and $n\in\bbZ_{>1}$ such that $p,p'<0$, $0\leq h\leq\pi$ in $z\in(0,1)$ and $h'>0$ in $\euI\subset(0,1)$. Then ${\dif{\tau(z,n)}\over\dif{z}}>{\dif{\tau(z,2)}\over\dif{z}}$ in $\euI$.
\end{lem}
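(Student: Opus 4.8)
The plan is to differentiate $\tau(z,n)$ directly and reduce the claimed inequality to two elementary monotonicity facts about $\cos$ and $\sin$ on $[0,\pi/2]$. Writing $\tau(z,n)=\sqrt{-p}\,\cos(h/n)$, the product and chain rules give
\[
\frac{\dif{\tau(z,n)}}{\dif{z}}=\frac{-p'}{2\sqrt{-p}}\cos\frac{h}{n}-\sqrt{-p}\,\frac{h'}{n}\sin\frac{h}{n},
\]
so that $\dif{\tau(z,n)}/\dif{z}-\dif{\tau(z,2)}/\dif{z}=A+B$ with $A=\frac{-p'}{2\sqrt{-p}}\big(\cos\frac{h}{n}-\cos\frac{h}{2}\big)$ and $B=-\sqrt{-p}\,h'\big(\tfrac1n\sin\frac{h}{n}-\tfrac12\sin\frac{h}{2}\big)$. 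I would show that on $\euI$ both summands are nonnegative, with $A>0$.

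For $A$: since $n\geq2$ and $0\leq h\leq\pi$ we have $0\leq h/n\leq h/2\leq\pi/2$, and $\cos$ is (strictly) decreasing on $[0,\pi/2]$, so $\cos(h/n)-\cos(h/2)\geq0$; combined with $-p'>0$ and $\sqrt{-p}>0$ (from $p,p'<0$) this yields $A\geq0$. For $B$: the auxiliary claim is that $u\mapsto u\sin(hu)$ is increasing on $(0,1/2]$, which follows from its derivative $\sin(hu)+hu\cos(hu)\geq0$ because $hu\in[0,\pi/2]$; evaluating at $u=1/n\leq1/2$ gives $\tfrac1n\sin(h/n)\leq\tfrac12\sin(h/2)$, so the bracket in $B$ is $\leq0$ while the prefactor $-\sqrt{-p}\,h'<0$ on $\euI$ (using $h'>0$ there), hence $B\geq0$.

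Strictness comes from $h>0$ on $\euI$: since $h\geq0$ on $(0,1)$, a zero of $h$ at a point of $\euI\subset(0,1)$ would be an interior minimum and force $h'=0$ there, contradicting $h'>0$; hence $0<h/n<h/2<\pi/2$ strictly and the strict monotonicity of $\cos$ makes $A>0$. Therefore $\dif{\tau(z,n)}/\dif{z}-\dif{\tau(z,2)}/\dif{z}=A+B>0$ on $\euI$.

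There is essentially no hard obstacle here: the only nonroutine ingredient is the monotonicity of $u\mapsto u\sin(hu)$ on $(0,1/2]$, and the only place where care is needed is the strictness bookkeeping (ruling out $h=0$ on $\euI$). This lemma is the last preparatory estimate; it isolates the sign of the $z$-derivative of the central trigonometric factor of the cubic roots $t_k$ in Eq.~\eqref{eq:tkTrig}, which is precisely what is needed to push through the monotonicity proof for $h_3$.
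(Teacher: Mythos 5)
Your proof is correct and takes essentially the same route as the paper's: differentiate $\tau(z,n)$ and compare the cosine and sine contributions termwise with their $n=2$ counterparts, using $\cos{h\over n}\geq\cos{h\over2}$ and ${1\over n}\sin{h\over n}\leq{1\over2}\sin{h\over2}$ on $[0,\pi/2]$ (the paper invokes exactly these trigonometric monotonicities, citing Fig.~\ref{fig:acosaasin}). If anything your write-up is tighter, since you also justify strictness by ruling out $h=0$ on $\euI$ (which is only meaningful for $n>2$, the case actually used later with $n=3$), a point the paper leaves informal.
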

\begin{proof}
  We find
  \begin{equation}\label{eq:taunDif}
    {\dif{\tau(z,n)}\over\dif{z}}=\frac{2ph'\sin{h\over n}-np'\cos{h\over n}}{2n\sqrt{-p}}.
  \end{equation}
  The denominator is positive for $z\in(0,1)$ but there are two competing expressions in the numerator. The first summand is negative, the second one is nonnegative and so the overall sign may be hard to infer.  The inequality follows by observing that the nonnegative summand in the numerator of~\eqref{eq:taunDif} remains constant as $n$ increases while the negative one is divided by $n$ and so its overall contribution diminishes. Finally, $\sin{h\over n}$ and $\cos{h\over n}$ do not change their sign with a growing $n\geq2$ and, conveniently, $\sin{h\over n}>\sin{h\over n+1}$ holds together with $\cos{h\over n}<\cos{h\over n+1}$ for $n\geq2$ as illustrated in Fig.~\ref{fig:acosaasin}.
\end{proof}
  \begin{figure}[h]
   \resizebox{13cm}{!}{\includegraphics{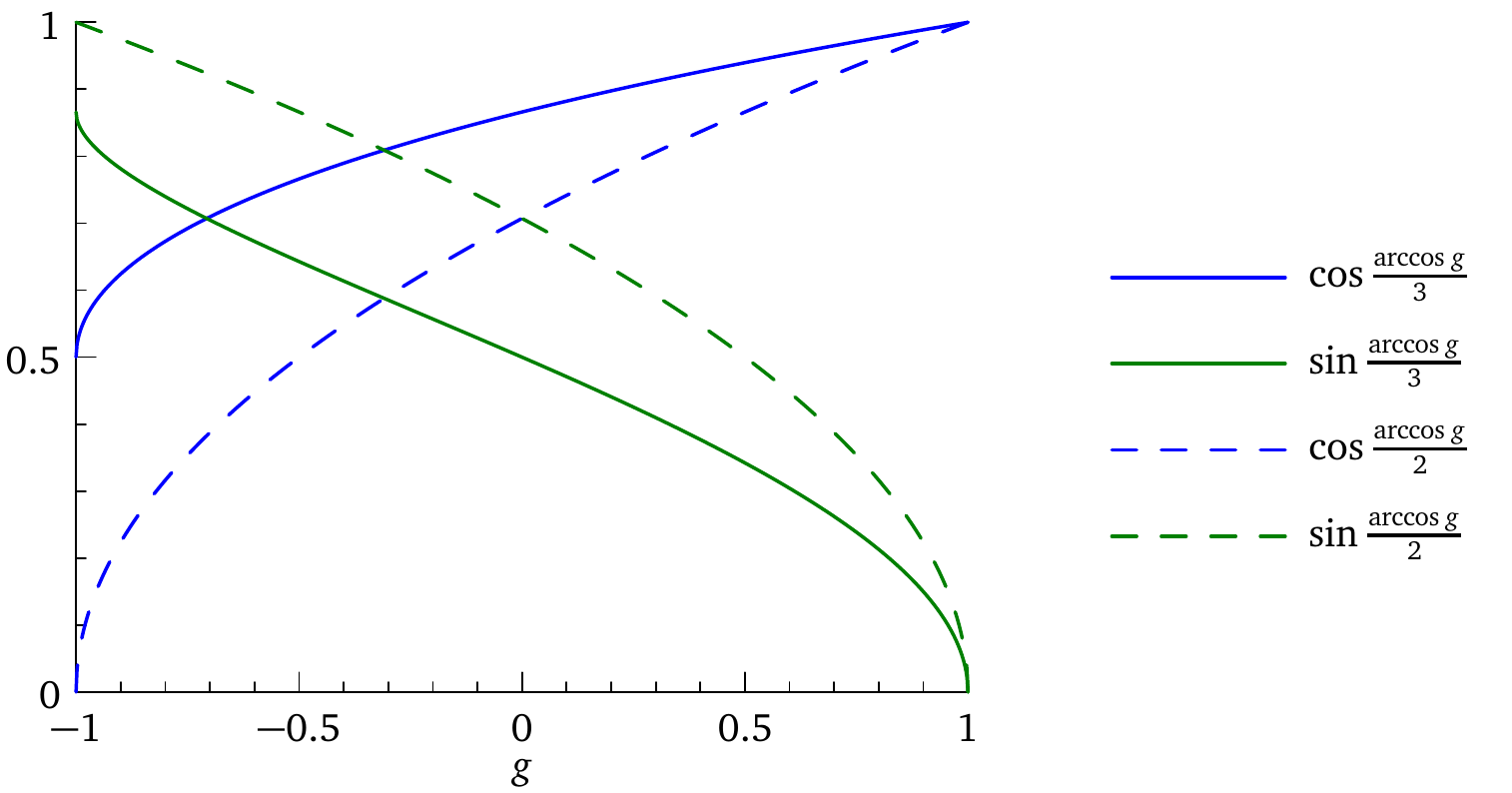}}
    \caption{Properties of some trigonometric functions.}
    \label{fig:acosaasin}
\end{figure}
\begin{prop}\label{prop:t0Dif}
  The function $t_0$ is monotone-increasing in $z\in(0,1)$ for $\ve\leq0$.
\end{prop}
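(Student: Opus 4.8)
The plan is to bypass differentiating the trigonometric expression~\eqref{eq:tkTrig} altogether and argue directly from the defining cubic. Recall that $t_0(z)$ is the largest real root of the monic reduced cubic $f_z(t)=t^{3}+p(z)\,t+q(z)$, and note first that $t_0(z)\geq 0$ on $(0,1)$: the roots sum to zero by~\eqref{eq:tkPropsSumZero} and $t_0$ is the largest by~\eqref{eq:tkPropsOrdered}, so $3t_0\geq t_0+t_1+t_2=0$. The only structural facts I will need about the coefficients are that both $p$ and $q$ are \emph{decreasing} on $(0,1)$: for $p$ this is the remark following Lemma~\ref{lem:pqProps} ($p'=2\b z\leq 0$ because $\b\leq 0$), and for $q$ it is precisely Lemma~\ref{lem:pqProps}, whose hypothesis $\ve\leq 0$ is the standing assumption of the proposition. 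Thus the assumption $\ve\leq 0$ enters only through the monotonicity of $q$.

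Granting this, the argument I have in mind is a one-line comparison of polynomials. Fix $z_1<z_2$ in $(0,1)$. For every $t\geq 0$,
\[
 f_{z_1}(t)-f_{z_2}(t)=\bigl(p(z_1)-p(z_2)\bigr)\,t+\bigl(q(z_1)-q(z_2)\bigr)\;\geq\;0,
\]
because both parenthesised differences are nonnegative and $t\geq 0$. Evaluating at $t=t_0(z_1)\geq 0$ gives $f_{z_2}\!\bigl(t_0(z_1)\bigr)\leq f_{z_1}\!\bigl(t_0(z_1)\bigr)=0$. Since $f_{z_2}$ is a monic cubic it is strictly positive on $\bigl(t_0(z_2),\infty\bigr)$, so $f_{z_2}(t)\leq 0$ forces $t\leq t_0(z_2)$; in particular $t_0(z_1)\leq t_0(z_2)$. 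As $z_1<z_2$ were arbitrary, $t_0$ is non-decreasing on $(0,1)$, which is what ``monotone-increasing'' means in our conventions.

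The only real decision is which of several equivalent routes to take, and I expect that to be the genuinely delicate point rather than the argument above. The lemmas immediately preceding the proposition are evidently designed for the alternative trigonometric route: write $t_0=\tfrac{2}{\sqrt 3}\,\tau(z,3)$ with $h=\arccos g$ (legitimate since $|g|\leq 1$, Lemma~\ref{lem:gDifBehavior}), verify the hypotheses of Lemma~\ref{lem:taunDif}, and deduce $\tau(z,3)'>\tau(z,2)'\geq 0$ on the relevant subinterval. The obstacle there is that $h'=-g'/\sqrt{1-g^{2}}$ need not keep a fixed sign---by Lemma~\ref{lem:gDifBehavior} the sign of $g'$ is governed by a quadratic, so Lemma~\ref{lem:taunDif} only applies on the subinterval $\euI$ where $g'<0$, and the complement must be treated separately---plus the minor chore of proving $\tau(z,2)'\geq 0$ through the half-angle identity $\tau(z,2)^{2}=-\tfrac12 p(1+g)$. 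A third route, implicit differentiation of $f_z(t_0)=0$, gives $t_0'=-(p't_0+q')/(3t_0^{2}+p)$ with $p't_0+q'\leq 0$ (so the right side is $\geq 0$, using $p',q'\leq 0$ and $t_0\geq 0$) and denominator $3t_0^{2}+p=2t_0^{2}+t_1t_2\geq 0$, but this requires patching the non-smoothness of $t_0$ at double roots $t_0=t_1$, where the denominator vanishes, by a continuity argument. The polynomial-comparison argument avoids both complications and uses nothing beyond monotonicity of $p$, $q$ and the elementary bound $t_0\geq 0$.
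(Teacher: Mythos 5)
Your argument is correct, and it is a genuinely different route from the one in the paper. The paper proves the proposition by differentiating the trigonometric representation \eqref{eq:tkTrig}: it writes out $t_0'$ as in \eqref{eq:t0Dif}, disposes of the case $g'\geq0$ directly, and for $g'<0$ invokes Lemma~\ref{lem:taunDif} together with the half-angle computation \eqref{eq:tau2Dif}, reducing everything to the monotonicity of $p(1+g)$ (which in turn rests on Lemma~\ref{lem:pqProps}). Your proof bypasses all of that: it uses only $p'\leq0$, the monotonicity of $q$ from Lemma~\ref{lem:pqProps} (the sole place where $\ve\leq0$ enters), the elementary bound $t_0\geq0$ from \eqref{eq:tkPropsSumZero}--\eqref{eq:tkPropsOrdered}, and the fact that a monic cubic is positive beyond its largest real root, so the comparison $f_{z_2}(t_0(z_1))\leq f_{z_1}(t_0(z_1))=0$ immediately forces $t_0(z_1)\leq t_0(z_2)$. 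What your route buys is brevity, no case split on the sign of $g'$, no differentiability concerns at multiple roots, and a visibly more general statement ($t_0$ is nondecreasing whenever $p$ and $q$ are both nonincreasing). What the paper's route buys is reusable machinery: the quantities $h=\arccos g$, \eqref{eq:hDif}, and the $\tau(z,n)$ comparison set up in this proof are recycled in Proposition~\ref{prop:t2Dif} (via $t_2(p,q)=-t_0(p,-q)$) and again in the concavity analysis. Note that your one-sided comparison does not transfer directly to $t_2$: there the relevant root is nonpositive, so the term $(p(z_1)-p(z_2))\,t$ changes sign and the difference of the two cubics is no longer of definite sign; so your proof is a clean drop-in replacement for this proposition, but not for the downstream results that lean on the paper's derivative formulas.
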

\begin{proof}
  From~\eqref{eq:tkTrig} we get
  \begin{equation}\label{eq:t0Dif}
    t'_0={2\over3}\frac{\sqrt{-p(z)} g'(z) \sin{\left(\frac{1}{3} \arccos{(g(z))}\right)}}{ \sqrt{1-g(z)^2}}
    -\frac{p'(z) \cos{\left(\frac{1}{3} \arccos{(g(z))}\right)}}{\sqrt{-p(z)}}.
  \end{equation}
  Both denominators are non-negative (check zero). Since $|g(z)|\leq1$ and $\range{[\arccos]}=[0,\pi]$, both trigonometric functions are nonnegative. So the second summand (without the minus) is always negative due to $p'\leq0$. The overall expression is thus positive whenever $g'\geq0$. But from Lemma~\ref{lem:gDifBehavior} we know that $g'<0$ can occur as well so let us assume that for the rest of the proof. For $h\df\arccos{g}$  we first observe
    \begin{equation}\label{eq:hDif}
    h'=-{g'\over\sqrt{1-g^2}}
  \end{equation}
  and so $h'>0$. We now summon Lemma~\ref{lem:taunDif} and for it to be useful we show  ${\dif{\tau(z,2)}\over\dif{z}}\geq0$. The case $n=2$ is special since we can use the half-angle formula $\cos{h\over2}=\sqrt{1+\cos{h}\over2}$ (valid for $-\pi\leq h\leq\pi$)  to be inserted into $\tau(z,2)=\sqrt{-p}\cos{h\over 2}$ and we get
  \begin{equation}\label{eq:tau2Dif}
    {\dif{\tau(z,2)}\over\dif{z}}={ph'\sin{h}-p'(1+\cos{h})\over2\sqrt{-2p(1+\cos{h})}}
    ={-pg'-p'(1+g)\over2\sqrt{-2p(1+g)}}=-{\left(p(1+g)\right)'\over2\sqrt{-2p(1+g)}}\geq0.
  \end{equation}
  The inequality $(p(1+g))'\leq0$  follows from $p(1+g)$ being monotone-decreasing since both $p$ and $pg\approx q\sqrt{-3/p}$ are  monotone-decreasing (from Lemma~\ref{lem:pqProps} we know that for $\ve\leq0$ the function $q\lessgtr0$ is monotone-decreasing and $\sqrt{-3/p}$  as well and because $\min_{p_k,z}\limits{[\sqrt{-3/p}]}=3$ then $\sqrt{-3/p}$ merely ``stretches''~$q$). A sum of decreasing functions is decreasing   which concludes the proof since according to Lemma~\ref{lem:taunDif} we have
  $$
  t_0'{\sqrt{3}\over2}\equiv{\dif{\tau(z,3)}\over\dif{z}}>{\dif{\tau(z,2)}\over\dif{z}}\geq0.
  $$
%
%
\end{proof}
\begin{prop}\label{prop:t2Dif}
    The function $t_2$ is monotone-decreasing in $z\in(0,1)$  for $\ve\leq0$.
\end{prop}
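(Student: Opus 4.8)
The plan is to reproduce the argument of Proposition~\ref{prop:t0Dif} for $t_2$ after rewriting $t_2$ as a ``$t_0$-type'' expression in $-g$. Starting from~\eqref{eq:tkTrig} with $k=2$ and abbreviating $a\df\tfrac13\arccos g\in[0,\tfrac\pi3]$, the elementary identities $\cos(a-\tfrac{4\pi}{3})=\cos(\tfrac{4\pi}{3}-a)=-\cos(\tfrac\pi3-a)$ together with $\tfrac\pi3-\tfrac13\arccos g=\tfrac13\arccos(-g)$ give
\[
 t_2=-\tfrac{2}{\sqrt3}\sqrt{-p}\;\cos\left(\tfrac13\arccos(-g)\right),
\]
i.e.\ $t_2$ is minus the function of Proposition~\ref{prop:t0Dif} with $g$ replaced by $-g$. (One could alternatively use $t_2=-(t_0+t_1)$ from~\eqref{eq:tkPropsSumZero}, but this would require the monotonicity of $t_1$, which is not available, so I proceed directly.) Because of this ``$g\mapsto-g$'' symmetry the whole structure of the proof carries over, with the roles of $g'\ge0$ and $g'\le0$ interchanged and an overall sign flip.

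First, differentiate. In $t_2'$ the $p'$-contribution carries the factor $\cos(\tfrac13\arccos(-g))\ge0$ (since $\tfrac13\arccos(-g)\in[0,\tfrac\pi3]$), hence is $\le0$ because $p'\le0$; the $g'$-contribution carries the nonnegative factor $\sin(\tfrac13\arccos(-g))$. Therefore, on the set where $g'\le0$ — the ``easy'' case, dual to $g'\ge0$ in Proposition~\ref{prop:t0Dif} — both terms are $\le0$ and $t_2'\le0$ at once. It remains to treat the ``hard'' set $\euI\df\{z\in(0,1):g'(z)>0\}$. There, put $\tilde h\df\arccos(-g)\in[0,\pi]$, so $\tilde h'=g'/\sqrt{1-g^2}>0$ on $\euI$, and set $\tilde\tau(z,n)\df\sqrt{-p}\cos(\tilde h/n)$, so that $t_2=-\tfrac{2}{\sqrt3}\tilde\tau(z,3)$; the goal is $\tilde\tau(z,3)'\ge0$ on $\euI$. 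Since $p,p'<0$ on $(0,1)$, $0\le\tilde h\le\pi$, and $\tilde h'>0$ on $\euI$, Lemma~\ref{lem:taunDif} (with $h\mapsto\tilde h$, $n=3$) gives $\tilde\tau(z,3)'>\tilde\tau(z,2)'$, so it suffices to show $\tilde\tau(z,2)'\ge0$. By the half-angle formula $\cos(\tilde h/2)=\sqrt{(1+\cos\tilde h)/2}=\sqrt{(1-g)/2}$, whence $\tilde\tau(z,2)=\sqrt{-p(1-g)/2}$ and, exactly as in~\eqref{eq:tau2Dif}, $\tilde\tau(z,2)'\ge0\iff\big(p(1-g)\big)'\le0$, i.e.\ $p(1-g)$ is monotone-decreasing.

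This reduction to the monotonicity of $p(1-g)$ is the main obstacle, and it is strictly harder than its counterpart for $t_0$. There the analogous quantity was $p(1+g)=p+pg$, and both summands are monotone-decreasing ($pg=\tfrac32q\sqrt{-3/p}$ via Lemma~\ref{lem:pqProps}, and $p'\le0$), so ``a sum of decreasing functions is decreasing'' closed the argument. Here we face the \emph{difference} $p(1-g)=p-pg$, and $-pg$ need not be decreasing, so that shortcut fails. The plan is to expand $\big(p(1-g)\big)'=p'(1-g)-pg'$ — in the hard case $p'(1-g)\le0$ while $-pg'\ge0$ — and prove that the good term dominates, equivalently $(-p)\,g'\le|p'|\,(1-g)$ on $\euI$. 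The new ingredient needed is the cubic discriminant bound~\eqref{eq:discriminant}, i.e.\ $|g|\le1$ (equivalently $q^2\le4(-p/3)^3$), which holds because $\varpi$ is a density matrix: writing $P\df-p/3>0$ one has $g=-q/(2P^{3/2})$ and $1-g=1+q/(2P^{3/2})\in[0,2]$, and after clearing denominators the inequality $(-p)g'\le|p'|(1-g)$ becomes an algebraic inequality in $z$, the $p_k$, and $\vt$ that is controlled by $q^2\le4P^3$ together with the explicit forms $q'=2\d z+3\ve z^2$, $p'=2\b z$ (so $\b,\d,\ve\le0$ when $\ve\le0$) and the concavity of $q$ from Lemma~\ref{lem:pqProps}. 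Granting this, $p(1-g)$ is decreasing on $\euI$, hence $\tilde\tau(z,2)'\ge0$ there, and $\tfrac{\sqrt3}{2}t_2'=-\tilde\tau(z,3)'\le-\tilde\tau(z,2)'\le0$ on $\euI$; combined with the easy case this shows $t_2'\le0$ throughout $(0,1)$.
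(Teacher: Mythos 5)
Your reduction is sound and it is essentially the paper's route: the identity $t_2(p,q)=-t_0(p,-q)$ (your rewriting $t_2=-\tfrac{2}{\sqrt3}\sqrt{-p}\,\cos\big(\tfrac13\arccos(-g)\big)$ is exactly this symmetry), the easy case $g'\le0$, and the hard case $\euI=\{z:g'(z)>0\}$ treated through Lemma~\ref{lem:taunDif} and the half-angle formula as in~\eqref{eq:tau2Dif}, which reduces everything to showing that $p(1-g)$ is decreasing on $\euI$. The problem is that you never prove this last step. You correctly note that the ``sum of two decreasing functions'' argument used for $p(1+g)$ in Proposition~\ref{prop:t0Dif} does not apply to the difference $p(1-g)=p-pg$, you reformulate the needed bound as $(-p)\,g'\le|p'|\,(1-g)$ on $\euI$, and then you assert that after clearing denominators it is ``controlled by'' the discriminant bound~\eqref{eq:discriminant}, the explicit forms of $p'$ and $q'$, and the concavity of $q$ from Lemma~\ref{lem:pqProps} --- followed by ``Granting this''. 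That is a plan, not a proof: no verification is given that the proposed algebraic inequality actually follows from $q^2\le 4(-p/3)^3$ together with $\b,\d,\ve\le0$, and this inequality is precisely where the content of the proposition lies, since the easy case is immediate. As written, the hard case is unresolved and the argument is incomplete.

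For comparison, the paper settles the proposition by invoking $t_2(p,q)=-t_0(p,-q)$, observing that $q\mapsto-q$ flips the signs of $g$ and $g'$ while leaving the trigonometric factors in~\eqref{eq:t0Dif} nonnegative, and asserting that the proof of Proposition~\ref{prop:t0Dif} already covers both cases $g'\ge0$ and $g'<0$, hence applies after the sign flip. Your unwinding shows that this transfer is not verbatim at the one substantive point: under $q\mapsto-q$ the quantity playing the role of $q$ in the bound $(p(1+g))'\le0$ is $-q$, which by Lemma~\ref{lem:pqProps} is increasing rather than decreasing, so the monotonicity of $p(1-g)$ needs its own justification. Identifying that difficulty is valuable, but flagging it is not the same as overcoming it; to complete your proof you must actually establish $(-p)\,g'\le|p'|\,(1-g)$ on $\euI$ (the computation you defer), or otherwise bound $\tilde\tau(z,2)'$ from below in the hard case.
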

\begin{proof}
  Considering $t_k$ in~\eqref{eq:tkTrig} as functions of $p$ and $q$, it is known~\cite{waerden1966vol} that $t_2(p,q)=-t_0(p,-q)$. The mapping $q\mapsto-q$ changes the sign of $g$ (and so of $g'$ as well, see~\eqref{eq:core} and~\eqref{eq:gDif}). The proof of Proposition~\ref{prop:t0Dif} then goes through in the same way as for $t_0(p,-q)$ since the trigonometric functions in~\eqref{eq:t0Dif} remain nonnegative for $-g$ and the proof ``covers'' both cases $g'\geq0$ and $g'<0$. Hence, $t'_0(p,-q)>0$  and we conclude that $t'_2<0$.
\end{proof}
\begin{cor}\label{cor:t0plust1incrs}
  The function $t_0+t_1$ is monotone-increasing in $z\in(0,1)$  for $\ve\leq0$.
\end{cor}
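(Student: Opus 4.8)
The plan is to sidestep any direct differentiation of $t_1$, since the middle root of a depressed cubic carries no clean monotonicity property on its own, and to lean instead on the trace normalisation together with the two propositions just proved. First I would invoke the identity $t_0+t_1+t_2=0$ from Eq.~\eqref{eq:tkPropsSumZero} -- which merely records $\Tr{}[\varpi]=1$ -- to rewrite the target quantity as $t_0+t_1=-t_2$. Second, I would quote Proposition~\ref{prop:t2Dif}: for $\ve\leq0$, the function $t_2$ is monotone-decreasing on $z\in(0,1)$. Negating a monotone-decreasing function produces a monotone-increasing one, so $t_0+t_1=-t_2$ is monotone-increasing on $(0,1)$ whenever $\ve\leq0$, which is the claim.

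There is no real obstacle left to clear: the delicate analysis -- that $g'$ may change sign on $(0,1)$ (Lemma~\ref{lem:gDifBehavior}), so that the sign of $t_0'$, and hence of $t_2'$, must be pinned down through the $n=2$ comparison of Lemma~\ref{lem:taunDif} rather than by inspection -- was already carried out inside Propositions~\ref{prop:t0Dif} and~\ref{prop:t2Dif}. The corollary is a one-line consequence of those results. Equivalently, one could write $t_0+t_1=-t_2=t_0(p,-q)$ and reuse the proof of Proposition~\ref{prop:t0Dif} with $q\mapsto-q$, exactly as in the proof of Proposition~\ref{prop:t2Dif}; the route through $t_2$ is simply the shortest. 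The reason it is this particular pair, and not $t_0$ alone or $t_0+t_2$, that is worth isolating is that $t_0+t_1$ supplies precisely the two largest eigenvalues $x_0,x_1$ of $\varpi$ that enter the ensuing monotonicity argument for the ternary Shannon entropy $h_3$.
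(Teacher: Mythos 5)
Your proposal is correct and follows the paper's own argument exactly: use $t_0+t_1+t_2=0$ to write $t_0+t_1=-t_2$ and then invoke Proposition~\ref{prop:t2Dif}. The additional remarks about reusing the $q\mapsto-q$ trick and the role of the two largest eigenvalues are fine but not needed for the corollary itself.
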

\begin{proof}
  From~Eq.~\eqref{eq:tkPropsSumZero} we find $t_0+t_1=-t_2$ and because $t_2$ is monotone-decreasing, its negative is monotone-increasing.
\end{proof}
\begin{rem}
  Notice that we do not claim anything about the monotonocity of $t_1$ and indeed it in general does not hold.
\end{rem}
\begin{defi}[\cite{roberts1974convex}]
  A function is called Schur-concave iff it is concave and symmetric.
\end{defi}
\begin{defi}
  Let $\vec{u}$ be an $\ell$-tuple for a non-increasingly ordered sequence $u_0\geq\hdots\geq u_{\ell-1}$ denoted as $u_k^\downarrow$ where $u_k\geq0$. We say that $\vec{u}$ is majorized by $\vec{v}$ (written as $\vec{u}\prec\vec{v}$) iff
  \begin{align}
    \sum_{k=0}^{m-1} u_k^\downarrow &\leq  \sum_{k=0}^{m-1} v_k^\downarrow, \label{eq:majorIneq}\\
    \sum_{k=0}^{\ell-1} u_k & = \sum_{k=0}^{\ell-1} v_k\label{eq:majorEq}
  \end{align}
  is satisfied for $0\leq m\leq\ell-1$.
\end{defi}
\begin{thm}[\cite{roberts1974convex}, Karamata~\cite{karamata1932inegalite}]\label{thm:SchurConc}
  If a function $f(u)$ is concave then $ f(\vec{u})\df\sum_k f(u_k)$ is Schur-concave and
  \begin{equation}\label{eq:reverseOrder}
    \vec{u}\prec\vec{v}\Rightarrow f(\vec{u})\geq f(\vec{v}).
  \end{equation}
\end{thm}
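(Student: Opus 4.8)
The plan is to prove the two claims of the theorem separately: first that $\Phi(\vec u)\df\sum_k f(u_k)$ is Schur-concave, i.e.\ concave and symmetric in the sense of the preceding definition, and then that it reverses the majorization order, \eqref{eq:reverseOrder}.

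The first claim I would dispatch immediately. Symmetry holds because the value of a finite sum is independent of the order of its summands. For concavity, observe that for each fixed $k$ the map $\vec u\mapsto f(u_k)$ is concave: it depends only on the single coordinate $u_k$, and concavity of $f$ gives $f(\lambda u_k+(1-\lambda)w_k)\ge\lambda f(u_k)+(1-\lambda)f(w_k)$ for all $\vec u,\vec w$ and $\lambda\in[0,1]$. A finite sum of concave functions is concave, so $\Phi$ is concave, hence Schur-concave by the stated definition.

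For the majorization reversal I would use the classical Abel-summation argument behind Karamata's inequality. Write both tuples in non-increasing order, $u_0\ge\dots\ge u_{\ell-1}$ and $v_0\ge\dots\ge v_{\ell-1}$, and set the partial sums $S_m\df\sum_{k=0}^{m-1}(v_k-u_k)$; the defining relations \eqref{eq:majorIneq} and \eqref{eq:majorEq} of majorization say exactly that $S_m\ge0$ for $0\le m\le\ell-1$ and $S_\ell=S_0=0$. Introduce chord slopes $c_k=\big(f(v_k)-f(u_k)\big)/(v_k-u_k)$, interpreted as an element of the superdifferential of $f$ at $u_k=v_k$ when the two coincide, so that in all cases $f(v_k)-f(u_k)=c_k(v_k-u_k)$. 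Then
\begin{equation}
  \Phi(\vec v)-\Phi(\vec u)=\sum_{k=0}^{\ell-1}c_k(S_{k+1}-S_k)=\sum_{k=1}^{\ell-1}S_k\,(c_{k-1}-c_k),
\end{equation}
the second equality being summation by parts together with $S_0=S_\ell=0$.

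What remains---and what I expect to be the crux---is the sign of $c_{k-1}-c_k$, which is where concavity is genuinely used. For a concave function the divided difference $\big(f(x)-f(y)\big)/(x-y)$ is non-increasing in each of its two arguments (the three-slope inequality for concave functions, with the coincident-argument cases covered by monotonicity of the superdifferential). Since the ordered tuples satisfy $u_{k-1}\ge u_k$ and $v_{k-1}\ge v_k$, this yields $c_{k-1}\le c_k$, hence $c_{k-1}-c_k\le0$; combined with $S_k\ge0$, every summand above is non-positive and therefore $\Phi(\vec v)\le\Phi(\vec u)$, which is \eqref{eq:reverseOrder}. The main obstacle is thus the slope-monotonicity step (and, relatedly, giving $c_k$ a meaning when $u_k=v_k$ for a merely concave, possibly non-differentiable $f$); the rest is the telescoping identity and bookkeeping with the majorization inequalities.
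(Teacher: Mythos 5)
The paper does not prove this statement at all: it is quoted as a known result with citations to Roberts--Varberg and to Karamata, and is then used as a black box in the proof of Theorem~\ref{thm:entrDecreasing}. Your argument is a correct, self-contained proof and is essentially the standard one from those references: symmetry plus concavity of the sum gives Schur-concavity in the sense of the paper's definition, and the Abel-summation (Karamata) argument with non-negative partial sums $S_k$ and non-increasing chord slopes $c_k$ gives the majorization reversal \eqref{eq:reverseOrder}. The only point to nail down is the one you flag yourself: when $u_k=v_k$ you should fix a \emph{consistent} selection from the superdifferential (e.g.\ always the right derivative $f'_+(u_k)$, which is non-increasing for concave $f$), since an arbitrary choice of supergradients at coinciding points could violate $c_{k-1}\le c_k$ when two consecutive pairs collapse to the same point; with that convention the three-slope inequality gives the monotonicity in all cases and the proof is complete.
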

\begin{rem}
The function $f(u)=-u\log{u}$  is concave in $u\in(0,1)$ and therefore the Shannon entropy  $S(\vec{u})\df-\sum_{k=0}^Ku_k\log{u_k}$ is a Schur-concave function. For $K=2$, we obtain the ternary Shannon  entropy $h_3(\vec{u})$, Eq.~\eqref{eq:ternaryEnt}.
\end{rem}
\begin{thm}\label{thm:entrDecreasing}
  The von Neumann entropy $H(\varpi)$ of density matrix~\eqref{eq:Adm} is a monotone-decreasing function of the overlap $z$ as introduced in~\eqref{eq:cdSpecCase}, for all $p_k$ and for all $0\leq\vt\leq\pi/2$ corresponding to $\ve\leq0$ in~\eqref{eq:pqEpsi}.
\end{thm}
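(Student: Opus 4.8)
The plan is to deduce Theorem~\ref{thm:entrDecreasing} from the monotonicity of the cubic roots obtained in Propositions~\ref{prop:t0Dif} and~\ref{prop:t2Dif} together with the Schur-concavity of the Shannon entropy recorded after Theorem~\ref{thm:SchurConc}. First I would rewrite the target quantity: by~\eqref{eq:tkTrig} the spectrum of $\varpi$ is $x_k=t_k+1/3$, and by~\eqref{eq:tkPropsOrdered} the vector $\vec x(z)=(x_0,x_1,x_2)$ is already non-increasingly ordered and -- because $\varpi$ is by construction a density matrix -- lies in the probability simplex; here $\vt$ is held fixed and enters only through $\ve$ in $q(z)$, while $p(z)$ does not depend on it. Hence $H(\varpi)=-\sum_{k=0}^2 x_k\log x_k=S(\vec x(z))$, and the claim is exactly that $z\mapsto S(\vec x(z))$ is non-increasing on $(0,1)$ in the regime $\ve\le0$, i.e.\ $0\le\vt\le\pi/2$ by~\eqref{eq:pqEpsi}.

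The key step is to show that for any fixed $\vt\in[0,\pi/2]$ and any $0<z_1<z_2<1$ one has the majorization $\vec x(z_1)\prec\vec x(z_2)$. Condition~\eqref{eq:majorEq} is immediate since $\sum_k x_k(z)=\Tr{}[\varpi]=1$ for every $z$. Of the partial-sum inequalities~\eqref{eq:majorIneq} (with $\ell=3$) the case $m=0$ is vacuous; the case $m=1$ reads $x_0(z_1)\le x_0(z_2)$, which is Proposition~\ref{prop:t0Dif} after adding the constant $1/3$; and the case $m=2$ reads $x_0(z_1)+x_1(z_1)\le x_0(z_2)+x_1(z_2)$, which -- using $t_0+t_1=-t_2$ from~\eqref{eq:tkPropsSumZero} -- is the monotone decrease of $t_2$, Proposition~\ref{prop:t2Dif} (equivalently Corollary~\ref{cor:t0plust1incrs}).

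To finish, I would invoke Theorem~\ref{thm:SchurConc}: since $u\mapsto-u\log u$ is concave on $(0,1)$, the Shannon entropy $S$ is Schur-concave, so~\eqref{eq:reverseOrder} yields $S(\vec x(z_1))\ge S(\vec x(z_2))$ whenever $z_1<z_2$, which is the asserted monotone decrease of $H(\varpi)$ in $z$. As a sanity check (and an alternative sidestepping majorization), one may differentiate directly at points where all $x_k>0$: using $t_0'+t_1'+t_2'=0$, the $z$-derivative of $H(\varpi)$ equals $-t_0'\log(x_0/x_1)+t_2'\log(x_1/x_2)$, which is $\le0$ because $t_0'\ge0\ge t_2'$ by Propositions~\ref{prop:t0Dif}--\ref{prop:t2Dif} while $x_0\ge x_1\ge x_2$ by~\eqref{eq:tkPropsOrdered}.

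Since Propositions~\ref{prop:t0Dif} and~\ref{prop:t2Dif} already carry the analytic weight, the remaining obstacle is organizational: one must confirm that the hypothesis $0\le\vt\le\pi/2$ is precisely the sign condition $\ve\le0$ those propositions require (clear from~\eqref{eq:pqEpsi}), that $\vec x(z)$ stays an ordered probability vector over the relevant $z$-range (guaranteed by~\eqref{eq:tkPropsOrdered} and the density-matrix property of $\varpi$), and -- the only genuinely conceptual point -- that the direction of majorization is the correct one: increasing the overlap $z$ makes the three states more nearly parallel, hence $\varpi$ spectrally more peaked, hence higher in the majorization order and therefore, by Schur-concavity, lower in entropy.
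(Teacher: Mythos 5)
Your proposal is correct and follows essentially the same route as the paper's own proof: you establish the majorization $\vec{x}(z_1)\prec\vec{x}(z_2)$ from Proposition~\ref{prop:t0Dif} and Proposition~\ref{prop:t2Dif} (equivalently Corollary~\ref{cor:t0plust1incrs}), check normalization and that $\vec{x}(z)$ is an ordered probability vector, and conclude via Schur-concavity through Theorem~\ref{thm:SchurConc}. The added derivative-based sanity check is a harmless extra not present in the paper.
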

\begin{proof}
  The eigenvalues of $\varpi$ are $x_0\geq x_1\geq x_2$ and satisfy $\sum_kx_k=1$. From the discussion below Eq.~\eqref{eq:tkPropsSumZero} we know when $x_2\geq0$ holds and so $(x_k)_{k=0,1,2}$ is a probability distribution. We will identify $\vec{u}=\vec{x}(z_1)$ and $\vec{v}=\vec{x}(z_2)$ where $0< z_1\leq z_2<1$. Then, Proposition~\ref{prop:t0Dif} and Corollary~\ref{cor:t0plust1incrs} imply~\eqref{eq:majorIneq}. Since  $\sum_{k=0}^2u_k=\sum_{k=0}^2v_k=1$ holds (so~\eqref{eq:majorEq} is satisfied) we may write $\vec{u}\prec\vec{v}$. Following Theorem~\ref{thm:SchurConc} we obtain $h_3(\vec{u})\geq h_3(\vec{v})$ and $H(\varpi)\equiv h_3$ concludes the proof.
\end{proof}

\subsection{Concavity of the ternary Shannon  entropy}
We now turn our attention to the concavity proof. We start by a proving the convexity of $t_0$ in~\eqref{eq:tkTrig} in $z$. To that end, we set $\tau(z,n)=\sqrt{-p}\cos{h\over n}$ as in Lemma~\ref{lem:taunDif} and study the properties of its second derivative
\begin{align}\label{eq:taunDif2}
  {\diff{\tau(z,n)}\over\dif{z^2}}&={1\over4(-p)^{3/2}}
  \bigg(
  \cos{h\over n}\,{-4p^2h'^2+n^2(2pp''-p'^2)\over n^2}+\sin{h\over n}\,{-4p(ph')'\over n}
  \bigg).
\end{align}
To show ${\diff{\tau(z,3)}\over\dif{z^2}}\geq0$ we have to separately investigate several different cases. We will need a couple of auxiliary results.

The proof of concavity will be presented for $\ve\leq0$. The next lemma is the only exception where $\ve$ is arbitrary.
\begin{lem}\label{lem:posPart}
  We find
  $$
  {-4p^2h'^2+n^2(2pp''-p'^2)\over n^2}\geq0
  $$
  for $p$ in~\eqref{eq:pFcn}, $h=\arccos{g}$ for $g$ given by~\eqref{eq:core} and $n=3$.
\end{lem}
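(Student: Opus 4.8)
The plan is to turn the asserted inequality, using the identities already set up above, into an explicit polynomial inequality in $z$ and then check it. The constant part is immediate: with $n=3$ and $p=\a+\b z^2$ from~\eqref{eq:pFcn} one has $p'=2\b z$, $p''=2\b$, so $2pp''-p'^2=4\a\b$, independent of $z$ and, by~\eqref{eq:pqAlpha}--\eqref{eq:pqBeta} ($\a,\b\le0$), nonnegative. Hence the claim is equivalent to $36\a\b\ge4p^2h'^2$, that is to $9\a\b\ge p^2h'^2$, and it remains to bound $p^2h'^2$ from above.

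Next I would rewrite $p^2h'^2$ purely in terms of the cubic coefficients $p,q$. From $h=\arccos g$ and~\eqref{eq:hDif}, $h'^2=g'^2/(1-g^2)$. Squaring~\eqref{eq:core} gives $g^2=27q^2/\big(4(-p)^3\big)$, so $1-g^2=\D/\big(4(-p)^3\big)$ with $\D\df-(4p^3+27q^2)=4(-p)^3-27q^2$, the discriminant of the reduced cubic $t^3+pt+q$ (distinct from $\D_0,\D_1$); its nonnegativity is exactly~\eqref{eq:discriminant}, and I would work on the open set where $\D>0$ so that $h$ is smooth. Squaring~\eqref{eq:gDif} gives $g'^2=\tfrac{27}{16}(2pq'-3p'q)^2/(-p)^5$. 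Multiplying, the powers of $(-p)$ collapse and $p^2h'^2=\tfrac{27}{4}(2pq'-3p'q)^2/\D$, so the claim becomes the \emph{polynomial} inequality $4\a\b\,\D\ge3(2pq'-3p'q)^2$. Inserting the factorization $2pq'-3p'q=-2z\big(-2\a\d+3\b\g-3\a\ve z+\b\d z^2\big)$ obtained in the proof of Lemma~\ref{lem:gDifBehavior}, this reduces to
\[
  \a\b\,\D\ \ge\ 3z^2\,\mu(z)^2,\qquad \mu(z)\df-2\a\d+3\b\g-3\a\ve z+\b\d z^2 .
\]

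It remains to verify this. Expanding $\D=4(-p)^3-27q^2$ through $p=\a+\b z^2$ and $q=\g+\d z^2+\ve z^3$ (Eq.~\eqref{eq:qFcn}) turns both sides into explicit degree-six polynomials in $z$. The key structural observation is that $q$ (hence $q^2$ and $-27q^2$) and $\mu$ are both affine in $\ve$, so, since $\a\b\ge0$, the function $\a\b\,\D-3z^2\mu^2$ is \emph{concave} in $\ve$, equivalently in $\cos\vt$ (recall $\ve=-2p_0p_1p_2\cos\vt$ from~\eqref{eq:pqEpsi}); hence its minimum over the admissible range of $\vt$ is attained at an endpoint, and at those extreme values (in particular at $\cos\vt=\pm1$) it reduces to a family, parametrized by $(p_0,p_1,p_2)$ on the simplex $\sum_ip_i=1$, of one-variable degree-six polynomial inequalities in $z\in(0,1)$, which I would settle by an explicit nonnegative factorization using $\a,\b,\d\le0$ from~\eqref{eq:pqCoeffs}. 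For instance, for $p_0=\tfrac12$, $p_1=p_2=\tfrac14$ the quantity $\a\b\,\D-3z^2\mu^2$ comes out proportional to $z^2(1-z)^2(1+15z^2)$ at $\cos\vt=1$ and to $z^2(1+2z+16z^2+30z^3+15z^4)$ at $\cos\vt=-1$, both visibly nonnegative. I expect this last step to be the real obstacle: after the reduction the polynomial carries coefficients of indefinite sign (the odd powers of $z$ pick up factors $\g$ and $\ve$), so no term-by-term comparison works and one needs the right organizing device --- the concavity collapse in $\cos\vt$, then a sum-of-squares certificate or an explicit factorization over the simplex --- to make nonnegativity manifest; everything upstream of it is routine manipulation of identities already in hand.
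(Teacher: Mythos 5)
Your reduction is correct, but note that it lands exactly on the paper's own target rather than simplifying it: since $2pp''-p'^2=4\a\b$ and $p^{2}h'^{2}=\tfrac{27}{4}(2pq'-3p'q)^{2}/\D$ with your $\D=-(4p^{3}+27q^{2})$, your inequality $4\a\b\,\D\ge 3(2pq'-3p'q)^{2}$ is, via the algebraic identity $\a\b\,\D-3z^{2}\mu^{2}=(-p)\,\nu_4(z)$, equivalent to $\nu_4\ge0$ on $(0,1)$, where $\nu_4$ is precisely the quartic of Eq.~\eqref{eq:nu4}; your endpoint ``degree-six polynomials'' are just $(-p)\nu_4$ (your sample point confirms this: there $-p\propto 1+15z^{2}$, and the remaining factor $z^{2}(1-z)^{2}$ is $\nu_4$ with its double root at $z=1$). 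So the step you leave open --- nonnegativity of these polynomials for \emph{every} $(p_0,p_1,p_2)$ on the simplex and every $z\in(0,1)$ --- is not a finishing detail but the entire content of the lemma, and you establish it only at the single point $(\tfrac12,\tfrac14,\tfrac14)$ while conceding that the required factorization or sum-of-squares certificate is ``the real obstacle.'' That is a genuine gap. It is exactly where the paper does its work: for the extremal value $\ve=-2p_0p_1p_2$ it combines Descartes' rule of signs (Theorem~\ref{thm:Descartes}), the computations $\nu_4(1)=\nu_4'(1)=0$, $\nu_4''(1)\ge0$, and the vanishing discriminant of $\nu_4$ to conclude that $z=1$ is a double root and that $\nu_4$ has no root, hence no sign change, in $(0,1)$.

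On the positive side, your structural observation is correct and genuinely useful: $\a\b\,\D-3z^{2}\mu^{2}$ is concave in $\ve$ (equivalently in $\cos\vt$, since $\a\b\ge0$ and both $q$ and $\mu$ are affine in $\ve$), so nonnegativity at the two extreme values $\ve=\mp2p_0p_1p_2$ would imply it for all admissible $\vt$. This interpolation is arguably cleaner than the paper's argument that no new positive root can enter $(0,1)$ as the coefficients move with $\ve$, and than its $\ve^{2}$-symmetry remark for $\pi/2<\vt\le\pi$. But it only replaces that comparatively soft part of the proof; both endpoint families over the simplex still have to be certified, and until you supply that certificate the lemma remains unproved along your route.
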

\begin{proof}
  Using~\eqref{eq:core} in~\eqref{eq:hDif} the inequality becomes
  \begin{align}
     &27(3qp'-2pq')^2-9(4p^3+27q^2)(2pp''-p'^2)=(\a+\b z^2)\nu_4(z)\leq0,
  \end{align}
  where
  \begin{equation}\label{eq:nu4}
    \nu_4(z)=\b(4 \a^3 +27\g^2)+z^2(8 \a^2 \b^2+12\a\d^2+18\b\g\d)+36z^3\a\d\ve+z^4\big(\a(4 \b^3+27 \ve^2)+3\b\d^2\big).
  \end{equation}
  Since $\a+\b z^2\equiv p\leq0$ we have to show that $\nu_4(z)$ is nonnegative for $z\in(0,1)$. Theorem~\ref{thm:Descartes} reveals a lot of information about $\nu_4$ through its coefficients. Depending on the sign of $\ve$ we find from~\eqref{eq:pqCoeffs}
    \begin{center}
       \begin{tabular}{@{} *5cl @{}}    \toprule
        \emph{monomial degree} & \emph{sign for $\ve<0$} & \emph{sign for $\ve>0$} \\\midrule
         4  & $+$  &  $+$  \\
         3  &  $-$   &  $+$ \\
         2  & $\pm$  & $\pm$\\
         0  & $+$  & $+$ \\ \bottomrule
         \hline
        \end{tabular}
    \end{center}
  No matter what the signs of $8 \a^2 \b^2+12\a\d^2+18\b\g\d$ and $\ve$ are there are always two sign changes. Hence $\nu_4(z)$ has two or none positive roots (for $\ve=0$ the second row from the top is missing but still there can be two or none positive roots). Let's first assume $\ve=-2p_0p_1p_2$ (its minimal value given by $\vt=0$). In this case we observe
  $$
  \nu_4(1)=4\a^3\b+8\a^2\b^2+\a\big(4\b^3+3 (2\d +3 \ve)^2\big)+3\b(3\g+\delta )^2=0.
  $$
  Since $\nu_4'(1)=0$ as well and $\nu_4''(1)=4\left(4\a^2\b^2+3\a(4 \b^3+2 \d^2+18\d\ve +27 \ve^2)+9\b\d(\g+\d)\right)\geq0$ the point $z=1$ is a proper local minimum. The expression $\nu'_4$ is a cubic polynomial. Hence it has three roots: one of them is always zero and the greatest one always equals one (the one we found previously). The third root can be both positive or negative and whatever its position is we want to make sure that $\nu_4\geq0$ in the interval $(0,1)$. Recall that according Theorem~\ref{thm:Descartes} there must be another positive root of $\nu_4$. At first sight it seems impossible because if the third root of $\nu_4'$ is negative then the segment of $\nu_4$ in $(0,1)$ must be decreasing ($\nu_4(0)=\b(4 \a^3 +27\g^2)\geq0$). Even if the third root of $\nu_4'$ lies in $(0,1)$ it can be either a~\emph{positive} local maximum or a stationary point. This is because we showed that $z=1$ is a local minimum, $\nu_4'$ has only three roots and again because of $\nu_4(0)\geq0$. So where is the remaining positive root? The only possibility is that $z=1$ is a double root. Indeed, by calculating the discriminant~\cite{waerden1966vol} of $\nu_4$ we find it to be equal to zero. This means that at least two roots coincide. Hence $\nu_4\geq0$ holds for $\ve=-2p_0p_1p_2$.

  For $0<\vt\leq\pi/2$ the coefficients of the monomials of order 3 and 4 in~\eqref{eq:nu4} clearly increase and hence no new root can appear in the interval $(0,1)$. For $\pi/2<\vt\leq\pi$, the monomial order 4 coefficient decreases but $\ve^2\propto\cos^2{\vt}$ is a symmetric function and we have seen that $\nu_4(z)$ had no positive root even when $\ve<0$. But now $\ve>0$ and so again there is no positive root which concludes the proof.
\end{proof}
\begin{rem}
  The claim holds for any $n\geq3$ but we do not make use of it.
\end{rem}

\begin{lem}\label{lem:pdgDiffBehavior}
  The function $(pg')'(z)$ has a single positive root $z^*$  whenever $\g(z)\geq0$ and  $(pg')'(z)\leq0$ for $z\in(0,z^*)$.
\end{lem}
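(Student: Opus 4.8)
The plan is to analyze $(pg')'$ directly by exploiting the factorization already obtained in the proof of Lemma~\ref{lem:gDifBehavior}. Recall that there we wrote $g' = {3\sqrt{3}\over4}\,{\nu_1(z)\over p^2\sqrt{-p}}$ with $\nu_1(z)=2pq'-3p'q = -2z\big(-2\a\d+3\b\g-3z\a\ve+z^2\b\d\big)$. Hence $pg' = {3\sqrt{3}\over4}\,{\nu_1(z)\over p\sqrt{-p}} = {3\sqrt{3}\over4}\,{\nu_1(z)\over (-p)^{3/2}}\cdot(-1)$, up to a fixed sign, so that $pg'$ has the form (constant)$\cdot\,\nu_1(z)\,(-p)^{-3/2}$. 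Differentiating a product of this type, $(pg')' $ is (positive factor)$\cdot\big(\nu_1'(z)(-p) - \tfrac32\nu_1(z)(-p)'\big)(-p)^{-5/2}$, so up to a positive multiplier the sign of $(pg')'$ is the sign of a polynomial $\mu(z)\df \nu_1'(z)\,p(z) - \tfrac32\nu_1(z)\,p'(z)$ (note $(-p)'=-p'$, and $p<0$, $p'\le 0$ on $(0,1)$ by the remark after Lemma~\ref{lem:pqProps}). So the whole question reduces to: $\mu(z)$ has exactly one positive root $z^*$, and $\mu(z)\le 0$ on $(0,z^*)$, whenever $\g\ge 0$.

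Next I would compute $\mu$ explicitly from $p=\a+\b z^2$ and $\nu_1 = -2z(-2\a\d+3\b\g-3z\a\ve+z^2\b\d)$. Both $p$ and $\nu_1$ are known low-degree polynomials with coefficients $\a,\b,\g,\d,\ve$ whose signs are catalogued in~\eqref{eq:pqCoeffs} (with $\ve\le 0$, and $\g\ge 0$ the standing hypothesis). The result $\mu(z)$ is a polynomial of degree at most $5$; since $\nu_1$ has a factor of $z$ and $p$ has only even powers, one checks $\mu(0)=0$ is forced by the $z$-factor in $\nu_1$, so $\mu(z)=z\,\tilde\mu(z)$ for a degree-$\le 4$ polynomial $\tilde\mu$. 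Then I would apply Descartes' rule of signs (Theorem~\ref{thm:Descartes}) to $\tilde\mu(z)$: the plan is to read off the signs of its coefficients under $\ve\le 0,\ \g\ge 0$ and show the sign-variation count equals $1$, which yields exactly one positive root and hence (together with the sign of $\tilde\mu(0)$, which should be $\le 0$, giving $\mu\le 0$ just to the right of $0$) the claimed single crossing $z^*$ with $\mu\le 0$ on $(0,z^*)$. The endpoint value $\tilde\mu(1)$ (equivalently $\mu(1)$) should come out $\ge 0$, confirming $z^*\in(0,1]$ and pinning down the direction of the crossing; this parallels the $q(1)\le 0$ and $\nu_4(1)=0$ computations done earlier.

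The main obstacle I anticipate is the middle-degree coefficients of $\tilde\mu$: just as the degree-$2$ coefficient of $\nu_4$ in~\eqref{eq:nu4} had indefinite sign ($\pm$ in the table), some coefficients of $\tilde\mu$ will not have a fixed sign from~\eqref{eq:pqCoeffs} alone. The fix, following the pattern of Lemma~\ref{lem:posPart}, is that Descartes' rule only needs the \emph{number} of sign variations to be $1$ regardless of how the indefinite coefficients fall, so I would check that in every sign pattern consistent with $\ve\le 0,\ \g\ge 0$ the variation count stays exactly $1$ (e.g. leading coefficient one sign, constant term the opposite, and the indefinite ones sandwiched so they can add at most $2k$ extra roots that are then excluded by a boundary evaluation). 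If a stubborn case survives, the fallback is the same discriminant trick used for $\nu_4$: show the relevant resultant/discriminant vanishes so that a root sits exactly at $z=1$ and does not intrude into $(0,1)$. One should also double-check the degenerate locus $p_0=p_1=p_2=1/3$ (where $\a=\b z^2$-type cancellations occur and $\g=0$), but there $\g\ge 0$ is an equality and the hypothesis of the lemma is vacuous or handled by continuity, so it causes no trouble.
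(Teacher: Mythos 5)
Your reduction is essentially the paper's: your $\mu=\nu_1'p-\tfrac32\nu_1p'$ is exactly one half of the numerator $\nu_2(z)=q(9p'^2-6pp'')+4p(-2p'q'+pq'')$ that the paper extracts from the explicit formula for $(pg')'$ in Eq.~\eqref{eq:pDgDif}, and the paper then finishes precisely as you intend: Descartes' rule (Theorem~\ref{thm:Descartes}) applied to that polynomial, using the coefficient signs from~\eqref{eq:pqCoeffs}, plus a boundary/asymptotic sign observation to conclude $(pg')'\leq0$ to the left of the unique positive root. (Your intermediate display calls the prefactor positive while the bracket you wrote equals $-\mu$; the prefactor is in fact negative, so your final identification $\mathrm{sign}[(pg')']=\mathrm{sign}[\mu]$ is the correct one.)

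There is, however, one step that is simply false: $\mu(0)\neq0$ in general, so you cannot factor $\mu(z)=z\,\tilde\mu(z)$. The factor $z$ in $\nu_1$ kills the term $\tfrac32\nu_1p'$ at $z=0$ (as does $p'(0)=0$), but it does not survive differentiation: $\mu(0)=p(0)\,\nu_1'(0)=\a(4\a\d-6\b\g)$, which is the (generally nonzero) constant term, equal to $\tfrac12(8\a^2\d-12\a\b\g)$ of the paper's $\nu_2$, and it is $\leq0$ under $\g\geq0$. Also $\mu$ has degree $3$, not $5$ (the quartic terms cancel). Neither error is fatal to the strategy: apply Descartes directly to $\mu$. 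Under $\a,\b,\d,\ve\leq0$ and $\g\geq0$ every coefficient has a determined sign, namely $-6\a\b\ve\geq0$ for $z^3$, $12\b^2\g-14\a\b\d\geq0$ for $z^2$, $12\a^2\ve\leq0$ for $z$, and $4\a^2\d-6\a\b\g\leq0$ for the constant, giving exactly one sign variation, hence exactly one positive root $z^*$; since the leading coefficient is nonnegative and the constant term nonpositive, $\mu\leq0$ on $(0,z^*)$, which is the claim. In particular the complication you anticipate (middle coefficients of indefinite sign, with a discriminant fallback as in Lemma~\ref{lem:posPart}) does not arise here, and the evaluation at $z=1$ is only needed for the paper's subsequent remark that $z^*\in(0,1)$, not for the lemma itself.
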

\begin{proof}
  We calculate
  \begin{equation}\label{eq:pDgDif}
    (pg')'={3\sqrt{3}\over8}{q(9p'^2-6pp'')+4p(-2p'q'+pq'')\over\sqrt{-p}p^2}.
  \end{equation}
  The position of the positive roots is unaffected by the numerical prefactors or by the denominator. Hence, we rewrite only the numerator $\nu_2(z)=q(9p'^2-6pp'')+4 p (-2p'q'+pq'')$ in terms of Eqs.~\eqref{eq:pFcn} and~\eqref{eq:qFcn}:
  \begin{equation}
    \nu_2(z)=-12\alpha\beta\ve z^3+\left(24 \beta ^2 \gamma -28 \alpha  \beta  \d\right)z^2+24\a^2\ve z+8\a^2\d-12\a\b\g.
  \end{equation}
  Given $\a,\b,\d,\ve\leq0$ and $\g>0$ there is only one sign change if $8\a^2\d-12\a\b\g\leq0$. This is indeed satisfied for $\g>0$. The observation
  $$
   \lim_{z\to+\infty}{[(pg')']}=+\infty
  $$
  concludes the proof.
\end{proof}
\begin{rem}
  In fact, we can refine the previous lemma by calculating
  $$
  \lim_{z\to0}{[(pg')']}=-\frac{3}{2} \sqrt{3} \left(-\frac{1}{\alpha }\right)^{3/2} (2 \alpha  \delta -3 \beta  \gamma )\leq0
  $$
  and expressing $\nu_2(z)$ with the help of~\eqref{eq:pqCoeffs} and $\sum_{i}p_i=1$ as
    \begin{equation}
      \nu_2(1)={8\over9}\big(p_0^4+2 p_0^3 (-1+p_1)+(-1+p_1)^2 p_1^2+p_0p_1 (-1-p_1+2 p_1^2)+p_0^2 (1-p_1+3 p_1^2)\big)\geq0.
    \end{equation}
  Therefore, $z^*\in(0,1)$. The minimum on the RHS is achieved for $p_0=p_1=p_2=1/3$.
\end{rem}
\begin{lem}\label{lem:pdhDiffBehavior}
  The function $(ph')'(z)$ has a single positive root for $z\in(0,1)$ whenever $\g\geq0$ and for all $\ve\leq0$.
\end{lem}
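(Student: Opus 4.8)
The plan is to proceed exactly as in the proof of the companion result Lemma~\ref{lem:pdgDiffBehavior}, since $(ph')'$ and $(pg')'$ are closely tied through the identity $h' = -g'/\sqrt{1-g^2}$ established in~\eqref{eq:hDif}. First I would write $ph' = -pg'/\sqrt{1-g^2}$ and differentiate, obtaining
$$
(ph')' = -\frac{(pg')'}{\sqrt{1-g^2}} - \frac{pg'\cdot gg'}{(1-g^2)^{3/2}}
= \frac{1}{(1-g^2)^{3/2}}\Big(-(1-g^2)(pg')' - p g g'^2\Big).
$$
Since $(1-g^2)^{3/2}>0$ on $z\in(0,1)$ (using $|g|\le 1$ from Lemma~\ref{lem:gDifBehavior}, with strict inequality in the open interval), the positive roots of $(ph')'$ coincide with those of the bracketed factor $\nu_3(z)\df -(1-g^2)(pg')' - p g g'^2$. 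The strategy is then to clear denominators in $\nu_3$ so as to express it as a genuine polynomial in $z$ with coefficients built from $\a,\b,\g,\d,\ve$, and apply Descartes' rule of signs (Theorem~\ref{thm:Descartes}) together with sign evaluations at the endpoints $z\to 0^+$ and $z=1$, precisely mirroring the remark following Lemma~\ref{lem:pdgDiffBehavior}.

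The key steps, in order: (i) reduce to the polynomial $\nu_3$ as above, multiplying through by the appropriate power of $p^2$ (or $(-p)$) to kill the remaining rational pieces coming from $g = \tfrac32 (q/p)\sqrt{-3/p}$ and $g' = \tfrac{3\sqrt3}{4}(2pq'-3p'q)/(p^2\sqrt{-p})$; note that $g^2 = \tfrac{27 q^2}{-4 p^3}$ is rational in $z$, so $1-g^2 = \tfrac{4p^3+27q^2}{4p^3}$ and no square roots survive after scaling. (ii) Substitute $p = \a + \b z^2$, $q = \g + \d z^2 + \ve z^3$ and collect the resulting polynomial in $z$; because each ingredient is low-degree, $\nu_3$ (suitably scaled) will be a polynomial of modest degree whose coefficients one can sign using $\a,\b,\d\le 0$, $\g\ge 0$, $\ve\le 0$ exactly as in~\eqref{eq:pqCoeffs}. (iii) Check the leading coefficient sign and the constant term sign, and verify there is exactly one sign variation (or that the variation count forces a single positive root via the $V-2k$ bookkeeping of Descartes). (iv) Evaluate $\nu_3$ at $z=1$ and, if needed, the limit as $z\to 0^+$, to confirm the sign pattern $\nu_3(0^+)\le 0$, $\nu_3(1)\ge 0$ (as was done for $\nu_2$), which localizes the root inside $(0,1)$. (v) Conclude that $(ph')'$ has a single positive root in $(0,1)$ under the hypotheses $\g\ge 0$, $\ve\le 0$.

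The main obstacle I anticipate is step (ii)--(iii): after clearing denominators the polynomial $\nu_3$ will be of higher degree than $\nu_2$ (the $(1-g^2)$ factor contributes $4p^3+27q^2$, itself degree $6$ in $z$, and the $p g g'^2$ term contributes another sizeable chunk), so the coefficient list is long and the sign of the middle coefficients is a priori ambiguous in the same way the degree-$2$ coefficient of $\nu_4$ was ambiguous in Lemma~\ref{lem:posPart}. The resolution I expect to use is the same trick that worked there: if Descartes' rule only bounds the number of positive roots (two-or-zero after some cancellation), one pins it down by computing the discriminant of $\nu_3$ and showing a double root occurs at $z=1$ — exactly as in the $\nu_4$ argument — or, more directly here, by noting that for $\g>0$ one has $q>0$ on an initial segment and $(pg')'\le 0$ there by Lemma~\ref{lem:pdgDiffBehavior}, while $-pgg'^2$ and $-(1-g^2)$ have controlled signs (recall $g\propto -q$, so $g>0$ when $q>0$), which forces $\nu_3\le 0$ near $0$; combined with $\nu_3(1)\ge 0$ and the variation count this yields uniqueness. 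In short, the arithmetic is heavier than for Lemma~\ref{lem:pdgDiffBehavior} but the logical skeleton — reduce to a polynomial, apply Descartes, check endpoints, invoke the discriminant if the count is loose — is identical.
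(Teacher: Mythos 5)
Your reduction is exactly the paper's: your identity for $(ph')'$ is Eq.~\eqref{eq:DpDh} (in fact your version is the correct one --- the term $gpg'$ there should read $pgg'^2$, cf.~\eqref{eq:preLowerBound}), and after clearing the $\sqrt{-p}$ factors the numerator is the degree-seven polynomial $\nu_5$ of Eq.~\eqref{eq:nu5deg7}. The genuine gap is in how you propose to count its positive roots. A direct Descartes count on this polynomial does not work: under $\a,\b,\d,\ve\leq0$, $\g\geq0$ the coefficients of $z^7$, $z^6$ and $z^4$ each contain terms of both signs, so the number of sign variations is simply not determined. The paper's escape is not your fallback: it first specializes to the extremal value $\ve=-2p_0p_1p_2$ (i.e.\ $\vt=0$), where $z=1$ is a \emph{triple} root, factors $\nu_5=f_4(z-1)^3$ with $f_4$ quartic, and applies Descartes to $f_4$, whose sign pattern $(-,?,+,+,+)$ gives one variation irrespective of the undetermined coefficient; it then needs a second, separate deformation argument --- a coefficient-by-coefficient monotonicity-in-$\ve$ analysis of \eqref{eq:nu5deg7}, using inequalities such as $4\a^3+27\g^2\leq0$ --- to show that raising $\ve$ from the extremal value to an arbitrary $\ve\leq0$ only shifts that root and cannot introduce new roots in $(0,1)$. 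Your proposed ``discriminant shows a double root at $z=1$'' step only exists at that extremal $\ve$; for generic $\ve\leq0$ the point $z=1$ is not a root of $\nu_5$ at all, so without the deformation step your argument does not cover the stated range of $\ve$.

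Your alternative sign-forcing route also fails as written. Since $g\propto-q$ (Lemma~\ref{lem:gDifBehavior}), $q>0$ gives $g<0$, not $g>0$ as you state; and on the initial segment where $q>0$ the two terms of your bracket have \emph{opposite} signs: $-(1-g^2)(pg')'\geq0$ (granting $(pg')'\leq0$ there, which is itself the nontrivial inclusion $z^\#\leq z^*$ of Proposition~\ref{prop:chainImpl}(i), proved only later via the $\tilde{q}$-trick and not an immediate consequence of Lemma~\ref{lem:pdgDiffBehavior}), while $-pgg'^2\leq0$ because $pg\geq0$. So nothing forces your claimed $\nu_3\leq0$ near $z=0$; in fact the paper establishes the opposite sign, $\nu_5(0)=2(4\a^3+27\g^2)(2\a\d-3\b\g)\leq0$, i.e.\ $(ph')'(0)\geq0$. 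Finally, even with correct endpoint signs, an intermediate-value argument yields existence of a root, not the uniqueness the lemma asserts --- and uniqueness is precisely the part that forces the factorization-plus-perturbation machinery above.
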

\begin{proof}
  We write
  \begin{equation}\label{eq:DpDh}
    (ph')'=-\frac{gpg'+(1-g^2)(pg')'}{(1-g^2)^{3/2}}
  \end{equation}
  and after inserting Eqs.~\eqref{eq:gDif},~\eqref{eq:core} and
  \begin{equation}\label{eq:gDif2}
    g''(z)=\frac{3\sqrt{3}\left(4p(p q''-3p'q')+3q(5p'^2-2pp'')\right)}{8(-p)^{-3/2}p^5}
  \end{equation}
  we get an expression whose numerator reads
  \begin{equation}\label{eq:nu5}
    \nu_5=-81q^3 p''+54q^2\left(p'q'+pq''\right)+q\left(-12p^3p''+18p^2p'^2-54pq'^2\right)+8p^4q''-16p^3p'q'
  \end{equation}
  and whose denominator is negative in $(0,1)$. By inserting~Eqs.~\eqref{eq:pq} we get a daunting polynomial of degree seven:
  \begin{align}\label{eq:nu5deg7}
        \nu_5&=z^7 \left(-24\a\b^3\ve-162\a\ve^3-54\b\d^2\ve \right)+z^6\left(-56\a\b^3\d -378\a\d\ve^2+48\b^4\g +324\b\g\ve^2-54\b\d^3\right)\nn\\
        &\quad+z^5 \left(324 \b  \g  \d  \ve -324 \a \d^2 \ve \right)+z^4 \left(-96 \a^2 \b^2 \d +72 \a \b^3 \g +162 \a \g  \ve^2-108 \a \d^3-54 \b\g\d^2\right)\nn\\
        &\quad+z^3 \left(72 \a^3 \b  \ve +216 \a \g  \d  \ve +162 \b  \g^2 \ve \right)+z^2 \left(-24 \a^3 \b  \d -162 \b  \g^2 \d \right)+
        z\left(48 \a^4 \ve +324 \a \g^2 \ve \right)\nn\\
        &\quad+16\a^4\d-24\a^3\b\g+108\a\g^2\d-162\b\g^3.
  \end{align}
  Let us first assume $\ve=-2p_0p_1p_2$ which is the minimal value given by $\vt=0$. Then, there is an inflection point at $z=1$: $\nu_5'(1)=\nu_5''(1)=0$. This indicates a triple root (corroborated by the zero discriminant indicating multiple roots) and so
  \begin{equation}
       \nu_5=f_4(z-1)^3,
  \end{equation}
  where $f_4=\sum_{i=0}^4a_iz^i$. By comparing the coefficients with~\eqref{eq:nu5deg7} we deduce the coefficient $a_i$ and get
  \begin{align}\label{eq:f4}
          f_4&= -6z^4 \ve \left(4 \a \b^3+27 \a \ve^2+9 \b \d ^2\right)\nn\\
          &\quad+2z^3 \left(-4 \a \b^3 (7 \d +9 \ve)-27 \a \ve^2 (7 \d +9 \ve)+24 \b^4 \g-27\b(-6\g\ve^2+\d^3+3 \d^2 \ve)\right)\nn\\
          &\quad-6 z^2 (4 \a^3+27 \g^2) (\a (4 \d +6 \ve)-\b (6 \g+\d ))\nn\\
          &\quad-6 z (4 \a^3+27 \g^2) (2 \a (\d +\ve)-3 \b \g)\nn\\
          &\quad-2(4\a^3+27\g^2)(2\a\d-3\b\g).
  \end{align}
  With the help of the following table
    \begin{center}
       \begin{tabular}{@{} *5cl @{}}    \toprule
        \emph{monomial degree} & \emph{sign}  \\\midrule
         4  & $-$      \\
         3  &  ? \\
         2  & $+$   \\
         1  & $+$   \\
         0  & $+$   \\\bottomrule
         \hline
        \end{tabular}
        \label{tab:nu5}
    \end{center}
  Theorem~\ref{thm:Descartes} reveals that there is only one positive root. Note that the degree three coefficient of~\eqref{eq:f4} seems too complicated to analytically deduce its sign but our ignorance does not affect the number of sign variations. Now we show that by for any $\ve\leq0$ the single root shifts and the inflection disappears. We inspect the coefficients of~\eqref{eq:nu5deg7} where $\ve$ appears. Considering $\g\geq0$, the ones accompanying the monomials $z,z^3$ and $z^5$ satisfy
  \begin{subequations}
    \begin{align}
      48 \a^4 \ve +324 \a \g^2 \ve & \leq 0, \\
      72 \a^3 \b  \ve +216 \a \g  \d  \ve +162 \b  \g^2 \ve & \leq 0,\\
      324 \b  \g  \d  \ve -324 \a \d^2 \ve & \leq 0.
    \end{align}
  \end{subequations}
  Hence an increase of $\ve$ from its minimal values to any $\ve\leq0$ will not add a new root in $(0,1)$.  Similarly for the $z^7$ coefficient $\ve(-24\a\b^3-162\a\ve^2-54\b\d^2)$ which, due to
  \begin{equation}\label{eq:nu5z7coeff}
    -24\a\b^3-162\a\ve^2-54\b\d^2\geq0
  \end{equation}
  (valid only for the minimal $\ve$), is an increasing function of $\ve\leq0$. This is because $\a\leq0$ and so~\eqref{eq:nu5z7coeff} is a decreasing function of $\ve\leq0$ (\eqref{eq:nu5z7coeff} can become negative). Even if~\eqref{eq:nu5z7coeff} does not change the sign, the $z^7$ coefficient  will always be greater than the one with the minimal $\ve$ because  the overall multiplication by $\ve\leq0$ swaps the sign (and so the order). Finally, the $z^4$ and $z^6$ coefficients contain negative factors accompanying $\ve^2$ (recall $\a,\b,\d\leq0$ and $\g\geq0$ by assumption). Hence, as $\ve^2$ decreases, it effectively increases the coefficients of $z^4$ and $z^6$. We can conclude that no new root for $z\in(0,1)$ appears for $\ve\leq0$.
\end{proof}
\begin{prop}\label{prop:chainImpl}
  The following relations hold:

    \centering
      \begin{tikzcd}[row sep=large, column sep=normal]
       q\geq0 \arrow[r,Leftrightarrow,"\mathrm{Lemma~\ref{lem:gDifBehavior}}" inner sep=1ex] &   g\leq0 \arrow[r,Rightarrow,"\mathrm{(i)}"]\arrow[d,Rightarrow,"\mathrm{(ii)}"] & (pg')'\leq0 \\
         & (ph')'\geq0 &
    \end{tikzcd}
\end{prop}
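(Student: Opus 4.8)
The plan is to read the diagram as three assertions on $z\in(0,1)$ under the standing hypothesis $\ve\leq0$, and to discharge each by invoking a lemma already in hand rather than by re-running the polynomial bookkeeping. For the biconditional, Lemma~\ref{lem:gDifBehavior} gives $g=\tfrac32\tfrac qp\sqrt{-3/p}$, and by Lemma~\ref{lem:pqProps} (and its following remark) $p<0$ on $(0,1)$, so the factor $\tfrac1p\sqrt{-3/p}$ is strictly negative there; hence $g(z)$ and $q(z)$ have opposite signs at every point, which is exactly $q\geq0\Leftrightarrow g\leq0$. Moreover, since $q$ is decreasing from $q(0)=\g$ (Lemma~\ref{lem:pqProps}), the event ``$q\geq0$ somewhere on $(0,1)$'' --- equivalently ``$g\leq0$ somewhere'' --- is precisely the regime $\g\geq0$, which is also the hypothesis of Lemmas~\ref{lem:pdgDiffBehavior} and~\ref{lem:pdhDiffBehavior}; I would state this equivalence explicitly at the outset, since the two right-hand nodes of the diagram are to be read in that regime.

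For implication (i) I would simply invoke Lemma~\ref{lem:pdgDiffBehavior}: when $\g\geq0$ its proof shows, via Descartes' rule (Theorem~\ref{thm:Descartes}), that the numerator $\nu_2$ of $(pg')'$ has a single sign change --- positive leading coefficient and constant term $8\a^2\d-12\a\b\g\leq0$ --- so $(pg')'$ has a unique positive root $z^\ast$, which lies in $(0,1)$ by the refinement remark ($\nu_2(1)\geq0$), and $(pg')'\leq0$ on $(0,z^\ast)$; in particular $(pg')'$ is nonpositive as $z\to0^{+}$, and this is the content of the arrow. It is worth flagging that $(pg')'\leq0$ is \emph{not} claimed on all of $\{g\leq0\}$: the root $z^{\#}$ of $q$ (Lemma~\ref{lem:pqProps}) may exceed $z^\ast$, and then $g\leq0<(pg')'$ on $(z^\ast,z^{\#})$, so the arrow only records the sign structure near the origin, wired to $g$ through the biconditional.

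For implication (ii) I would start from~\eqref{eq:DpDh}: its denominator $(1-g^2)^{3/2}$ is positive on $(0,1)$, so $(ph')'$ has the opposite sign to the numerator $g\,p\,g'+(1-g^2)(pg')'$, equivalently to $\nu_5$ of~\eqref{eq:nu5}. Lemma~\ref{lem:pdhDiffBehavior} gives that for $\g\geq0$, $\ve\leq0$ this numerator has a single positive root in $(0,1)$, so it suffices to pin its sign at one endpoint, and $z=0^{+}$ is convenient: there $g'(0)=0$ --- the factor $\nu_1=2pq'-3p'q$ in~\eqref{eq:gDif} vanishes at $0$ --- so the term $g\,p\,g'$ drops, while the refinement remark after Lemma~\ref{lem:pdgDiffBehavior} gives $\lim_{z\to0^{+}}(pg')'=-\tfrac32\sqrt3(-1/\a)^{3/2}(2\a\d-3\b\g)\leq0$ (since $2\a\d\geq0\geq3\b\g$ when $\g\geq0$) and $1-g(0)^2\geq0$; hence the numerator is $\leq0$ near $0$, i.e.\ $(ph')'\geq0$ there, and together with the single-root statement this yields $(ph')'\geq0$ on the initial subinterval.

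The main obstacle, and the only genuinely delicate point, is this interval-matching and sign-pinning: ``single positive root'' does not by itself decide which side is positive, so (ii) really does need the endpoint computation of the numerator of~\eqref{eq:DpDh} on top of Lemma~\ref{lem:pdhDiffBehavior}, and one must resist asserting either inequality pointwise on all of $\{g\leq0\}$, since $z^\ast\leq z^{\#}$ can be strict. Everything else is bookkeeping --- the heavy Descartes-and-discriminant work already lives in Lemmas~\ref{lem:gDifBehavior}, \ref{lem:pdgDiffBehavior} and~\ref{lem:pdhDiffBehavior} --- and the proposition's role is only to package the chain $q\geq0\Leftrightarrow g\leq0\Rightarrow(pg')'\leq0,\ (ph')'\geq0$ in the form the forthcoming convexity argument for $t_0$, through the two numerator pieces of~\eqref{eq:taunDif2}, will invoke.
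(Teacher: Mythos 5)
Your handling of the biconditional is fine, but your treatment of (i) skips the entire content of the proposition. The arrow $g\leq0\Rightarrow(pg')'\leq0$ is a pointwise implication on $(0,1)$, and it is used later in exactly that form (its contrapositive $(ph')'\leq0\Rightarrow g\geq0$ drives Lemma~\ref{lem:posPart2} and hence the convexity proof of Proposition~\ref{prop:t0Dif2}); it cannot be reinterpreted as ``sign structure near the origin.'' In the notation you yourself introduce, the claim is precisely $z^\#\leq z^*$, i.e.\ the positive root of $q$ never exceeds the positive root of $(pg')'$ --- the very scenario you flag as possible ($z^*<z^\#$, with $g\leq0<(pg')'$ on $(z^*,z^\#)$) is what must be ruled out, not conceded. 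Lemma~\ref{lem:pdgDiffBehavior} alone only says $(pg')'\leq0$ on $(0,z^*)$ and is silent on where $z^*$ sits relative to $z^\#$. The paper closes exactly this gap: it compares $\nu_2|_{q=0}$ with the rescaled $\tilde q=-6\b q$ of~\eqref{eq:qtilde}, reduces the comparison to the linear root $z_\ell$ of~\eqref{eq:linRoot}, and shows $\nu_2(z_\ell)|_{q=0}\leq0$ via the factorization $\nu_3=\tfrac1{27}f_1f_2$ with $f_1\geq0$ and $\max f_2=0$ subject to $\g\geq0$, $0\leq z_\ell\leq1$ (Eqs.~\eqref{eq:f1f2}). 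Nothing in your proposal substitutes for this argument, so implication (i) is not established.

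Implication (ii) is likewise incomplete, on two counts. First, pinning the sign of the numerator of~\eqref{eq:DpDh} at $z\to0^{+}$ is more delicate than you allow: as the remark following the proposition points out, one frequently has $g'(0)=0$ \emph{together with} $1-g(0)^2=0$, so ``the $gpg'$ term drops and $1-g^2\geq0$ does the rest'' does not determine the sign near $0$; the paper instead evaluates the polynomial $\nu_5(0)=2(4\a^3+27\g^2)(2\a\d-3\b\g)\leq0$ directly (the first factor is nonpositive by the discriminant inequality~\eqref{eq:discriminant} at $z=0$). Second, and more importantly, a sign at $0^{+}$ plus the single-root statement of Lemma~\ref{lem:pdhDiffBehavior} only gives $(ph')'\geq0$ on some initial interval up to that root; you still owe the argument that this interval covers all of $\{g\leq0\}=(0,z^\#]$. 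The paper supplies it by evaluating at the other endpoint: where $g=0$, Eq.~\eqref{eq:DpDh} gives $(ph')'=-(pg')'\geq0$ by item (i), so the unique positive root of $(ph')'$ is forced into the region $g>0$. This is precisely where the pointwise version of (i) --- the part your proposal dropped --- is needed; without it, (ii) does not follow either.
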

\begin{proof}\mbox{}\\
  (i) The sought after implication can be reformulated in the language of Lemma~\ref{lem:pqProps} and~\ref{lem:pdgDiffBehavior} as $z^\#\leq z^*$ since $q\geq0$ in $(0,z^\#)$ and $(pg')'\leq0$ in $(0,z^*)$. We proceed by setting $q=0$ and, conveniently, the numerator of~\eqref{eq:pDgDif} simplifies to
  \begin{equation}\label{eq:nu2}
    \nu_2(z)\big|_{q=0}\propto-2p'q'+pq''= 6 \b\ve  z^3  +6 \beta  \delta  z^2-6 \alpha\ve z-2\alpha\delta.
  \end{equation}
  We ignored the factor $4p$ as it does not affect the position of the roots for $q=0$. In principle we just need to compare the position of the roots for the polynomials $q$ and $ \nu_2(z)|_{q=0}$. However, they are both cubic polynomials and the roots' form is too complicated to determine their  relation. It follows from  Lemma~\ref{lem:pqProps}, Lemma~\ref{lem:pdgDiffBehavior} and the previous remark that the polynomials intersect at a single point in the interval $(z^\#,z^*)\subset(0,1)$ and, in addition, the position of the intersection point above or below the $x$ axis informs us about the relation of the two roots. It would not be very helpful to  set $q=\nu_2(z)|_{q=0}$ and solve for $z$, though. It again leads to a cubic equation and we face a similar problem as before. The trick we will use is the following transformation:
  \begin{equation}\label{eq:qtilde}
    q(z)\mapsto \tilde{q}(z)=-6\b q=-6\b\g-6\b\d z^2-6\b\ve z^3.
  \end{equation}
  The new function $\tilde{q}(z)$ has the same properties as $q(z)$  uncovered in Lemma~\ref{lem:gDifBehavior} (the minus sign reverses the negative sign of $\b$). By setting  $\tilde{q}=\nu_2(z)|_{q=0}$ we obtain another cubic equation
  \begin{equation}
    \tilde{\mu}(z)=2\mu(z)=2\big(6\b\ve  z^3 +6\beta  \delta  z^2-3\alpha\ve z-\a\d+3\b\g\big)=0.
  \end{equation}
  Its (single) root in $(0,1)$ reveals where $\tilde{q}$ and $\nu_2(z)|_{q=0}$ intersect but that also means that by comparing the roots' position of $\mu$ (or $\tilde{\mu}$) with $\nu_2(z)|_{q=0}$ in the interval $(0,1)$ we learn whether $\tilde{q}$ and $\nu_2(z)|_{q=0}$ intersected above or below the $x$ axis. So by setting $\mu(z)=\nu_2(z)|_{q=0}$ we crucially get a linear equation whose solution reads
  \begin{equation}\label{eq:linRoot}
    z_\ell=\frac{-\a\d-3\b\g}{3\a\ve }.
  \end{equation}
  By inserting it back to $\nu_2(z)|_{q=0}$ we get
  \begin{equation}\label{eq:nu2zell}
    \nu_2(z_\ell)|_{q=0}
    =\frac{2\b\left(\a^3(27\gamma \ve^2+2 \d^3)+9 \a^2\b\g\d^2-27\b^3\g^3\right)}{9\a^3\ve^2}.
  \end{equation}
  It remains to show $\nu_2(z_\ell)|_{q=0}\leq0$ in order to prove  $z^\#\leq z^*$. Since $\a,\b\leq0$ it suffices to show that $\nu_3(z)=\a^3 \left(27\gamma \ve^2+2 \d^3\right)+9 \a^2\b\g\d^2-27\b^3\g^3\leq0$. Using~\eqref{eq:pqCoeffs} and $\sum_ip_i=1$ we find $\nu_3(z)\df{1\over27}f_1f_2$ where
  \begin{subequations}\label{eq:f1f2}
    \begin{align}
      f_1 & = \big(p_0+2 p_0^3+3 p_0^2 (-1+p_1)-3 p_0 p_1^2+p_1 (-1+3 p_1-2 p_1^2)\big)^2, \label{eq:f1}\\
      f_2 & = 4 p_0^6+12 p_0^5 (-1+p_1)+p_0^4 (13-27 p_1+24 p_1^2)+p_0^3 (-6+20 p_1-42 p_1^2+28 p_1^3)\nn\\
          & \quad + p_0^2 (1-5 p_1+24 p_1^2-42 p_1^3+24 p_1^4)+p_1^2 (1-3 p_1+2 p_1^2)^2+p_0 p_1^2 (-5+20 p_1-27 p_1^2+12 p_1^3).
    \end{align}
  \end{subequations}
  Since $f_1\geq0$, we have to show $f_2\leq0$. We reduced the problem to a task analytically solvable by Mathematica. Indeed, we find $\max{[f_2]}=0$ subject to $\g\geq0$ and $0\leq z_\ell\leq1$. The first inequality is a necessary condition for the initial assumption $q\geq0$ in $z\in(0,z^\#)$ via Lemma~\ref{lem:pqProps}. Achieving the maximum implies $\nu_3=0$ which in turn implies $ \nu_2(z_\ell)|_{q=0}=0$ (from~\eqref{eq:nu2zell}) and so $\mu(z_\ell)=0$ (see above~\eqref{eq:linRoot}). This finally leads to $\nu_2(z_\ell)|_{q=0}=\tilde{q}(z_\ell)=0=q(z_\ell)$ and so $z_\ell^\#=z_\ell^*$ which concludes the proof.\\
  (ii) Assuming $\g\geq0$ as a necessary condition to the current case of interest $q\geq0$ $(g\leq0)$  for $z\in(0,z^\#)$ (see Lemma~\ref{lem:pqProps} and~\ref{lem:gDifBehavior}) we find $\nu_5(0)=2(4\a^3+27\g^2)(2\a\d-3\b\g)\leq0$ (the different sign in the bottom of the table on page~\pageref{tab:nu5} is due to $f_4$ being multiplied by $(z-1)^3$) and so $(ph')'(0)\geq0$. This is because $(ph')'\propto-\nu_5$. We also notice that $(ph')'\geq0$ for $g=0$. This follows from Eq.~\eqref{eq:DpDh} implying that in this case $(ph')'=-(pg')'$. But from item (i) of the current lemma we know that $(pg')'\leq0$ for $g\leq0$. Inevitably, the only positive root of $(ph')'$ occurs for $g\geq0$, that is, as long as $g\leq0$ we get $(ph')'\geq0$ as we wanted to show.
\end{proof}
\begin{rem}
  Note that $(ph')'(0)\geq0$ does not contradict $g'(0)=0$ we found in Lemma~\ref{lem:gDifBehavior}. This could be hastily concluded by looking at Eq.~\eqref{eq:hDif}. But it is true only if $g'=0$ and $\sqrt{1-g^2}\neq0$. In many cases it is found, however, that for $z=0$ one gets $g'=\sqrt{1-g^2}=0$ and $\lim_{z\to0^+}h'\neq0$.
\end{rem}
\begin{lem}\label{lem:posPart2}
    Let $h=\arccos{g}$ and $(ph')'\leq0$. Then
    \begin{equation}\label{eq:lowerBound}
      \cos{h\over 2}\,{-4p^2h'^2+3^2(2pp''-p'^2)\over4\times3^2(-p)^{3/2}}+\sin{h\over 2}\,{-p(ph')'\over3(-p)^{3/2}}\geq0
    \end{equation}
    whenever $g\geq0$ and for all $\ve\leq0$.
\end{lem}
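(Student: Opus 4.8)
The plan is to exploit that, under the hypotheses, the two summands in \eqref{eq:lowerBound} have opposite signs, recast the claim as a comparison of their magnitudes, pass to half-angle variables so that everything becomes rational in $p,q$ and their $z$-derivatives, square, and reduce to an algebraic inequality in $z$ that I expect to settle with Descartes' rule of signs, endpoint evaluations, and — for the final high-degree piece — a Mathematica-assisted optimization over the $p_k$, in the spirit of the proof of Proposition~\ref{prop:chainImpl}(i).

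First I would pin down the signs. Since $h=\arccos g\in[0,\pi]$ we have $\cos\tfrac h2\geq0$ and $\sin\tfrac h2\geq0$; the remark following Lemma~\ref{lem:pqProps} gives $p<0$, so $(-p)^{3/2}>0$; and writing $h'^{2}=g'^{2}/(1-g^{2})$ via \eqref{eq:hDif}, Lemma~\ref{lem:posPart} with $n=3$ says precisely that $-4p^{2}h'^{2}+3^{2}(2pp''-p'^{2})\geq0$. Hence the first summand of \eqref{eq:lowerBound} is $\geq0$. In the second summand, $(ph')'\leq0$ and $-p>0$ give $-p(ph')'\leq0$, so it is $\leq0$. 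Thus \eqref{eq:lowerBound} is equivalent to $\cos\tfrac h2$ times the nonnegative first factor being at least $\sin\tfrac h2$ times the absolute value of the nonpositive second factor.

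Next I would substitute the half-angle identities $\cos\tfrac h2=\sqrt{(1+g)/2}$ and $\sin\tfrac h2=\sqrt{(1-g)/2}$ (legitimate since $\tfrac h2\in[0,\tfrac\pi2]$), together with \eqref{eq:hDif}, \eqref{eq:gDif}, \eqref{eq:gDif2}, \eqref{eq:pDgDif}, \eqref{eq:DpDh} and \eqref{eq:core}, which make $h'$ and $(ph')'$ rational in $p,q$ and their derivatives. After clearing $(-p)^{3/2}$ the inequality to be proven becomes a comparison of two nonnegative quantities, so I square; the factor $1+g>0$ (using $g\geq0$) then cancels and one is left with an inequality of the schematic form $(1+g)\,\mathcal D\,\mathcal M^{2}\geq3p^{2}\,\mathcal N^{2}\,(1-g)$, where $\mathcal D=-(4p^{3}+27q^{2})\geq0$ is minus the discriminant numerator, nonnegative by \eqref{eq:discriminant}; $\mathcal M=\mathcal D\,(2pp''-p'^{2})-3(2pq'-3p'q)^{2}$ is nonnegative by Lemma~\ref{lem:posPart}; and $\mathcal N$ is a polynomial proportional to $(ph')'$, hence $\leq0$. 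All of $\mathcal D,\mathcal M,\mathcal N$ are genuine polynomials in $z$ whose coefficients are built from $p_0,p_1$. Rewriting this as $(\mathcal D\mathcal M^{2}-3p^{2}\mathcal N^{2})+g\,(\mathcal D\mathcal M^{2}+3p^{2}\mathcal N^{2})\geq0$ and using $g\geq0$, the inequality is automatic wherever $\mathcal D\mathcal M^{2}\geq3p^{2}\mathcal N^{2}$; on the complementary region I would square once more, eliminating $g$ via $g^{2}=-27q^{2}/(4p^{3})$ (reversible there, both sides being nonnegative), which leaves a polynomial inequality in $z$ with parameters $p_0,p_1$ subject to $0\leq p_k$, $\sum_k p_k=1$, $g\geq0$, $\ve\leq0$, and $\g\geq0$ — which by Lemma~\ref{lem:pdhDiffBehavior} is where the only nontrivial region $(ph')'<0$ sits.

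The main obstacle is that last polynomial inequality: its degree is far too high for a transparent argument by hand, so — as in the proof of Proposition~\ref{prop:chainImpl}(i) — I would use $\sum_k p_k=1$ to reduce it to a bounded optimization over $(p_0,p_1)$ whose minimum Mathematica certifies to be $0$, attained at $p_0=p_1=p_2=\tfrac13$. Descartes' rule of signs (Theorem~\ref{thm:Descartes}) applied to the relevant $z$-polynomials, together with their values at $z=0$ and $z=1$ and the single-positive-root facts of Lemmas~\ref{lem:pdgDiffBehavior} and \ref{lem:pdhDiffBehavior}, bounds the number of sign changes and localizes any troublesome root, so the verification splits into finitely many cases; and, as in Lemma~\ref{lem:pdhDiffBehavior}, monotonicity in $\ve$ of the coefficients where $\ve$ appears lets me pass from the extremal value $\ve=-2p_0p_1p_2$ (the cleanest case, where the pertinent discriminant vanishes and roots collide at $z=1$) to all $\ve\leq0$ without creating new roots in $(0,1)$.
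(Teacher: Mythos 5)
Your opening moves are sound and match the paper's: both summands' signs are pinned down exactly as you say (Lemma~\ref{lem:posPart} for the first, $(ph')'\leq0$ and $p<0$ for the second), and the half-angle substitution $\cos\tfrac h2=\sqrt{(1+g)/2}$, $\sin\tfrac h2=\sqrt{(1-g)/2}$ is also the paper's first step. The gap is everything after that. Your plan squares (twice), eliminates $g$, and then defers the resulting inequality — a polynomial of enormous degree in $z$ (the $(ph')'$ numerator $\nu_5$ alone has degree seven before any squaring) with parameters $p_0,p_1,\ve$ — to a conjectural ``Mathematica certifies the minimum is $0$ at $p_0=p_1=p_2=1/3$'' together with an unspecified Descartes/monotonicity-in-$\ve$ case analysis. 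You have not derived that polynomial, so you cannot know its minimum, where it is attained, or that the $\ve$-monotonicity trick of Lemma~\ref{lem:pdhDiffBehavior} transplants to it; the decisive inequality of the lemma is therefore asserted, not proved. There is also a smaller logical slip: you restrict to $\g\geq0$ on the grounds that Lemma~\ref{lem:pdhDiffBehavior} locates the nontrivial region $(ph')'<0$ there, but that lemma only counts roots of $(ph')'$ \emph{when} $\g\geq0$; it does not show $(ph')'\geq0$ for $\g<0$, and the paper's own argument has to treat $\g<0$ as a separate case.

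For contrast, the paper avoids any squaring: it multiplies \eqref{eq:lowerBound} by $p\sqrt{-p}\,(1-g^2)\leq0$, reducing the claim to $\kappa\leq0$ with $\kappa$ as in \eqref{eq:kappa1}, and then decomposes $\kappa=\tfrac1{p^2}\bigl(-\nu_4+\tfrac12 f_3\sqrt{3/(-p)}\bigr)$, see \eqref{eq:kappa2}--\eqref{eq:f3}. The term $-\nu_4$ is already controlled by Lemma~\ref{lem:posPart}, so the only new work is $f_3\leq0$, which follows from the hypothesis $(ph')'\leq0$ via the contrapositive of Proposition~\ref{prop:chainImpl}(ii) (giving $q\leq0$; the case $\g<0$ is supplied by Lemma~\ref{lem:gDifBehavior}), the fact that $f_3$ is increasing in $q$ (using $q''=6\ve z\leq0$), and $f_3|_{q=0}\leq0$, which ties back to the $q=0$ analysis in Proposition~\ref{prop:chainImpl}(i). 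Your proposal contains no analogue of this decomposition, which is precisely the idea that keeps the computation finite; without it, your plan's final step is an open problem rather than a proof.
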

\begin{rem}
  The expression resembles part of~Eq.~\eqref{eq:taunDif2}. However, notice  $n=2$ in the trigonometric functions and $n=3$ elsewhere.
\end{rem}
\begin{proof}
  Given $\tau(z,n)=\sqrt{-p}\cos{\arccos{g}\over n}$, a straightforward calculation reveals
  \begin{align}\label{eq:preLowerBound}
    &{\diff{\tau(z,n)}\over\dif{z^2}} \nn\\
    &=\cos{\arccos{g}\over n}\,{4p^2g'^2+n^2(-1+g^2)(2pp''-p'^2)\over4n^2(1-g^2)\sqrt{-p}p}+
      \sin{\arccos{g}\over n}\,{-pgg'^2+(-1+g^2)(pg')'\over n\sqrt{1-g^2}(1-g^2)\sqrt{-p}}.
  \end{align}
  We set $n=2$ in the trigonometric functions and $n=3$ elsewhere at which point~\eqref{eq:preLowerBound} becomes the studied expression~(Eq.~\eqref{eq:lowerBound}) by virtue of~\eqref{eq:hDif}. We multiply both summands by $p\sqrt{-p}(1-g^2)\leq0$ and use $\cos{x\over2}=\sqrt{1+\cos{x}\over2}$ ($-\pi\leq x\leq\pi$) and $\sin{x\over2}=\sqrt{1-\cos{x}\over2}$ ($0\leq x\leq2\pi$). We got the reverse inequality to prove
  \begin{equation}\label{eq:kappa1}
    \kappa={4\over9}p^2g'^2(1-2g)+(-1+g^2)\Big((1+g)(2pp''-p'^2)+{4\over3}p(pg')'\Big)\leq0.
  \end{equation}
  For this purpose, we use~Eq.~\eqref{eq:core} and deduce
    \begin{equation}\label{eq:kappa2}
      \kappa={1\over p^2}\bigg({-}\nu_4+f_3{1\over2}\sqrt{3\over-p}\bigg),
    \end{equation}
   where $\nu_4$ is given by~\eqref{eq:nu4} and
  \begin{equation}\label{eq:f3}
    f_3(q)=-27 q^2q''+q(-6 pp'^2+18 q'^2)-4p^2(pq''-2p'q').
  \end{equation}
  Since in Lemma~\ref{lem:posPart} we proved $\nu_4>0$ we only have to show $f_3\leq0$ for $\kappa\leq0$ to hold. The inequality $f_3\leq0$ does not hold in general, however. That is not a problem as long as we show that it holds for $(ph')'\leq0$. First we assume $\g\geq0$. By contrapositive of Proposition~\ref{prop:chainImpl} (ii) we know
  \begin{equation}\label{eq:contrapositive}
    (ph')'\leq0\Rightarrow g\geq0\Leftrightarrow q\leq0.
  \end{equation}
  Hence we need to show $q\leq0\Rightarrow f_3\leq0$. For $q=0$ the function $f_3$ becomes $-4p^2(pq''-2p'q')$ which is proportional to $\nu_2(z_\ell)|_{q=0}$ (see~\eqref{eq:nu2}). Its relation to $q$ was studied in Proposition~\ref{prop:chainImpl} and we found $\nu_2(z_\ell)|_{q=0}=q(z_\ell)=0$ for $z_\ell$ given by~\eqref{eq:linRoot}. Therefore, $f_3(0)$=0. Then, as demanded in~\eqref{eq:contrapositive}, for any $q<0$ we get $f_3(q)<0$ since~Eq.~\eqref{eq:f3} is an increasing function of $q$. This follows from  $q''=6\ve z\leq0$ (valid for $\ve\leq0$) and $-6pp'^2+18q'^2\geq0$ by looking at~Eqs.~\eqref{eq:pq} and~\eqref{eq:pqCoeffs}. For $\g<0$ we know from Lemma~\ref{lem:gDifBehavior} that $g>0$ $(q<0)$ always holds independently on the sign of $(ph')'$. Therefore $f_3<0$ and the proof goes as outlined above.
\end{proof}
\begin{prop}\label{prop:t0Dif2}
  The function $t_0$ is convex in $z\in(0,1)$.
\end{prop}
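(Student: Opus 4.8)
The plan is to transfer everything to the auxiliary function $\tau(z,3)=\sqrt{-p}\cos\frac{h}{3}$ with $h:=\arccos g$, exactly as in the monotonicity proof: by \eqref{eq:tkTrig} (with $k=0$) one has $t_0=\tfrac{2}{\sqrt3}\tau(z,3)$, so $t_0$ is convex on $(0,1)$ iff $\frac{d^2\tau(z,3)}{dz^2}\geq0$ there (for $\ve\leq0$, the standing assumption of this subsection). I would then read off from \eqref{eq:taunDif2} with $n=3$ that
$4(-p)^{3/2}\,\frac{d^2\tau(z,3)}{dz^2}=A\cos\frac{h}{3}+B\sin\frac{h}{3}$,
where $A:=\frac{-4p^2h'^2+9(2pp''-p'^2)}{9}$ and $B:=\frac{-4p(ph')'}{3}$. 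Here $A\geq0$ unconditionally by Lemma~\ref{lem:posPart} (this is precisely the Descartes-rule fact about $\nu_4$), and $\cos\frac h3,\sin\frac h3\geq0$ since $|g|\leq1$ (Lemma~\ref{lem:gDifBehavior}) forces $h\in[0,\pi]$; so the only ambiguous quantity is the sign of $B$, i.e. of $(ph')'$.

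I would then split into two cases. If $(ph')'\geq0$, then $B\geq0$ and $A\cos\frac h3+B\sin\frac h3$ is a sum of two nonnegative terms, so $\frac{d^2\tau(z,3)}{dz^2}\geq0$ and we are done. If $(ph')'<0$, then $B<0$; by the contrapositive of Proposition~\ref{prop:chainImpl}(ii) this forces $g\geq0$, hence $h\in[0,\pi/2]$, so $\frac h2\in[0,\pi/4]$, $\frac h3\in[0,\pi/6]$, and $\cos\frac h2>0$. Now Lemma~\ref{lem:posPart2} is applicable (its hypotheses $(ph')'\leq0$, $g\geq0$, $\ve\leq0$ all hold), and after clearing the common positive factor $\tfrac1{4(-p)^{3/2}}$ it reads $A\cos\frac h2+B\sin\frac h2\geq0$, i.e. $A+B\tan\frac h2\geq0$. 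Since $\tan$ is increasing on $[0,\pi/2)$ and $0\leq\frac h3\leq\frac h2$ while $B<0$, we get $A+B\tan\frac h3\geq A+B\tan\frac h2\geq0$; multiplying by $\cos\frac h3>0$ gives $A\cos\frac h3+B\sin\frac h3\geq0$ (the case $h=0$ being trivial from $A\geq0$). In both cases $\frac{d^2\tau(z,3)}{dz^2}\geq0$, hence $t_0''=\tfrac2{\sqrt3}\tau(z,3)''\geq0$, which is the claim.

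The real difficulty has already been absorbed into the preparatory results: the positivity of the ``$\cos$-coefficient'' $A$ (Lemma~\ref{lem:posPart}, via the structure of $\nu_4$) and, more critically, Lemma~\ref{lem:posPart2} together with the implication chain in Proposition~\ref{prop:chainImpl} --- the latter is what tames the hard case $(ph')'<0$ by simultaneously pinning down $g\geq0$ (so that $h/2\leq\pi/4$) and delivering the $n=2$ hybrid lower bound. Granting those, the only remaining step is the elementary angle comparison $\tan\frac h3\leq\tan\frac h2$ plus sign bookkeeping for $p,p'$ and the trigonometric factors, and no further hard estimate is needed.
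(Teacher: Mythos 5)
Your proposal is correct and follows essentially the same route as the paper: reduce $t_0''\geq0$ to ${\mathrm{d}^2\tau(z,3)/\mathrm{d}z^2}\geq0$, handle $(ph')'\geq0$ via Lemma~\ref{lem:posPart} and the nonnegativity of the trigonometric factors, and in the case $(ph')'<0$ invoke Lemma~\ref{lem:posPart2} (with $g\geq0$ from the contrapositive of Proposition~\ref{prop:chainImpl}(ii)) as the hybrid $n=2$ lower bound. Your rephrasing of the final comparison through $\tan\frac h3\leq\tan\frac h2$ is just a cosmetic variant of the paper's direct substitution $\cos\frac h3\to\cos\frac h2$, $\sin\frac h3\to\sin\frac h2$, and your explicit check of the hypothesis $g\geq0$ is a welcome clarification of a step the paper leaves implicit.
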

\begin{proof}
  The function $t_0$ is proportional to $\tau(z,3)=\sqrt{-p}\cos{\arccos{g}\over3}$ and so we will focus on proving ${\diff{\tau(z,3)}\over\dif{z^2}}\geq0$ given by~\eqref{eq:taunDif2} for $n=3$. In Lemma~\ref{lem:posPart} we presented a proof of nonnegativity of a fraction multiplying $\cos{\arccos{g}\over3}$. Both  $\cos{\arccos{g}\over3}\geq0$ and $\sin{\arccos{g}\over3}\geq0$ for $|g(z)|\leq1$ and so ${\diff{\tau(z,3)}\over\dif{z^2}}\geq0$ holds whenever $(ph')'\geq0$. For the rest of the proof assume $(ph')'<0$. We will construct a lower bound on  ${\diff{\tau(z,3)}\over\dif{z^2}}$ and show it to be nonnegative. To this end, we  summon the inequalities $\sin{h\over n}>\sin{h\over n+1}$ and $\cos{h\over n}<\cos{h\over n+1}$ (valid for $n\geq2$ and visible in Fig.~\ref{fig:acosaasin} for $n=2,3$) and substitute $\sin{h\over 3}$ and $\cos{h\over 3}$ by $\sin{h\over 2}$ and $\cos{h\over2}$, respectively. This is a lower bound on ${\diff{\tau(z,3)}\over\dif{z^2}}$ and the quantity was proved to be nonnegative in Lemma~\ref{lem:posPart2}. This concludes the proof.
\end{proof}
\begin{rem}
  The fact that $g\not<0$ for $(ph')'\leq0$ is crucial. First of all, it is not clear how to prove the validity of~${\diff{\tau(z,3)}\over\dif{z^2}}\geq0$ given by~\eqref{eq:taunDif2} for $(ph')'\leq0$. But even  the only manageable lower bound, Eq.~\eqref{eq:lowerBound}, is in some cases not good enough (i.e., non-negative) for $g<0$ and $(ph')'\leq0$.
\end{rem}
\begin{cor}\label{cor:t2concave}
  The function $t_2$ is concave in $z\in(0,1)$.
\end{cor}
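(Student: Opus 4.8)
The plan is to mimic exactly the reduction used for Proposition~\ref{prop:t2Dif}, now at the level of second derivatives. Recall from~\cite{waerden1966vol} that, viewing $t_k$ in~\eqref{eq:tkTrig} as functions of $p$ and $q$, we have the identity $t_2(p,q)=-t_0(p,-q)$. Since $p$ depends on $z$ only through $z^2$ (see~\eqref{eq:pFcn}) the substitution $q\mapsto -q$ does not touch $p$, $p'$, $p''$; it only flips the sign of $g$ (and hence of $g'$, $g''$, $h'$, $(pg')'$, $(ph')'$), as already noted in the proof of Proposition~\ref{prop:t2Dif}. Consequently, if we write $\tilde\tau(z,3)=\sqrt{-p}\cos\!\big(\tfrac13\arccos(-g)\big)$, then $t_2=-\tfrac{2}{\sqrt 3}\,\tilde\tau(z,3)$ (up to the same normalization relating $t_0$ and $\tau(z,3)$), so $t_2$ is concave in $z$ iff $\tfrac{\mathrm d^2\tilde\tau(z,3)}{\mathrm dz^2}\geq 0$ in $z\in(0,1)$.

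First I would observe that the entire chain of auxiliary results leading to Proposition~\ref{prop:t0Dif2}---namely Lemma~\ref{lem:posPart}, Lemma~\ref{lem:pdgDiffBehavior}, Lemma~\ref{lem:pdhDiffBehavior}, Proposition~\ref{prop:chainImpl} and Lemma~\ref{lem:posPart2}---was deliberately phrased so as to be sign-symmetric in $g$ in the relevant places. Lemma~\ref{lem:posPart} proves nonnegativity of the coefficient of $\cos\tfrac h n$ in~\eqref{eq:taunDif2} for \emph{arbitrary} $\ve$, hence it is insensitive to $q\mapsto -q$. For the second, $\sin\tfrac h n$-weighted term, the dichotomy in the proof of Proposition~\ref{prop:t0Dif2} was: if $(ph')'\geq 0$ the expression~\eqref{eq:taunDif2} is manifestly nonnegative, and if $(ph')'<0$ one passes to the lower bound of Lemma~\ref{lem:posPart2}, which is valid precisely because $(ph')'\leq 0$ forces $g\geq 0$ (the contrapositive of Proposition~\ref{prop:chainImpl}(ii)). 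Under $q\mapsto -q$ the roles of the sub-cases $g\geq 0$ and $g\leq 0$ merely swap, but both are covered, exactly as was already exploited in the one-derivative argument of Proposition~\ref{prop:t2Dif}. I would therefore state that replacing $g$ by $-g$ throughout the proof of Proposition~\ref{prop:t0Dif2} leaves every inequality intact, so $\tfrac{\mathrm d^2}{\mathrm dz^2}t_0(p,-q)\geq 0$, i.e.\ $t_0(p,-q)$ is convex in $z$, whence $t_2=-t_0(p,-q)$ is concave.

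The one point that genuinely needs checking---and the only place I expect any friction---is that the half-angle lower-bounding step of Lemma~\ref{lem:posPart2}, together with the inequalities $\sin\tfrac hn>\sin\tfrac h{n+1}$, $\cos\tfrac hn<\cos\tfrac h{n+1}$ of Fig.~\ref{fig:acosaasin}, still applies after the sign flip: these require $h=\arccos(\pm g)\in[0,\pi]$, which holds since $|g|\leq 1$ by Lemma~\ref{lem:gDifBehavior} regardless of the sign of $g$, and they require $|g|\leq 1$ for the trigonometric arguments $\tfrac h2,\tfrac h3$ to stay in the principal ranges where the half-angle formulas $\cos\tfrac x2=\sqrt{(1+\cos x)/2}$, $\sin\tfrac x2=\sqrt{(1-\cos x)/2}$ are valid---again guaranteed by $|g|\leq 1$. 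Since none of these structural hypotheses is sensitive to $q\mapsto -q$, the argument transfers verbatim and the corollary follows.
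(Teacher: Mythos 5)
Your proposal is correct and takes essentially the same route as the paper's own proof: it reduces $t_2$ to $-t_0(p,-q)$, tracks the sign flips of $g$, $g'$, $h'$ and $(ph')'$ under $q\mapsto-q$, checks that the trigonometric factors and the coefficient controlled by Lemma~\ref{lem:posPart} keep their signs, and then invokes the two-case analysis ($(ph')'\geq0$ versus $(ph')'<0$, the latter via the lower bound of Lemma~\ref{lem:posPart2}) already carried out in Proposition~\ref{prop:t0Dif2}.
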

\begin{proof}
  Similarly to Proposition~\ref{prop:t2Dif} we will make use of  $t_2(p,q)=-t_0(p,-q)$. The mapping $q\mapsto -q$ changes the sign of $g,g'$ and $h'$ (see Eqs.~\eqref{eq:core},~\eqref{eq:gDif} and~\eqref{eq:hDif}). Looking at~Eq.~\eqref{eq:taunDif2}, we notice that for $n\geq2$ the sign of the trigonometric functions remains unaffected (see Fig.~\ref{fig:acosaasin} for $n=2,3$). Similarly for the expression from Lemma~\ref{lem:posPart} coming from~\eqref{eq:taunDif2}. The sign change of $q$ also swaps~the sign of $(ph')'$ in~\eqref{eq:taunDif2}. This is because $(ph')'=p'h'+ph''$ together with
  $$
  h''=-{gg'^2+(1-g^2)g''\over(1-g^2)^{3/2}}
  $$
  taking into account that $g''$ changes the sign upon $q\mapsto -q$. But both cases ($(ph')'\geq0$ and $(ph')'<0$) have been separately investigated in Proposition~\ref{prop:t0Dif2}. So we conclude $t''_0(p,-q)\geq0$ and so $t''_2(p,q)\leq0$.
\end{proof}
\begin{lem}\label{lem:Hessian}
  Let $(u_i)_{i=0}^2$ be a probability distribution function and $h_3$ the ternary Shannon  entropy  defined in~\eqref{eq:ternaryEnt}.  Then $h_3$ is concave in $(0,1)\times(0,1)\subset\bbR^2$ and for a fixed $u_2\in(0,1)$ the function $h_3$ is monotone increasing (decreasing) for $u_1<(1-u_2)/2$ ($u_1>(1-u_2)/2$).
\end{lem}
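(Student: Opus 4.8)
The plan is to regard $h_3$ as a function of the two free coordinates $u_1,u_2$ on the open $2$-simplex $\Delta\df\{u_1>0,\ u_2>0,\ u_1+u_2<1\}$ (a convex subset of $(0,1)\times(0,1)$), write $u_0\df 1-u_1-u_2>0$, and prove both assertions by a direct first- and second-derivative computation. First I would differentiate~\eqref{eq:ternaryEnt}: using $\papa{u_0}{u_1}=-1$, the two $(\ln 2)^{-1}$ terms cancel and one gets
\[
\papa{h_3}{u_1}=\log\frac{u_0}{u_1}=\log\frac{1-u_1-u_2}{u_1},
\]
and symmetrically $\papa{h_3}{u_2}=\log(u_0/u_2)$. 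Hence, for a fixed $u_2$, the derivative in $u_1$ is positive exactly when $1-u_1-u_2>u_1$, i.e.\ $u_1<(1-u_2)/2$, and negative when $u_1>(1-u_2)/2$; this is the monotonicity claim.

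For concavity I would compute the Hessian from the first partials above, obtaining
\[
\mathrm{Hess}\,h_3=-\frac{1}{\ln 2}\begin{pmatrix}\dfrac1{u_0}+\dfrac1{u_1} & \dfrac1{u_0}\\[2mm] \dfrac1{u_0} & \dfrac1{u_0}+\dfrac1{u_2}\end{pmatrix}.
\]
On $\Delta$ all of $u_0,u_1,u_2$ are strictly positive, so the $(1,1)$ entry of $-\mathrm{Hess}\,h_3$ is positive, and its determinant is $\frac{1}{(\ln 2)^2}\big(\frac1{u_0u_1}+\frac1{u_0u_2}+\frac1{u_1u_2}\big)>0$ (the $1/u_0^2$ terms cancel). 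By Sylvester's criterion $-\mathrm{Hess}\,h_3$ is positive definite, so $h_3$ is (strictly) concave on $\Delta$. Keeping the explicit Hessian in hand is worthwhile because it is exactly what is needed when $h_3$ is later composed with the functions $t_k(z)$ whose convexity/concavity was established in Proposition~\ref{prop:t0Dif2} and Corollary~\ref{cor:t2concave}.

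As a sanity check — and an alternative that needs no computation — concavity also follows from writing $h_3$ as a sum of concave functions: $u_1\mapsto-u_1\log u_1$, $u_2\mapsto-u_2\log u_2$, and $(u_1,u_2)\mapsto-(1-u_1-u_2)\log(1-u_1-u_2)$, the last being the concave map $t\mapsto -t\log t$ precomposed with an affine map; a sum of concave functions is concave. (Equivalently, $h_3$ is just the Shannon entropy restricted to the probability simplex, whose concavity is standard.) I expect no genuine obstacle here: the only points requiring care are (i) fixing the correct domain — the literal formula~\eqref{eq:ternaryEnt} is real-valued only on the closed simplex, so ``concave on $(0,1)\times(0,1)$'' should be understood as concavity on the convex region $\Delta$ where all three arguments are nonnegative — and (ii) bookkeeping the base-of-logarithm constant when differentiating. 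Both the Hessian argument and the sum-of-concave-functions argument are otherwise routine.
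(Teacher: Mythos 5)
Your proposal is correct and follows essentially the same route as the paper: compute the Hessian, verify it is negative definite (your Sylvester check on $-\mathrm{Hess}\,h_3$ is equivalent to the paper's trace/determinant argument), and obtain monotonicity in $u_1$ from the first-derivative computation (the paper phrases this as the stationary point $u_1=(1-u_2)/2$ being a maximum by concavity, which is the same calculation). Your remarks on restricting to the simplex and on the $\ln 2$ constant are sound but immaterial to the argument.
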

\begin{proof}
  The Hessian matrix
  \begin{equation}\label{eq:Hessian}
    \mathsf{H}(h_3(\vec{u}))=\begin{bmatrix}
    -\frac{1}{1-u_1-u_2}-\frac{1}{u_1} & -\frac{1}{1-u_1-u_2} \\
    -\frac{1}{1-u_1-u_2} & -\frac{1}{1-u_1-u_2}-\frac{1}{u_2}   \\
     \end{bmatrix}
  \end{equation}
  is negative definite since $\Tr{}{\mathsf{H}}<0$ and $\det{\mathsf{H}}=1/(u_0u_1)+1/(u_0u_2)+1/(u_1u_2)>0$. The concavity of $h_3$ follows from the positivity of the characteristic polynomial throughout the interval $(u_0,u_1)\in(0,1)\times(0,1)$. We now fix the value of $u_2$ and the equation $\partial{h_3}/\partial{u_1}=0$ is satisfied for $u_1=(1-u_2)/2$. Thanks to the previously proved concavity, it is a local maximum for every $u_2\in(0,1)$ and thus $(1-u_2)/2$ defines a one-parameter family of local maxima for $h_3$.
\end{proof}
\begin{rem}
  Due to the symmetry between $u_2$ and $u_1$ in~\eqref{eq:ternaryEnt} we may fix $u_1$ and get a family of local maxima given by $-2u_2+1$. The global maximum of $h_3$ at $(u_2,u_1)=(1/3,1/3)$ lies in the intersection of $(1-u_2)/2$ and $-2u_2+1$. The situation is illustrated in Fig.~\ref{fig:h3}.
  \begin{figure}[h]
   \resizebox{10cm}{!}{\includegraphics{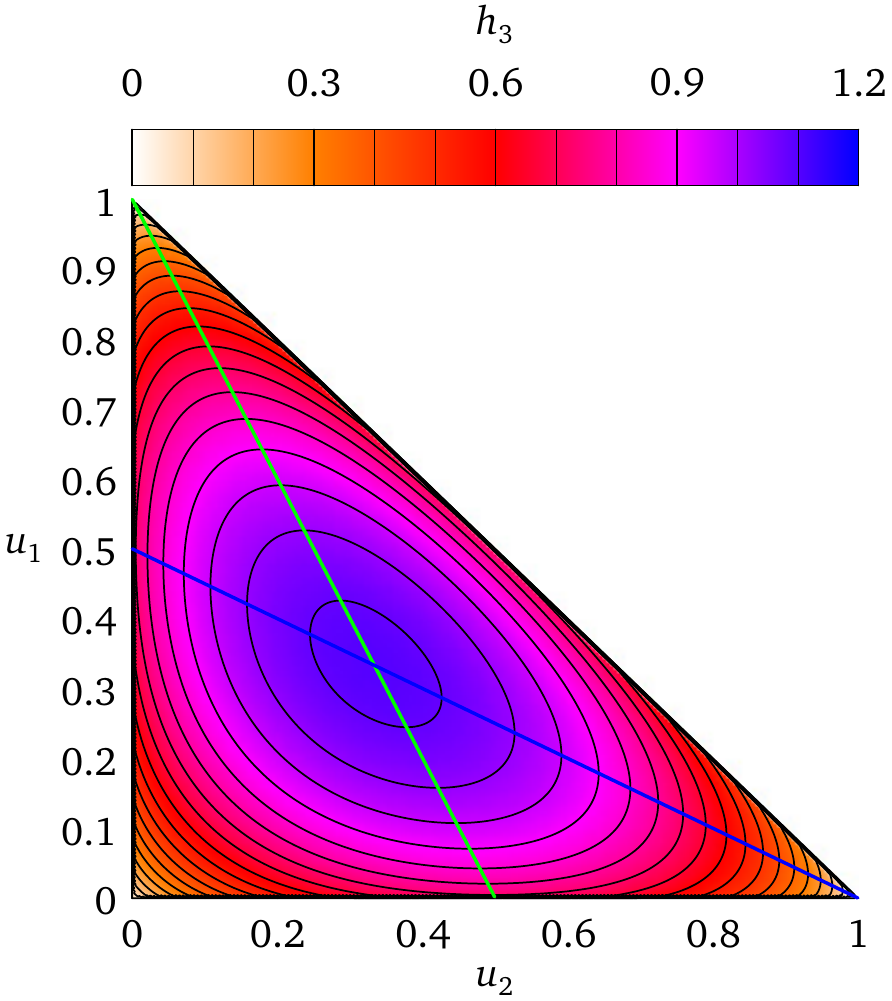}}
    \caption{The ternary Shannon  entropy Eq.~\eqref{eq:ternaryEnt} is shown. The blue line depicts $u_1=(1-u_2)/2$ while the green one is the plot of $u_1=-2u_2+1$.}
    \label{fig:h3}
\end{figure}
\end{rem}
\begin{thm}\label{thm:entrConcave}
  The von Neumann entropy $H(\varpi)$ of density matrix~\eqref{eq:Adm} is a concave function of the overlap $z$ as introduced in~\eqref{eq:cdSpecCase}, for all $p_k$ and for all $0\leq\vt\leq\pi/2$ corresponding to $\ve\leq0$ in~\eqref{eq:pqEpsi}.
\end{thm}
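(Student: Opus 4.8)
The plan is to obtain concavity of $H(z):=h_3(\vec{x}(z))$ from the second-order information already assembled for the extremal eigenvalue branches, fed into the Hessian structure of $h_3$ from Lemma~\ref{lem:Hessian}. Write $x_k=t_k+1/3$ with $t_k=t_k(z)$ the ordered roots of~\eqref{eq:tkTrig}, so that $x_0\geq x_1\geq x_2$ by~\eqref{eq:tkPropsOrdered}; differentiating~\eqref{eq:tkPropsSumZero} twice gives $t_0'+t_1'+t_2'=0$ and $t_0''+t_1''+t_2''=0$. Since $h_3$ is symmetric in its arguments, I would invoke Lemma~\ref{lem:Hessian} with the two free variables taken to be $u_1=x_0$ and $u_2=x_2$ (the dependent third component being $x_1$), so that the chain rule yields $H''(z)=\nabla h_3\cdot(t_0'',t_2'')+(t_0',t_2')\,\mathsf{H}(h_3)\,(t_0',t_2')^{\mathsf{T}}$ on the set of $z$ for which $(x_0,x_1,x_2)$ is a strictly positive probability vector.

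The quadratic-form term is handled at once: by Lemma~\ref{lem:Hessian} (applied in this relabeled form, legitimate because $h_3$ is symmetric) $\mathsf{H}(h_3)$ is negative definite there, so $(t_0',t_2')\,\mathsf{H}(h_3)\,(t_0',t_2')^{\mathsf{T}}\leq 0$.

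The gradient term is where the eigenvalue ordering enters. One has $\partial h_3/\partial u_1=\log(x_1/x_0)\leq 0$ (since $x_1\leq x_0$) and $\partial h_3/\partial u_2=\log(x_1/x_2)\geq 0$ (since $x_1\geq x_2$) — this is also precisely the content of the monotonicity part of Lemma~\ref{lem:Hessian} and its symmetric companion in the following remark, since $x_0\geq x_1$ places $u_1$ in the decreasing regime $u_1\geq(1-u_2)/2$ and $x_2\leq x_1$ places $u_2$ in the increasing regime $u_2\leq(1-u_1)/2$. Combining these signs with $t_0''\geq 0$ (Proposition~\ref{prop:t0Dif2}) and $t_2''\leq 0$ (Corollary~\ref{cor:t2concave}) gives $\nabla h_3\cdot(t_0'',t_2'')=t_0''\log(x_1/x_0)+t_2''\log(x_1/x_2)\leq 0$. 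Hence $H''(z)\leq 0$ wherever the $x_k$ are positive, and concavity of $H$ on the full admissible range of $z$ then follows by continuity.

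The one point demanding care — bookkeeping rather than a genuine obstacle — is that the middle branch $t_1$ carries no controlled convexity (by the remark following Corollary~\ref{cor:t0plust1incrs} it is not even monotone in general), so the naive choice of free variables $(x_1,x_2)$ is useless. The remedy is to eliminate $t_1''$ through $t_0''+t_1''+t_2''=0$, i.e.\ to parametrize $h_3$ by the two \emph{extremal} eigenvalues $x_0$ and $x_2$; the ordering $x_0\geq x_1\geq x_2$ then forces both surviving summands of $\nabla h_3\cdot(t_0'',t_2'')$ to have the right sign. The substantive work — convexity of $t_0$ and concavity of $t_2$, resting on Lemmas~\ref{lem:posPart}, \ref{lem:pdgDiffBehavior}, \ref{lem:pdhDiffBehavior}, \ref{lem:posPart2} and Proposition~\ref{prop:chainImpl} — has already been carried out, so this theorem amounts to the assembly step described above.
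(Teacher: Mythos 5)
Your proposal is correct and follows essentially the same route as the paper's proof: the chain-rule decomposition $H''=(\mathsf{x'})^\top\mathsf{H}(h_3)\,\mathsf{x'}+(\mathsf{h_3'})^\top\mathsf{x''}$ with the two free coordinates taken to be the extremal eigenvalues $x_0,x_2$ (precisely to bypass the uncontrolled middle branch $t_1$), negative definiteness of the Hessian from Lemma~\ref{lem:Hessian}, and the sign pattern $\papa{h_3}{x_0}\leq0$, $\papa{h_3}{x_2}\geq0$ forced by the ordering $x_0\geq x_1\geq x_2$, combined with $t_0''\geq0$ (Proposition~\ref{prop:t0Dif2}) and $t_2''\leq0$ (Corollary~\ref{cor:t2concave}). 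Your explicit gradient formulas $\log(x_1/x_0)$ and $\log(x_1/x_2)$ are just a more direct rendering of the paper's argument via the monotonicity regions of Lemma~\ref{lem:Hessian}.
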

\begin{proof}
  The von Neumann entropy $H(\varpi)$ is given by $ h_3(\vec{x}(z))$ in~\eqref{eq:ternaryEnt} where $x_k=t_k+1/3$ come from Eq.~\eqref{eq:tkTrig}. Taking into  account $\sum_{i=0}^2x_i=1$ we get $\vec{x}:\bbR\mapsto\bbR^2$ and the investigated expression ${\mathrm{d^2}h_3\over\mathrm{d}z^2}$ will be written using the following notation: We define
    \begin{subequations}\label{eq:columnDerivs}
      \begin{align}\label{eq:columnXp}
          \mathsf{x'} & =\begin{bmatrix}
          x_1' \\
          x_2' \\
        \end{bmatrix} \\\label{eq:columnXpp}
          \mathsf{x''} & =\begin{bmatrix}
          x_1'' \\
          x_2'' \\
        \end{bmatrix}
        \end{align}
    \end{subequations}
    and
    \begin{equation}\label{eq:colh3p}
      \mathsf{h_3'}=\begin{bmatrix}
          \papa{h_3}{x_1} \\
          \papa{h_3}{x_2} \\
        \end{bmatrix}.
    \end{equation}
    Using the chain rule, the second derivative can be succinctly expressed as
    \begin{equation}\label{eq:h3Concavity}
      {\mathrm{d^2}h_3\over\mathrm{d}z^2}= (\mathsf{x'})^\top\mathsf{H}(h_3(\vec{x}))\,\mathsf{x'}+(\mathsf{h_3'})^\top\mathsf{x''},
    \end{equation}
    where $\top$ denotes transposition and the dot (matrix) product is implied. Hessian~Eq.~\eqref{eq:Hessian} is negative definite according to Lemma~\ref{lem:Hessian} and so the first summand is negative for any $\mathsf{x'}$. In order for the second summand to be nonpositive as well, one possibility is when either the functions $x_1$ and $x_2$ are concave and the two components of $h_3$ nondecreasing  or $x_1,x_2$ convex and $h_3$ entry-wise nonincreasing. We proved $x_2''\leq0$ in Corollary~\ref{cor:t2concave} but said nothing about the concavity of $x_1$. As a matter of fact, it is incomparably more difficult to prove $x_1''\leq0$ in spite of the overwhelming numerical evidence. The same numerics suggests that there is a whole class of input probabilities $p_k$ for which $x_1''=0$. So no `simple' bounds like those leading to Proposition~\ref{prop:t0Dif2} exist. But there is a third possibility of how to make the second summand in~\eqref{eq:h3Concavity} negative and it is the combination of the two previous cases. We know that $x_0''\geq0$ from Proposition~\ref{prop:t0Dif2} and $x_2''\leq0$ from Corollary~\ref{cor:t2concave}. The second summand~\eqref{eq:h3Concavity} will be negative if we take $x_0$ instead of $x_1$ in~Eqs.~\eqref{eq:columnDerivs} and~\eqref{eq:colh3p} and show $\papa{h_3}{x_0}\leq0$ and $\papa{h_3}{x_2}\geq0$. Note that the Hessian remains negative definite:
    \begin{equation}\label{eq:HessianAgain}
        \mathsf{H}(h_3(\vec{x}))=\begin{bmatrix}
        -\frac{1}{1-x_0-x_2}-\frac{1}{x_0} & -\frac{1}{1-x_0-x_2} \\
        -\frac{1}{1-x_0-x_2} & -\frac{1}{1-x_0-x_2}-\frac{1}{x_2}   \\
         \end{bmatrix}.
    \end{equation}
    Hence, in spite of the components of $\mathsf{x'}$ to have  different signs (see Propositions~\ref{prop:t0Dif} and~\ref{prop:t2Dif}), the summand is negative. Also note that we are proving the properties of the same ternary entropy~\eqref{eq:ternaryEnt} since it is equivalent to
    \begin{equation}\label{eq:h3Equiv}
        h_3(\vec{x}(z))=-x_0\log{x_0}-x_2\log{x_2}-(1-x_0-x_2)\log{[1-x_0-x_2]}.
    \end{equation}
    Lemma~\ref{lem:Hessian} informs us that $\papa{h_3}{x_2}\geq0$ and $x_2''\leq0$ together with $\papa{h_3}{x_0}\leq0$ and $x_0''\geq0$ is satisfied in the domain's subset delimited by the blue line ($x_0\geq(1-x_2)/2$) and the green line ($x_0\leq-2x_2+1$) depicted in Fig.~\ref{fig:h3} if instead of $u_2,u_1$ we have $x_2,x_0$ (resulting in the same figure). But it turns out that this is precisely the range of $\vec{x}$ represented by $x_0,x_2$. To this end, consider the basic property of the cubic roots~\cite{waerden1966vol} $t_0\geq t_1\geq t_2$ that becomes $x_0\geq x_1\geq x_2\geq0$ for the eigenvalues of~$\varpi$. First, using $x_0\geq x_1$ we write
    \begin{align}
      x_0 & \geq {x_0+x_1\over2}\nn\\
          & = {x_0+x_1+x_2-x_2\over2}\nn\\
          & = {1-x_2\over2},
    \end{align}
    where in the second row we used the normalization condition $\sum_ix_i=1$. The last equality leads to one of the desired bounds. For the second bound we start with $x_1\geq x_2$ to write
    \begin{align}
      x_0 & \leq x_0 -x_2+x_1\nn \\
          & = -2x_2+x_0+x_1+x_2\nn\\
          & = -2x_2+1,
    \end{align}
    where the last line provides the other inequality we were looking for. Hence $(\mathsf{h_3'})^\top\mathsf{x''}\leq0$ resulting in ${\mathrm{d^2}h_3\over\mathrm{d}z^2}\leq0$.
\end{proof}

\bibliographystyle{unsrt}


\end{document}